	\DeclareFontShape{T1}{lmr}{m}{scit}{<->ssub*lmr/m/scsl}{}%
\newcommand{\Cross}{\mathbin{\tikz [x=1.4ex,y=1.4ex,line width=.2ex] \draw (0,0) -- (1,1) (0,1) -- (1,0);}}%
\title{On the Resilience of Fast Failover Routing\linebreak  Against Dynamic Link Failures} 
\titlerunning{On the Resilience of Fast Failover Routing Against Dynamic Link Failures} 
\author{Wenkai Dai}{Faculty of Computer Science and UniVie Doctoral School Computer Science DoCS, University of Vienna, Austria}{wenkai.dai@univie.ac.at}{https://orcid.org/0000-0002-2153-4250}{}
\author{Klaus-Tycho Foerster}{Department of Computer Science, Technical University of Dortmund, Germany}{klaus-tycho.foerster@tu-dortmund.de}{https://orcid.org/0000-0003-4635-4480}{}
\author{Stefan Schmid}{TU Berlin, Germany\\Faculty of Computer Science, University of Vienna, Austria}{stefan.schmid@tu-berlin.de}{https://orcid.org/0000-0002-7798-1711}{}
\authorrunning{W. Dai, K.-T. Foerster, and S. Schmid}
\keywords{Resilience, Local Failover, Routing, Dynamic Link Failures, Link Flapping} 
\begin{document}

\maketitle

\begin{abstract}
Modern communication networks feature local fast failover mechanisms in the  data plane, to swiftly respond to link failures with pre-installed rerouting rules. This paper explores resilient routing  meant to tolerate  $\leq k$ simultaneous link~failures, ensuring packet delivery, provided that the source and destination remain connected.  While past theoretical works studied  failover routing under static link failures, i.e., links which permanently and simultaneously fail,  real-world networks often face link flapping—dynamic down states caused by, e.g.,  numerous short-lived software-related faults. Thus, in this initial work, we  re-investigate the resilience of failover routing against link flapping, by categorizing link failures into static, semi-dynamic (removing the assumption that links fail simultaneously), and dynamic (removing the assumption that links fail permanently) types, shedding light on the capabilities and limitations  of failover routing under these scenarios.

We show that $k$-edge-connected graphs exhibit $(k-1)$\nobreakdash-resilient routing against dynamic failures for $k\leq 5$. We further show that this result extends to arbitrary $k$ if it is possible to rewrite $\log k$ bits in the packet header.
 Rewriting $3$ bits suffices to cope with $k$ semi-dynamic failures. However, on general graphs, tolerating $2$ dynamic failures becomes impossible without bit-rewriting. Even by rewriting $\log k$ bits, resilient routing cannot resolve $k$ dynamic~failures, demonstrating the limitation of local fast rerouting.
\end{abstract}

\section{Introduction and Related Work}
Communication networks are a critical infrastructure of our digital society.  To fulfill their rigorous reliability requirements, networks must possess the capability to promptly handle link failures, which are inevitable and likely to grow in frequency as the scale of networks is expanding~\cite{gill2011understanding}. Studies have demonstrated that even brief disruptions in connectivity can lead to significant degradation in service quality~\cite{dctcp, d2tcp, zats2012detail}. When faced with failures,  routing protocols  like OSPF~\cite{rfc2328} and IS-IS~\cite{isis}  recalculate routing tables, however, such reactions in the control plane are slow; in fact, too slow for many latency-sensitive applications~\cite{dctcp, d2tcp, zats2012detail,gill2011understanding,Francois05b}.

Modern networks hence incorporate local fast failover  mechanisms in the data plane to respond faster to failures. These mechanisms enable rapid rerouting of packets along preinstalled alternative paths, eliminating the necessity for global route recomputation and resulting in recovery times that can be orders of magnitude faster~\cite{liu2013ensuring,feigenbaum2012brief}. For instance, many networks employ \emph{IP Fast Reroute}~\cite{ipfrr-survey-1, ipfrr-survey-2, ipfrr-evolution} or \emph{MPLS Fast Reroute}~\cite{mplsfrr} to address failures in the data plane. In  Software-defined Networking (SDNs), fast rerouting (FRR) functionality is provided through \emph{OpenFlow fast-failover groups}~\cite{offrr}.

The quest for efficient failover mechanisms  within the data plane involves a significant algorithmic challenge. The primary dilemma revolves around the strategic establishment of static failover rules that can swiftly reroute flows to maintain reachability at the routing level,  in the face of failures, while these routing rules must solely depend on local failure information, without information about potential downstream failures. At the core of this challenge lies a fundamental question:
\begin{itemize}
\item Can we devise a failover routing system capable of tolerating any $k$ simultaneous link failures as long as the underlying topology remains connected?
\end{itemize}

Recent years have seen a significant surge of interest in the development of resilient failover mechanisms~\cite{chiesa2020fast}. Randomization techniques~\cite{icalp16,DBLP:journals/ton/ChiesaSABKNS21},  such as employing a random walk, prove effective in handling failures in a graph as long as it is connected, akin to graph exploration, but standard routers and switches lack efficient support for randomized forwarding~\cite{DBLP:conf/infocom/FoersterP0T19,DBLP:journals/tpds/LeungLY07}, and randomized forwarding can lead to a high number of packet reorderings~\cite{DBLP:conf/infocom/FoersterP0T19}, resulting in a substantial decrease in throughput~\cite{DBLP:conf/infocom/FoersterP0T19}. Similar challenges apply to packet duplication algorithms, like flooding, which also impose a significant additional load on the network.

As a result, standard routers and switches predominantly utilize deterministic routing functions, which base their forwarding decisions on active links connected to a switching device, the incoming port of a packet, and the destination. We will refer to this straightforward type of deterministic routing function as \emph{basic} routing. In pursuit of this basic forwarding approach, Feigenbaum et al.~\cite{feigenbaum2012brief} introduced  an approach based on directed acyclic graphs (DAGs), that  can always guide failover routing against one failure. However, resilience to arbitrarily many link failures,  hereinafter referred to as \emph{\textbf{perfect resilience}}, becomes impossible even on a graph of eight nodes~\cite{DBLP:conf/wdag/FoersterHP0T20}.
Meanwhile, Chiesa et al.~\cite{DBLP:journals/ton/ChiesaNMGMSS17} demonstrated the impossibility of achieving $2$-resilience in general when disregarding the source information. Consequently, many subsequent solutions offer heuristic guarantees, rely on packet header rewriting, or necessitate densely connected networks~\cite{icalp16,chiesa2020fast,foerster2020feasibility,algocloud2024,keep-fwd}.  In scenarios where packet header rewriting is possible, a router can interpret and rewrite the reserved bits in the header of an incoming packet to influence the forwarding decisions made by itself and also subsequent routers handling the packet.

Under dense connectivity assumptions, Chiesa et al.~\cite{DBLP:journals/ton/ChiesaNMGMSS17} show that the $(k-1)$\nobreakdash-resilient routing functions in  $k$-edge-connected graphs can be efficiently computed for $k\leq 5$, which is called  \emph{\textbf{ideal resilience}}  (in this case for $k\leq5$). But, the question of the  ideal resilience of $> 5$ still remains unresolved so far. Additionally, leveraging packet header modification, Chiesa et al.~\cite{DBLP:journals/ton/ChiesaNMGMSS17} have devised two failover algorithms capable of achieving $(k-1)$\nobreakdash-resilience in arbitrarily large $k$-edge-connected networks. This is accomplished by rewriting $\log k$ and $3$ bits in packet headers, respectively.  Note that, in the following, unless specified otherwise, we assume that failover routing functions do not rewrite bits  in packet headers.

Nevertheless, numerous real-world networks exhibit relatively sparse connectivity, with certain dense regions, such as enterprise networks, ISP networks or wide-area networks~\cite{DBLP:conf/nca/MontgolfierSV11, DBLP:conf/infocom/FoersterKP0T21}. Consequently, the development of fast failover routing algorithms for arbitrary topologies  becomes highly desirable in such scenarios.
%
Dai et al.~\cite{DBLP:conf/spaa/DaiF023} demonstrate that a $2$-resilient failover routing,  depending on  source and destination, can be efficiently computed in general graphs but achieving  $\ge3$-resilience is not possible.  For clarity, we will refer to the routing function  utilizing also source information in addition to destination information,  as \emph{source-matched}~routing.

Up until now, existing theoretical studies on failover routing have mostly assumed \emph{simultaneous} failures for \emph{fail-stop} links, where  links fail simultaneously, and links stay failed forever, once they fail.
However, in practice, links may not fail simultaneously, but more likely to fail over time, and the failed links may also be restored. 
For example, in wide area and enterprise networks, a phenomenon known as \emph{link flapping}, where communication links alternate between up and down states, is frequently observed under external routing protocols, e.g., in OSPF~\cite{10.1145/637201.637236,10.5555/850929.851939} and IS-IS~\cite{markopoulou2008characterization-malformed,turner2010california}.

More recently, Gill et al.~\cite{gill2011understanding} reveal that link failures in data centers can be categorized into long-lived and sporadic short-lived failures respectively, possibly implicated by connection errors, hardware issues, and software errors. They further comment that software errors are more inclined for short-lived failures~\cite{gill2011understanding}. It is important to note that link flapping transforms the underlying network topology into a dynamic graph, which could significantly impact these existing failover algorithms, as they may heavily depend on locally static structures for~orientation.

Hence, the focus of this paper is to explore and establish provable and deterministic worst-case resilience guarantees under the influence of link flapping. Specifically, we characterize link failures  by three types: \emph{static, semi-dynamic,} and \emph{dynamic}. For dynamic failures,  unstable links alternate between up and down arbitrarily, whereas in semi-dynamic failures,  unstable links are fail-stop  but  do not have to fail simultaneously and may fail \emph{during} the packet traversal. In static failures, unstable links are fail-stop  and fail simultaneously.
Based on this classification, we note that static failures constitute a subset of semi-dynamic failures, and semi-dynamic failures are encompassed within a subset of dynamic failures. Consequently, a resilient routing algorithm designed for a specific type of failure can be readily utilized to address failures within its subset. Furthermore, any impossibility result for a particular failure type also applies to failures in the supersets, but not vice versa.

In this paper, our consideration is limited to basic routing functions and their variations, including packet header rewriting and source-matching. As such, our work is closely related to the studies conducted by Chiesa et al.~\cite{DBLP:journals/ton/ChiesaNMGMSS17} and Dai et al.~\cite{DBLP:conf/spaa/DaiF023}. For a comprehensive overview of the directly related findings, please refer to Table~\ref{table:overview}. Going forward, we aim to re-evaluate whether the conclusions drawn by~\cite{DBLP:journals/ton/ChiesaNMGMSS17} and~\cite{DBLP:conf/spaa/DaiF023}, which were established primarily for static failures, can hold true for dynamic and semi-dynamic failures as well.

\begin{table*}[t]
	\definecolor{lgray}{rgb}{0.95,0.95,0.95}
	\caption{Summary of related previous results and our new contributions, where previous results are presented in white-gray rows, and our new findings related to the ideal-resilience and the perfect-resilience are cast into green and blue rows respectively.}\label{table:overview}
	\centering
		\resizebox{\textwidth}{!}{
			\renewcommand{\arraystretch}{1.1}
			\begin{tabular}{cccccc}
				\textbf{Failure} & \textbf{Rewriting}&\multicolumn{2}{c}{\textbf{Routing information}} & \textbf{Graph is} &  	\textbf{Resilience results for}\\
				\textbf{Type}& 	\textbf{Bits \#} &  \textbf{\textit{Per-source}}& \textbf{\textit{Per-destination}}&   \textbf{$k$-edge-connected} &\textbf{static \& deterministic routing}  \\
				\hline
				\hline

				static & no &  &$\Cross$  &$\Cross$ &$(k-1)$-resilience possible for $k\leq 5$, \textbf{open} for $k\ge 6$~\cite{DBLP:journals/ton/ChiesaNMGMSS17}\\

				\rowcolor{lgray}	static & no &  &$\Cross$  &$\Cross$ &$\left( \lfloor k/2 \rfloor\right) $-resilience possible for any $k$~\cite{DBLP:journals/ton/ChiesaNMGMSS17}\\
				static & no & $\Cross$  & $\Cross$   &$\Cross$  &  	$(k-1)$-resilience possible for any $k$~\cite{DBLP:conf/infocom/FoersterP0T19}\\
				\rowcolor{lgray}			 static & $3$ &   & $\Cross$  &$\Cross$ & 	$(k-1)$-resilience~\cite{DBLP:journals/ton/ChiesaNMGMSS17}\\
				static  & no &  &$\Cross$ &  & $1$-resilience possible~\cite{feigenbaum2012brief}, $\ge 2$-resilience imposs.~\cite{DBLP:journals/ton/ChiesaNMGMSS17}\\
				\rowcolor{lgray}  	static & no & $\Cross$  & $\Cross$  & & arbitrary $k$-resilience impossible~\cite{foerster2020feasibility}\\

				static & no & $\Cross$  & $\Cross$  & & $2$-resilience possible, $3$-resilience impossible~\cite{DBLP:conf/spaa/DaiF023}\\
				\hline
				\rowcolor{green!15}	dynamic  & no & & $\Cross$ &$\Cross$ &  $\left( k-1\right) $-resilience possible for $k\leq 5$~[Thm.~\ref{thm: 3-resilience}--\ref{thm: 5-resilience}]\\
				\rowcolor{green!7}	dynamic  & no & & $\Cross$ &$\Cross$&  $\left( \lfloor k/2 \rfloor\right) $-resilience possible for any $k$~[Thm.~\ref{thm: k/2-resilience}]\\
				\rowcolor{green!15}	dynamic  & $\log k$ &  &$\Cross$  &$\Cross$ & $\left( k-1\right) $-resilience  possible for any $k$~[Thm.~\ref{thm: log-k-bits-reilsiency}] \\
				\rowcolor{green!7}	semi-dynamic  & $3$ & & $\Cross$   &$\Cross$ & $\left( k-1\right) $-resilience possible for any $k$~[Thm.~\ref{thm: semi-dynamic-3bits}] \\
				\rowcolor{green!15}	dynamic  & $3$ & & $\Cross$   &$\Cross$ &  HDR-$3$-\textsc{Bits} in~{\cite[Algorithm~2]{DBLP:journals/ton/ChiesaNMGMSS17}} inapplicable [Thm.~\ref{thm: counter-example-3bits}]\\
				\rowcolor{green!7}	dynamic  &no &  & $\Cross$   &$\Cross$ &  $\left( k-1\right) $-resilience imposs. for $k\ge 2$ (link-circular)  [Thm.~\ref{thm: no-1-resilience}]\\

				\rowcolor{blue!15}	 dynamic &no & & $\Cross$   &  &  $1$-resilience possible, $\ge 2$-resilience imposs.\~[Thm.~\ref{thm: 1-resilience-always}, \ref{thm: 2-resilient-with-1bit}]\\
				\rowcolor{blue!7}	 dynamic &no & $\Cross$ & $\Cross$   &  &   $\ge 2$-resilience is impossible [Thm.~\ref{thm: 2-resilient-with-1bit}]\\
				\rowcolor{blue!15} dynamic	& $\log k $ & $\Cross$  & $\Cross$ &  &  	arbitrary $k$-resilience impossible [Thm.~\ref{thm: no-perfect-resilience}] \\
				\rowcolor{blue!7} semi-dynamic	& no & $\Cross$  & $\Cross$ &  &  $2$-resilient	 algo. (static failures) in~\cite{DBLP:conf/spaa/DaiF023} inapplicable  [Thm.~\ref{thm: counter-Dai-SPAA'23}] \\
				\hline

		\end{tabular} }
		\label{tab1}

\end{table*}
\subsection{Contributions}

This paper  initiates the study of the achievable resilience of fast rerouting mechanisms against more dynamic and non-simultaneous link failures. To this end, we chart a landscape of rerouting mechanisms under static, semi-dynamic, and dynamic link failures. We provide a summary of our results in Table~\ref{table:overview}.

We demonstrate that, in  $k$-edge-connected graphs with $k\leq 5$, achieving $(k-1)$-resilience under dynamic failures is possible without rewriting bits in packet header or employing source-matching in routing functions. This ideal $(k-1)$-resilience under dynamic failures extends to any $k$ if routing functions can rewrite $\log k$ bits in packet header. However, the HDR-$3$-\textsc{Bits} Algorithm by Chiesa et al.~\cite{DBLP:journals/ton/ChiesaNMGMSS17}, offering an arbitrary $(k-1)$\nobreakdash-resilience by rewriting $3$ bits for static failures, is no longer viable for dynamic failures but remains productive for semi-dynamic failures.

Contrarily, for general graphs, the $2$-resilient source-matched routing algorithm proposed by Dai et al.~\cite{DBLP:conf/spaa/DaiF023} for static failures,  does not extend to  semi-dynamic failures, and achieving $2$-resilience through matching source against dynamic failures is not possible in general.
Furthermore, achieving the source-matched perfect resilience (i.e., $k$-resilience for any $k$) on general graphs is impossible even with the ability to rewrite $\log k$ bits. Although the $1$-resilience for dynamic failures without using  bits is always  feasible, the $1$-resilience in a $2$-edge-connected graph becomes impossible if all nodes must employ \emph{link-circular routing} functions.
\subsection{Organization}

The remainder of this paper is organized as follows. We introduce our formal model in~\S\ref{sec:Preliminaries} and then present our algorithmic results for ideal-resilience and perfect-resilience against various dynamic failures in \S\ref{sec: ideal} and \S\ref{sec: perfect} respectively. We conclude in \S\ref{sec:conclusion} by discussing some open~questions.

\section{Preliminaries}
\label{sec:Preliminaries}
We represent a given network as \emph{an undirected (multigraph)} $G = (V,E)$, where each router in the network  corresponds to a node in $V$ and each  bi-directed link between two routers  $u$ and $v$ is modeled  by an \emph{undirected edge} $\{u,v\}\in E$.

For a set of edges $E' \subset E$, let $G \setminus E'$  denote a subgraph $\left(V, E\setminus E' \right) $ of $G$, and for  a set of  nodes $V' \subset V$,  let $G \setminus V'$  be a subgraph of $G$ obtained by  removing $V'$ and all incident edges on $V'$ in $G$. For a graph $ G'\subseteq G$  and a node $v\in V\left( G'\right)  $, let $N_{G'}\left( v\right) $, $E_{G'}\left( v\right) $, $\Delta_{G'}\left(v \right) $ denote the \emph{neighbors} (excluding $v$), \emph{incident edges}, and \emph{the degree} of  $v$  in $G'$ respectively, where $G'$ can be omitted when the context is clear. For an  undirected edge $\{u, v\}\in E$, let $\left( u,v\right) $ (resp., $\left(v, u\right) $) denote a \emph{directed edge (arc)} from $u$ to $v$ (resp., from $v$ to $u$). 

\subparagraph*{Static, Semi-Dynamic, and Dynamic Failures.}
Let $F\subseteq E$ denote a set of \emph{unstable links (failures)}   in $G$, where each $e\in F$ can fail in transferring packets in both directions when its link state is \emph{down (failed)}. An unstable link is called \emph{fail-stop} if its down state becomes permanent once it is down. The set of (unstable links) failures $F\subseteq E$ can be classified into three types: \emph{semi-dynamic} if all links in $F$ are fail-stop,  \emph{static}  if all links  in $F$ are fail-stop and must fail simultaneously,  otherwise \emph{dynamic}, if  $\exists e\in F$ can alternate   between \emph{up} and \emph{down} states arbitrarily and can fail over time. Based on this classification,  static failures are special cases of semi-dynamic failures, and semi-dynamic failures are a subset of dynamic failures.

Consequently, a resilient routing algorithm designed for a specific failure type can be directly applied to handle   failures in its subset, and any impossibility result for a particular failure type also holds true for  failures in its superset, but not vice versa.

\subparagraph*{Failover Routing.} For a basic (resp., source-matched) failover routing,  each  node $v\in V$ stores a \emph{predefined and static} \emph{forwarding  (interchangeably, routing) function} to \emph{deterministically} decide an \emph{outgoing link (out-port)} for each incoming packet solely relying on the local information  at $v$,~i.e.,
\begin{itemize}
	\item   the destination $t$ (resp., the source $s$ and the destination $t$) of the incoming packet,
	\item the incoming link (\emph{in-port}) of the packet at node $v$,
	\item and the set of non-failed (active) links incident on $v$.
\end{itemize}

Specifically,  given a graph $G$ and a destination $t\in V$ (resp.,  \emph{source-destination pair}  $(s,t)$), a  \emph{forwarding function} for a destination $t$ (resp.,  \emph{source-destination pair}  $(s,t)$) at a node $v\in V$ is defined as $\pi^{t}_{G,v} \colon N_G(v) \cup \{ \bot \} \times 2^{E_G(v)}  \mapsto E_G(v)$ (resp., $\pi^{\left(s,t \right) }_{G,v} \colon N_G(v) \cup \{ \bot \} \times 2^{E_G(v)}  \mapsto E_G(v)$), where $\bot$ represents sending a packet originated at $v$ (these functions can be extended appropriately  for multigraphs). Unless otherwise stated, we  will implicitly consider forwarding functions  without matching the source $s$.

When $G$ and the source-destination pair  $(s,t)$ are clear, $\pi^{(s,t)}_{G,v}\left( u, E_{G\setminus F}(v)\right)$ can be abbreviated as $\pi_{v}\left(u, E_{G\setminus F}(v) \right)  $, where $u\in N_{G\setminus F}(v)\cup \{ \bot \}$.
Let $F_v\subseteq F$ denote the failures that  incident on a node $v\in V$.
With a slight abuse of notation,  the forwarding function $\pi^{(s,t)}_{G,v}\left( u, E_{G\setminus F}(v)\right)$ can be also denoted by the form $\pi^{(s,t)}_{G,v}\left( u, F_v\right)$ since $E_{G\setminus F}(v)= E_G\left(v \right) \setminus F_v$.
Especially, when $v$ does not lose any link under $F$, i.e., $E_{G\setminus F}\left( v\right) =E_G\left(v\right) $, its routing function is simplified as $\pi_v\left(u\right)$.
The collection of routing functions:     $\Pi^{(s,t)} =\bigcup_{v\in V\setminus\{t\}} \left( \pi_v^{\left( s,t\right) }\right) $   is called a \emph{routing  scheme for $(s,t)$.} Similar abbreviations and terminology can be applied to $\Pi^{t} =\bigcup_{v\in V\setminus\{t\}} \left( \pi_v^{t }\right) $ without a source $s$.

\subparagraph*{Packet Header-Rewriting Routing.} To augment basic routing functions, the packet header-rewriting  routing protocol  can reserve various rewritable bits in specific positions within each packet's header, s.t., the header-rewriting routing function at each node $v\in V$ can interpret the information conveyed by these bits in the  header of an incoming packet as an additional parameter for its forwarding decision and can also modify these bits in the packet's header to influence the forwarding decisions made by subsequent routers handling the packet. 

The allotment of rewritable bits reserved for failover routing remains notably constrained, owing to concurrent demands for bit modification in critical functions such as TTL, checksums, and QoS, while the \emph{complexity of bit-rewriting} has a notable impact on processing overhead, latency, and the risk of packet loss, ultimately affecting the efficiency of transmission.

After introducing routing functions,  in Definition~\ref{def:1}, we  formally define the core problem  studied in this paper.
\begin{definition}[$k$-Resilient Failover Routing Problem]
	Given a graph $G=(V,E)$, the \emph{$k$\nobreakdash-resilient  failover routing problem} is to compute a \emph{$k$\nobreakdash-resilient} routing scheme for a destination $t$ (resp., a source-destination pair $(s,t)$) in $G$. A forwarding scheme for $t$ (resp., $(s,t)$) is called \emph{$k$\nobreakdash-resilient},  if this scheme can route a packet originated at a node $s\in V$ to its destination $t\in V$ as long as $s-t$ remains connected in~$G\setminus F$ for a set of  (static/semi-dynamic/dynamic) link failures $F\subseteq E$ of $|F|\leq k$,  where $G\setminus F$ denotes the subgraph when $F$ fails simultaneously.
	\label{def:1}
\end{definition}

We will focus on computing  a $k$\nobreakdash-resilient routing scheme for a given destination $t$ (resp., a given  pair of source-destination $\left(s,t \right)$), as our algorithm  for $t$ (resp.,  $\left( s,t\right) $) can be applied to any node $u\in V$ (resp., two nodes $u,v\in V$).

It is worth noting that since resilient routing often involves packets retracing their paths in reverse directions, dynamic (or semi-dynamic) failures can cause inconsistencies in the input (active links) for deterministic routing functions, thereby increasing the likelihood of forwarding loops.

\subparagraph*{Non-Trap Assumption.}
In Definition~\ref{def:1}, the subgraph $G\setminus F$ may consist of multiple connected components. We assume that dynamic (or semi-dynamic) failures in $F$ \emph{cannot} lead routing functions to direct a packet across different connected components within $G\setminus F$.

\subparagraph*{Ideal Resilience and Perfect Resilience.}
In $k$\nobreakdash-edge-connected graphs, $\left(k-1 \right) $-resilience is also called  \emph{ideal resilience}. In  general graphs, $k$-resilience for an arbitrarily large $k$ is also called \emph{perfect resilience}.

\subparagraph*{Dead-Ends, Loops, and Circular Routing.}
Next, we introduce some commonly-used concepts in failover routing.
We say that a node $v$ \emph{bounces back} a packet $p$, if $v$ sends the incoming packet $p$ back through its incoming port.
Given a graph $G'\subseteq G$, if a node $v\in V\left( G'\right) $ has only one neighbor, i.e., $\Delta_{G'} \left( v\right)=1$, then we call $v$ a \emph{dead-end},  any forwarding function at $v$ must bounce back packets received from its unique neighbor, otherwise packets must be \emph{stuck} after arriving at $v$.
A \emph{forwarding loop}  arises after a packet traverses the same direction of an undirected link for the second time.
Both directions of an undirected link can be traversed once without generating a loop. A forwarding loop  appearing  on static failures can also occur for dynamic (resp., semi-dynamic) failures.
A packet cannot be routed from $u$ to $v$ anymore, if its  routes contains a forwarding loop or if the packet is stuck at a node. 

A forwarding function at a node $v\in  V$ is called \emph{link-circular} if    $v$ routes packets based on an \emph{ordered circular sequence} $\left<u_1,\ldots, u_\ell\right>$ of the \emph{neighbors} $\{u_1,\ldots, u_\ell\}$ of $v$ as follows: a packet $p$ received from a node $u_i$ is forwarded to $u_{i+1}$; if the link $\{v, u_{i+1}\}$ is failed, then $p$ is forwarded to $u_{i+2}$ and so on, with $u_1$ following $u_\ell$~\cite{DBLP:journals/ton/ChiesaNMGMSS17}. Obviously, for link-circular forwarding functions, bouncing back is only allowed on dead-ends.

\subparagraph*{Further Notations and Graph Theory Concepts.}
In the following, we also state some related concepts in graph theory and notations that we will use.
A path $P$ from $u\in V$ to $v\in V$ in $G$ is called an $u-v$ path in $G$. Two paths are \emph{edge-disjoint} if they do not have any joint edge, but they may have common (joint) nodes.

In this paper, our focus frequently revolves around the \emph{edge-connectivity}, henceforth simply denoted by \emph{connectivity}. In  a graph $G=(V,E)$,  two nodes $u\in V$ and $v\in V$ are \emph{$k$\nobreakdash-connected}   (interchangeably, $u-v$ is \emph{$k$\nobreakdash-connected})  if there are $k$ \emph{edge-disjoint $u-v$ paths}  in $G$, and  $G$ is termed \emph{$k$\nobreakdash-connected} if  any two nodes in $V$ are  \emph{$k$\nobreakdash-connected}.
Given $V'\subseteq V$, we use $G\left[ V'\right] $ to denote an \emph{induced subgraph} of $G$  on nodes $V'$, where an edge $\{u,v\}\in E$ is also  contained in the graph $G\left[ V'\right] $  iff $u,v\in V'$.

\section{Ideal Resilience Against Dynamic Failures}\label{sec: ideal}
In this section, we focus on the ideal resilience, which seeks for a $\left( k-1\right) $\nobreakdash-resilient routing  in a $k$-connected graph against dynamic failures. We investigate this problem along two dimensions: without  or with rewriting bits in packet~headers.

\subsection{Background on Ideal Resilience against Static Failures}\label{subsec:recap}
In the following, we first recap the routing techniques proposed by Chiesa et al.~\cite{DBLP:journals/ton/ChiesaNMGMSS17} to achieve the ideal resilience against static failures and show how the established results for static failures can be effectively adapted for dynamic failures.
A detailed discussion is deferred to  Appendix~\ref{sec: techniques}.

In a $k$\nobreakdash-connected graph $G=\left( V,E\right) $, \emph{a set of $k$ arc-disjoint arborescences (directed spanning trees) $\mathcal{T}=\left\lbrace T_1, \ldots, T_{k}\right\rbrace $  of $G$, rooted at  $t\in V$}, can be computed efficiently~\cite{TARJAN197451}, both in theory (in $O\left( \left| E\right| k\log n + nk^4\log^2 n\right)$~\cite{10.5555/1347082.1347132}) and in practice~\cite{DBLP:conf/wea/GeorgiadisKMN22}.\footnote{We note that the runtime of the preprocessing is immaterial in our context, as the preprocessing is just responsible for the routing table computation and does not influence routing performance itself.} Chiesa et al.~\cite{DBLP:journals/ton/ChiesaNMGMSS17} leverage a fixed order $\left\langle \mathcal{T} \right\rangle $ of $k$ arc-disjoint arborescences  to define their \emph{circular-arborescence routing} (Definition~\ref{def: circular-arborescence}). In this routing scheme, termed the \emph{canonical mode}, packets are routed on each $T \in \langle \mathcal{T} \rangle$ by following the directed path of $T$ to $t \in V$. Upon encountering a failure  $\left(u,v \right)\in E\left(T \right)  $ on $T$, the canonical mode is reapplied on the next arborescence of $T$ in $\langle \mathcal{T} \rangle$, starting at $u\in V$. However, the circular-arborescence routing is only effective for $k \leq 4$.
For general $k$ and any $k-1$ static failures $F\subset E$, Chiesa et al.~\cite{DBLP:journals/ton/ChiesaNMGMSS17} demonstrate that, there exists a \emph{good arborescence} $T$ in $\mathcal{T}$, s.t.,   encountering any   failure $\left( u, v\right) \in E\left( T\right) $ results in a phenomenon called \emph{well-bouncing}, i.e.,  \emph{bouncing} packets from $T$ to an  arborescence $T'\in \mathcal{T}$ with $\left(v, u\right) \in E\left( T'\right)$, to resume  canonical mode on $T'$,  starting at $u\in V$, will lead to uninterrupted arrival at $t$ along $T'$.

Due to the existence of a good arborescence, Chiesa et al.~\cite{DBLP:journals/ton/ChiesaNMGMSS17} develop two packet-header-rewriting routing algorithms to incorporate the good-arborescence checking procedures into  the circular-arborescence routing over  $\langle \mathcal{T} \rangle$.

In Theorem~\ref{thm: good-arborescence} (Appendix~\ref{sec: techniques}), we  further show that a good arborescence in $\mathcal{T}$ still exists for dynamic failures $F$.
\subsection{Ideal Resilience without Rewriting Bits in Packet Header}
Chiesa et al.~\cite{DBLP:journals/ton/ChiesaNMGMSS17} reveals that the $\left(k-1 \right) $\nobreakdash-resilience without rewriting bits in packet headers can be achieved in a $k$\nobreakdash-connected graph of $k\leq 5$ for static failures, but the ideal-resilience problem for a general $k$ still remains open. By Theorems~\ref{thm: 3-resilience}--\ref{thm: 5-resilience} and Lemma~\ref{lem: extension_k}, we will show that these conclusions can be extended to dynamic failures, i.e., the ideal $\left(k-1 \right) $\nobreakdash-resilience without rewriting bits  holds for dynamic failures  when $k\leq 5$.

\begin{theorem}
	Given  a $k$-connected graph $G$, with $k\leq 3$,  any circular-arborescences routing  is   $(k-1)$\nobreakdash-resilient  against dynamic failures. \label{thm: 3-resilience}
\end{theorem}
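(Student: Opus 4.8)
The plan is to analyse the packet's trajectory under an arbitrary circular-arborescence routing directly, and to argue that for $k\le 3$ with at most $k-1$ dynamically failing edges the route is loop-free and must terminate at $t$. The guiding idea is that loop-freeness neutralises the time-varying link states: if I can show that no arc is ever traversed twice, then the walk uses each of the finitely many arcs at most once, so it is finite; and a finite walk that always has a next hop can only be absorbed at the destination. Thus the whole difficulty is concentrated in establishing that the route does not repeat an arc, and the structural facts I use for that (arc-disjointness of $T_1,\dots,T_k$ and the root-directed ancestor/descendant relations) are purely combinatorial and do not depend on \emph{when} an edge is up or down.

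First I would dispose of the easy cases. For $k=1$ there are no failures and the single arborescence delivers every packet along its tree path. For $k=2$ there is exactly one failing edge $e=\{u,v\}$: a packet on $T_1$ follows the tree unless it reaches $u$ (with $v=p_1(u)$) while $e$ is down, whereupon it bounces to $T_2$ at $u$. The key observation, made precise from arc-disjointness, is that the $T_2$-path from $u$ to $t$ cannot use $e$: the arc $(u,v)$ belongs to $T_1$ so $(u,v)\notin T_2$, while $(v,u)\in T_2$ would make $v$ a child, hence a descendant, of $u$ in $T_2$, which cannot lie on the path from $u$ towards the root. Hence after the bounce the packet reaches $t$ no matter how $e$ flips.

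For the main case $k=3$ (at most two failing edges $f,g$) I would decompose the route into canonical segments, each travelled on one of $T_1,T_2,T_3$ while moving monotonically towards the root, joined by bounces that advance one step in the fixed circular order (two steps when both failures are incident to the bounce node). The core claim is that the packet never bounces twice at the same (edge, arborescence) pair, equivalently never re-traverses an arc. I would prove this structurally: immediately after bouncing at $u$ off $T_i$, the packet visits only $u$ and ancestors of $u$ in $T_{i+1}$, so neither the bounce-causing edge nor its reverse arc can return it to the pre-bounce position; iterating this ancestor/descendant reasoning across the at most two failed edges and three trees bounds the number of bounces by a constant and forbids repeated arcs. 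Loop-freeness then yields a finite route, and since in a $3$-connected graph minus at most two edges every non-destination node the packet reaches still offers a usable out-arc (at worst a bounce-back along a fresh reverse arc at a momentary dead-end), the walk cannot stop before $t$ and therefore ends at $t$.

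I expect the delicate step to be exactly this no-re-traversal invariant, because it is where the dynamic setting genuinely differs from the static one. In the static proof each failed edge makes the \emph{same} decision every time it is met, which directly caps the number of bounces by the number of failures; under dynamic failures an edge may cause a bounce, come back up, let the packet pass, and later be met again while down, so the bounce count is no longer bounded a priori by $|F|$. The argument must therefore rule out such re-encounters purely from the tree structure and arc-disjointness, independent of any failure pattern, and must carefully accommodate bounces that skip an arborescence when both failures meet at one node. Establishing this invariant is precisely what collapses the dynamic behaviour to a finite loop-free walk and reduces the theorem to the straightforward termination argument above.
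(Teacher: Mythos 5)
Your handling of $k\le 2$ is sound and matches what the paper needs there. The difficulty is entirely in the $k=3$ case, and that is exactly where your argument has a gap: everything rests on the invariant that the route never re-traverses an arc (equivalently, that two canonical segments on the same arborescence never overlap), and you do not prove it --- you assert that ``iterating this ancestor/descendant reasoning across the at most two failed edges and three trees bounds the number of bounces by a constant and forbids repeated arcs,'' but that iteration \emph{is} the theorem. The one step you do justify (immediately after a bounce at $u$ off $T_i$, the next segment cannot re-encounter the bounce-causing edge, since its reverse arc, if present in $T_{i+1}$, makes the other endpoint a child of $u$) protects a single segment against a single edge. It says nothing about a later segment on the same arborescence, entered at a different node after the packet has gone once around the circular order, possibly re-descending into territory it already covered, nor about re-encounters with the \emph{other} failure whose arcs may sit in a different pair of trees. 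Whether such a re-entry can occur depends on how the (up to four) arcs of the two failed edges are distributed over $T_1,T_2,T_3$, and ruling it out requires precisely the case analysis you are trying to bypass. Note also that ``never bounces twice at the same (edge, arborescence) pair'' and ``never re-traverses an arc'' are not equivalent as you claim; a non-failure arc could in principle be re-traversed without any bounce repeating.

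The paper closes this gap differently: it classifies the tree components of the meta-graph $H_F$ (isolated node; a single edge; a path on three nodes) and, for each shape, traces the circular order $\left\langle T_1,T_2,T_3\right\rangle$ explicitly, showing that within a bounded number of switches the packet performs a \emph{well-bouncing} --- a switch onto an arborescence whose residual path to $t$ contains no failure arc --- after which delivery is immediate and no global loop-freeness invariant is needed. The existence and correct placement of that well-bouncing in the circular order comes from the good-arborescence argument (Theorem~\ref{thm: good-arborescence}), not from a generic no-re-traversal property. To complete your route you would have to enumerate the possible placements of the two failures' arcs over the three arborescences and verify segment-disjointness in each, at which point you would have reproduced the paper's case analysis in different clothing. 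As written, the core claim of your $k=3$ argument is a conjecture, not a proof.
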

\begin{proof}
	In the following, we only give a proof for the case of $k=3$, which directly implies the proof for $k=2$. When $k=1$, the ideal resilience assumes no failures on a $1$-connected graph.
	There must be a set of three arc-disjoint arborescences $\mathcal{T}=\left\lbrace T_1, T_2, T_3 \right\rbrace$ of $G$. Let $H_F$ denote a meta-graph when $F$ is static, and let $h\subseteq H_F$ be a tree-component contained in $H_F$.

	If $h$ is a single node, then the arborescence $T\in \mathcal{T}$, represented by  the node $V\left( h\right) $ has no any failure even if $F$ denotes a set of dynamic failures.
	
	If $h$ contains two nodes and one edge, denoted by $\left\lbrace \mu_i, \mu_j\right\rbrace $,  then there are two arborescences $T_i, T_j\in \left\lbrace T_1, T_2, T_3 \right\rbrace $ that share the failure $e\in F$. It further implies that $T_i$ and $T_j$ are both good arborescences, s.t., bouncing on the failure $e$ of $T_i$ to $T_j$ will reach destination $t$ without seeing any failure along $T_j$ and vice versa, since $T_i$ and $T_j$ share a unique failure $e\in F$. Moreover, due to $k=3$,  it implies either $j= (i+1)\mod 3$ or $i= (j+1) \mod 3$, for $i,j\in \{1,2,3\}$ and $i\neq j$.
	The ordering of an arbitrary  circular-arborescences routing on $\mathcal{T}$ can be either $\left\langle  T_1, T_2, T_3 \right\rangle $ or $\left\langle  T_1, T_3, T_2 \right\rangle $. Thus,  a circular-arborescences routing must contain a switching of either $T_i\to T_j$ or $T_j\to T_i$ after hitting the failure $e$, which equals to a well-bouncing from $T_i$ (resp., $T_j$) to $T_j$ (resp., $T_i$), leading to arriving at $t$ uninterruptedly, even if $F$ is dynamic.
	
	If $h$ has $ \left|V \left( h\right)\right| =3$ and $ \left|E \left( h\right)\right| =2$, let  $V \left( h\right)=\left\lbrace \mu_i, \mu_j,\mu_\ell \right\rbrace $ and   $E\left( h\right)=\left\lbrace \{\mu_i,\mu_j\}, \{\mu_i,\mu_\ell\}\right\rbrace $. Since nodes $\left\lbrace \mu_i, \mu_j,\mu_\ell \right\rbrace $ denote  arborescences $T_i, T_j, T_\ell\in \left\lbrace T_1, T_2, T_3 \right\rbrace $, it implies that there exists a good arborescence  $T_i$ and  $(i+1)\mod 3 \in \{\ell, j\}$ for $i= 1,2,3$.  W.l.o.g., we assume $j= (i+1) \mod 3$.
	Now, after seeing the first failures $e_1$ on $T_i$, if $e_1\in E\left(T_j \right) $, then switching to the next arborescence indicates a  well-bouncing from $T_i$ to $T_j$, otherwise routing along the next arborescence $T_j$ will either arrive at $t$ or hit the second failure $e_2\in E\left( T_j\right) $. After seeing $e_2$ on $T_j$, the next arborescence will traverse along $T_\ell$ to hit $e_1\in E\left(T_\ell \right) $ again, which further leads to switching to the next arborescence $T_i$ again. Now, a canonical routing along $T_i$ can only hit the failure $e_2\in E\left( T_j\right) $, otherwise $T_i$ has a cycle containing $e_1$. After hitting $e_2$ on $T_i$, the next arborescence selected by a circular-arborescence routing will lead to a  well-bouncing from $T_i$ to $T_j$.
	
	For dynamic failures $F$, a canonical routing along an arborescence $T\in \mathcal{T}$ will arrive at $t$ directly if it does not hit any down link (failure) on $T$, but once hitting a  failure, the above analysis can be  applied, s.t., it eventually finds a well-bouncing to arrive at $t$.
\end{proof}
After studying the easier cases of $k\leq 3$, we present the complicated case of $k=4$ in Theorem~\ref{thm: 4-resilience}.

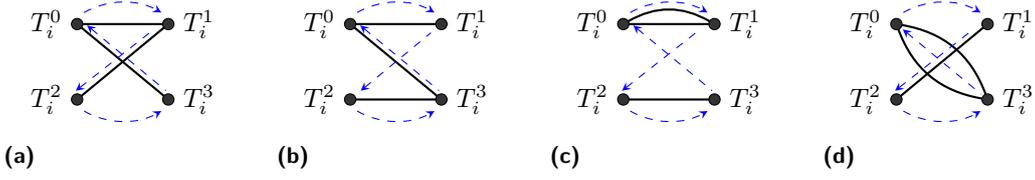
\begin{figure}[!t]
	
	\begin{subfigure}[t]{0.23\columnwidth}
		\centering
		\begin{tikzpicture}


\tikzset{mynode/.style={circle,draw,fill=black!80, outer sep=0pt,inner sep=1.5pt}};

\tikzset{source/.style={circle,draw,fill=green!80, outer sep=0pt,inner sep=2pt}};

\tikzset{destination/.style={circle,draw,fill=blue!50, outer sep=0pt,inner sep=2pt}};


\draw[] node[mynode,  label=left:$T^0_{i}$]  (a) at (0.8,0.5)  {};

\draw[] node[mynode,  label=left:$T^2_i$]  (c) at (0.8,-0.5)  {};

\draw[] node[mynode,label=right:$T^1_i$]  (b) at (2,0.5)  {};
\draw[] node[mynode, label=right:$T^3_i$]  (d) at (2,-0.5)  {};


\draw[  thick,] (a) to  (b);
\draw[  thick,] (b) to  (c);
\draw[  thick,] (a) to  (d);

\draw [>=stealth, ->, color=blue, shorten >=3,shorten <=3,  dashed] ($(a)+(0,0.1)$)  to[bend left] node {} ($(b)+(0,0.1)$) ;
\draw[>=stealth, <-, color=blue, shorten >=3,shorten <=3,    dashed] ($(c)+(-0.1,0.05)$)  to node [sloped,above,midway] {} ($(b)+(-0.1,0)$);

\draw [>=stealth, ->, color=blue, shorten >=3,shorten <=3, dashed] ($(c)+(0,-0.1)$)  to[bend right] node {} ($(d)+(0,-0.1)$) ;
\draw[>=stealth, <-, color=blue, shorten >=3,shorten <=3,    dashed] ($(a)+(0.05,0.05)$)  to node [sloped,above,midway] {} ($(d)+(0.05,0.05)$);

\end{tikzpicture}
		\caption{}
		\label{subfig:3-cases-1}
	\end{subfigure}
	\hfill
	\begin{subfigure}[t]{0.23\columnwidth}
		\centering
		\begin{tikzpicture}

	
	\tikzset{mynode/.style={circle,draw,fill=black!80, outer sep=0pt,inner sep=1.5pt}};
	
	\tikzset{source/.style={circle,draw,fill=green!80, outer sep=0pt,inner sep=2pt}};
	
	\tikzset{destination/.style={circle,draw,fill=blue!50, outer sep=0pt,inner sep=2pt}};
	

	\draw[] node[mynode,  label=left:$T^0_{i}$]  (a) at (0.8,0.5)  {};

	\draw[] node[mynode,  label=left:$T^2_i$]  (c) at (0.8,-0.5)  {};

	\draw[] node[mynode,label=right:$T^1_i$]  (b) at (2,0.5)  {};
	\draw[] node[mynode, label=right:$T^3_i$]  (d) at (2,-0.5)  {};
	

	\draw[  thick,] (a) to  (b);
	\draw[  thick,] (c) to  (d);
	\draw[  thick,] (a) to  (d);

	\draw [>=stealth, ->, color=blue, shorten >=3,shorten <=3,  dashed] ($(a)+(0,0.1)$)  to[bend left] node {} ($(b)+(0,0.1)$) ;
	\draw[>=stealth, <-, color=blue, shorten >=3,shorten <=3,   dashed] (c)  to node [sloped,above,midway] {} (b);
	
	\draw [>=stealth, ->, color=blue, shorten >=3,shorten <=3,  dashed] ($(c)+(0,-0.1)$)  to[bend right] node {} ($(d)+(0,-0.1)$) ;
	\draw[>=stealth, <-, color=blue, shorten >=3,shorten <=3,    dashed] ($(a)+(0.05,0.05)$)  to node [sloped,above,midway] {} ($(d)+(0.05,0.05)$);
	
%
%
%
%





\end{tikzpicture}
		\caption{}
		\label{subfig:3-cases-2}
		
	\end{subfigure}
	\hfill
	\begin{subfigure}[t]{0.23\columnwidth}
		\centering
		\begin{tikzpicture}

	
	\tikzset{mynode/.style={circle,draw,fill=black!80, outer sep=0pt,inner sep=1.5pt}};
	
	\tikzset{source/.style={circle,draw,fill=green!80, outer sep=0pt,inner sep=2pt}};
	
	\tikzset{destination/.style={circle,draw,fill=blue!50, outer sep=0pt,inner sep=2pt}};
	

	\draw[] node[mynode,  label=left:$T^0_{i}$]  (a) at (0.8,0.5)  {};

	\draw[] node[mynode,  label=left:$T^2_i$]  (c) at (0.8,-0.5)  {};

	\draw[] node[mynode,label=right:$T^1_i$]  (b) at (2,0.5)  {};
	\draw[] node[mynode, label=right:$T^3_i$]  (d) at (2,-0.5)  {};
	

	\draw[  thick,] (a) to  (b);
	\draw[  thick,] (a) to[bend left] node {}  (b);
	\draw[  thick,] (c) to  (d);

	\draw [>=stealth, ->, color=blue, shorten >=3,shorten <=3,  dashed] ($(a)+(0,0.1)$)  to[bend left] node {} ($(b)+(0,0.1)$) ;
	\draw[>=stealth, <-, color=blue, shorten >=3,shorten <=3,    dashed] ($(c)+(-0.1,0.05)$)  to node [sloped,above,midway] {} ($(b)+(-0.1,0)$);
	
	\draw [>=stealth, ->, color=blue, shorten >=3,shorten <=3, dashed] ($(c)+(0,-0.1)$)  to[bend right] node {} ($(d)+(0,-0.1)$) ;
	\draw[>=stealth, <-, color=blue, shorten >=3,shorten <=3,    dashed] ($(a)+(0.05,0.05)$)  to node [sloped,above,midway] {} ($(d)+(0.05,0.05)$);

\end{tikzpicture}
		\caption{}
		\label{subfig:3-cases-3}
	\end{subfigure}
	\hfill
	\begin{subfigure}[t]{0.23\columnwidth}
		\centering
		\begin{tikzpicture}

	
	\tikzset{mynode/.style={circle,draw,fill=black!80, outer sep=0pt,inner sep=1.5pt}};
	
	\tikzset{source/.style={circle,draw,fill=green!80, outer sep=0pt,inner sep=2pt}};
	
	\tikzset{destination/.style={circle,draw,fill=blue!50, outer sep=0pt,inner sep=2pt}};
	

	\draw[] node[mynode,  label=left:$T^0_{i}$]  (a) at (0.8,0.5)  {};

	\draw[] node[mynode,  label=left:$T^2_i$]  (c) at (0.8,-0.5)  {};

	\draw[] node[mynode,label=right:$T^1_i$]  (b) at (2,0.5)  {};
	\draw[] node[mynode, label=right:$T^3_i$]  (d) at (2,-0.5)  {};
	

	\draw[  thick,] (a) to[bend left=30]  (d);
	\draw[  thick,] (b) to  (c);
	\draw[  thick,] (a) to[bend right=30]  (d);

	\draw [>=stealth, ->, color=blue, shorten >=3,shorten <=3,  dashed] ($(a)+(0,0.1)$)  to[bend left] node {} ($(b)+(0,0.1)$) ;
	\draw[>=stealth, <-, color=blue, shorten >=3,shorten <=3,    dashed] ($(c)+(-0.1,0.05)$)  to node [sloped,above,midway] {} ($(b)+(-0.1,0)$);
	
	\draw [>=stealth, ->, color=blue, shorten >=3,shorten <=3, dashed] ($(c)+(0,-0.1)$)  to[bend right] node {} ($(d)+(0,-0.1)$) ;
	\draw[>=stealth, <-, color=blue, shorten >=1,shorten <=1,    dashed] (a)  to node [sloped,above,midway] {} (d);

\end{tikzpicture}
		\caption{}
		\label{subfig:3-cases-4}
	\end{subfigure}
	\caption{An illustration of main ideas to prove Theorem~\ref{thm: 4-resilience}.  When each arborescence $T\in \mathcal{T}$ with $\mathcal{T}=\left\lbrace T_1, T_2, T_3, T_4 \right\rbrace$ contains at least one failure in $F$ of $\left| F\right| \leq 3$, then its meta-graph $H_F=\left(V_F,  E_F\right) $ can be represented by one of these four subfigures in Fig.~\ref{fig:4-resilience-proof-idea}, where each node $T^j_i\in V_F$, $0\leq j\leq 3 $ and $1\leq i\leq 4$, denotes an arborescence $T_{\left( i+j\right) \mod 4}\in \mathcal{T}$, and each edge $\left\lbrace T^j_i, T^\ell_i\right\rbrace \in E_F$ (solid line) represents a failure in $F$, which is shared by two arborescences $T^j_i\in \mathcal{T}$ and $T^\ell_i\in \mathcal{T}$. The circular-arborescence routing  with the ordering $ \left\langle T_1, T_2, T_3, T_4  \right\rangle $, denoted by dashed (blue) arcs, always includes a bouncing from $T^j_i\in V_F$ to $T^\ell_i\in V_F$, where $T^\ell_i$ has a degree of one in $H_F$, indicating a potentiality of well-bouncing. After a circular-arborescence routing switching from $T^j_i\in V_F$ to $T^\ell_i\in V_F$, a canonical routing along $T^\ell_i$ might not arrive at the destination $t$ directly, even if the arc $\left(T^j_i, T^\ell_i \right) $ implies a well-bouncing, since the current failure confronted during a canonical routing along $T^j_i\in V_F$ may be different from the right failure that leads to the well-bouncing. However, we can prove that the routing eventually hits the right failure of the well-bouncing to approach $t$ by repeating the circular-arborescence routing of $ \left\langle T_1, T_2, T_3, T_4  \right\rangle $ at most two times.} \label{fig:4-resilience-proof-idea}
\end{figure}
\begin{theorem}\label{thm: 4-resilience}
	For a $k$-connected graph, with $k=4$, we can compute four arc-disjoint arborescences $\left\lbrace T_1, T_2, T_3, T_4 \right\rbrace$, s.t., $T_1$ and $T_3$ (resp., $T_2$ and $T_4$) are edge-disjoint by Lemma~\ref{lem: arc-disjoint-arborescences}. Then, the circular-arborescence routing (Definition~\ref{def: circular-arborescence} in Appendix~\ref{sec: techniques})  with the ordering $ \left\langle T_1, T_2, T_3, T_4  \right\rangle $ is $3$-resilient for dynamic failures.
\end{theorem}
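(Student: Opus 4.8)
The plan is to follow the static-failure argument of Chiesa et al.\ but to replace the single fixed failure pattern by a time-varying one, controlling the induced meta-graph so that the circular routing cannot loop before reaching $t$. First I would record the structural consequence of the edge-disjointness guaranteed by Lemma~\ref{lem: arc-disjoint-arborescences}: since $T_1,T_3$ (resp.\ $T_2,T_4$) share no undirected edge, no single failure can be shared between $T_1$ and $T_3$ or between $T_2$ and $T_4$. Hence, viewing $F$ as static and building the meta-graph $H_F$, every meta-edge joins two \emph{cyclically adjacent} arborescences in the order $\langle T_1,T_2,T_3,T_4\rangle$; i.e.\ $H_F$ is a sub-multigraph of the $4$-cycle on $\{T_1,T_2,T_3,T_4\}$. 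With $|F|\leq 3$, this leaves exactly the four meta-configurations depicted in Figure~\ref{fig:4-resilience-proof-idea} (two spanning paths and two matchings, the latter with a doubled edge to exhaust three failures).

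Next I would isolate the easy case in which some $T_i\in\mathcal{T}$ contains no link of $F$. Such a $T_i$ stays failure-free for the entire journey regardless of flapping, so once a packet enters it under canonical mode it reaches $t$ uninterrupted. Because each bounce advances the active arborescence by exactly one step in the fixed cyclic order, the routing enters $T_i$ within a single pass of $\langle T_1,T_2,T_3,T_4\rangle$ (or reaches $t$ earlier), which settles this case. This also reduces the remaining analysis to the configurations of Figure~\ref{fig:4-resilience-proof-idea}, where every arborescence carries a failure.

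The core observation I would then exploit is that, for a failure shared by cyclically adjacent $T_a$ and $T_{a+1}$, hitting it during a canonical pass on the \emph{earlier} arborescence $T_a$ and bouncing forward to $T_{a+1}$ is automatically a \emph{well-bounce}: the two arborescences use the shared undirected edge in opposite directions, so the reverse arc always lies in $T_{a+1}$. Combined with the existence of a good arborescence for dynamic failures (Theorem~\ref{thm: good-arborescence}), this means it suffices to show that the circular routing eventually hits the ``right'' shared failure from its earlier side. I would verify this by tracing the forward transitions (the dashed arcs of Figure~\ref{fig:4-resilience-proof-idea}) through each of the four configurations, using the arborescence tree structure to argue that a canonical pass on $T_a$ which hits the ``wrong'' failure (e.g.\ one shared with $T_{a-1}$ rather than $T_{a+1}$) forces the routing strictly forward without repeating any directed edge, so that within at most two repetitions of $\langle T_1,T_2,T_3,T_4\rangle$ the correct shared failure is encountered and the resulting well-bounce delivers the packet along a good arborescence to $t$.

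The main obstacle is the dynamic inconsistency flagged in the caption of Figure~\ref{fig:4-resilience-proof-idea}: because links flap, the particular down edge met during a canonical pass on a given arborescence may change between passes, so there is no single deterministic trajectory to follow. The crux of the proof is therefore a loop-freeness invariant that must hold against an adversarial flapping schedule: I would show that, regardless of which subset of $F$ is down at each step, the forward-only bouncing dictated by the $C_4$ meta-structure traverses each directed edge at most once before either reaching $t$ or performing a terminal well-bounce, which both bounds the journey to at most two cycles and rules out forwarding loops. Establishing that this invariant survives every interleaving of up/down transitions—rather than merely the all-down static snapshot—is the part I expect to require the most care.
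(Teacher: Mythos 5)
Your high-level plan coincides with the paper's: use the edge-disjointness from Lemma~\ref{lem: arc-disjoint-arborescences} to force the meta-graph to be a sub-multigraph of the $4$-cycle on $\left\langle T_1,T_2,T_3,T_4\right\rangle$, dispose of the case where some arborescence carries no failure, and then trace the circular routing through the remaining configurations of Fig.~\ref{fig:4-resilience-proof-idea} until a well-bouncing fires within two rounds. However, your ``core observation'' is false as stated, and it is the load-bearing step. When $T_a$ and $T_{a+1}$ share a failure $\{u,v\}$ with $(u,v)\in E\left(T_a\right)$, hitting it on $T_a$ and switching forward to $T_{a+1}$ is automatically a bounce onto the \emph{reversed} arborescence (since $(v,u)\in E\left(T_{a+1}\right)$), but it is \emph{not} automatically a \emph{well}-bounce: well-bouncing additionally requires that the canonical path from $u$ to $t$ on $T_{a+1}$ meets no further failure, i.e., that $(v,u)$ is the highest failure on $T_{a+1}$. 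Were your claim literally true, the very first bounce of the routing would deliver the packet and the theorem would be trivial. What the $C_4$ meta-structure actually buys is only that forward switches and reverse bounces coincide; determining \emph{which} forward switch is a well-bounce is the entire content of the proof. Relatedly, Theorem~\ref{thm: good-arborescence} guarantees a good arborescence but not that its well-bounces point \emph{forward} in the cyclic order; the paper must choose the right arborescence per configuration (the cyclic predecessor of a degree-one endpoint of the meta-path, whose successor carries a single failure).

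That choice and the subsequent verification are exactly what your proposal defers. The paper's proof of the hard configuration ($h$ a spanning path of the $4$-cycle) is an explicit forced trace $T_{i'}\to T_\ell\to T_f\to T_j\to T_{i'}\to T_\ell$, where each step's encountered failure is pinned down by elimination: edge-disjointness of opposite arborescences, the absence of the meta-edge $\{\mu_\ell,\mu_f\}$, and acyclicity of each arborescence rule out every candidate but one, until the packet hits the failure whose reverse arc is the unique failure on $T_\ell$ and the well-bounce delivers it. Your substitute --- an invariant that the routing ``traverses each directed edge at most once'' under adversarial flapping --- is not a tool but a restatement of the goal (by the paper's definition, a repeated directed edge \emph{is} a forwarding loop), and no argument is offered for it beyond the intention to check the four configurations. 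Until that case analysis is actually carried out (and checked against the flapping schedule, where a link being up when first reached only shortens the trace), the proof has a genuine gap at its central step.
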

\begin{proof}
	For  ease of understanding,  we first use  Fig.~\ref{fig:4-resilience-proof-idea}  to illustrate the main ideas of this proof and expand proof details in the following.
	
	Let $h$ be a tree in a meta-graph $H_F$ when $F$ is static. We note that $H_F$ is a bipartite graph $H_F=\left( V^1_F\cup V^2_F, E_F \right) $, where $V^1_F=\{\mu_1,\mu_3\}$ and  $V^2_F=\{\mu_2,\mu_4\}$.
	
	If $\left| V\left(h \right) \right| =1$, a canonical routing on $T\in \mathcal{T}$, which is denoted by the node of $h$, will not hit any failure before reaching $t$.
	
	If $\left| V\left(h \right) \right| \leq 3$, there must be a good arborescence $T_i$, denoted by a node in $h$, s.t., selecting the next $T_{\left( i+1\right) \mod 4}$ of $T_i$  by following the order  $ \left\langle T_1, T_2, T_3, T_4  \right\rangle $ can result in a well-bouncing from $T_i$. The details of the same arguments can be found in  the proof of Theorem~\ref{thm: 3-resilience}.
	
	If  $\left| V\left(h \right) \right| = 4$ and  $\left| E\left(h \right) \right| = 3$, then  $h$ must be a graph generated by removing an edge $\{\mu_{\ell}, \mu_f\}$ from the complete bipartite graph $H'_F=\left( V^1_F\cup V^2_F, E'_F \right)$, where $E'_F=\left\lbrace \{\mu_i, \mu_j\}:  i\in \{1,3\} \text{ and } j\in \{2,4\}\right\rbrace $. W.l.o.g., we can further assume $f=\left(  \ell+1\right)  \mod 4$ for $\ell \in \{1, 2, 3,4\}$, which implies that the node $\mu_{\ell}$ (resp., $\mu_f$) has the degree of one in $h$ and the arborescence $T_\ell$ (resp., $T_f$) only contains one dynamic failure. Let $ i' $ satisfy $\ell = \left(i'+1 \right) \mod 4 $ for $i' \in \{1,2, 3,4\}$. Now, it is clear that the arborescence $T_{i'}$ is a good arborescence and the bouncing from $T_{i'}$ to $T_\ell$ on the failure $e$ shared by $T_{i'}$ and $T_\ell$ is a well-bouncing. By our definition $\ell = \left(i'+1 \right) \mod 4 $ ,    $T_\ell$ is also the next arborescence of $T_{i'}$ for the circular-arborescences routing with the order $ \left\langle T_1, T_2, T_3, T_4  \right\rangle $. In other words, the circular-arborescences routing of $ \left\langle T_1, T_2, T_3, T_4  \right\rangle $ must include the bouncing from $T_{i'}$ to $T_\ell$. Starting a canonical routing along arborescence $T_{i'}$, we first hit the failure $e_1$. If $e_1$ is also shared by $T_\ell$, then bouncing on $e_1$ from $T_{i'}$ to $T_\ell$  is already a well-bouncing. Otherwise, we switch to $T_\ell$ to do a canonical routing along $T_\ell$ and hit the failure $e_2\in F$ on $T_\ell$. Clearly, here $e_2$ must be different from $e_1\in F$ since only one failure on $T_\ell$. After hitting $e_2$ on $T_\ell$, we switch to $T_f$, where $f= \left(\ell +1 \right) \mod 4$, and hit a failure $e_3\in F$ on $T_f$. Clearly, $e_3$ is not $e_1$ since $T_f$ and $T_{i'}$ are edge-disjoint. Due to  $\left\lbrace \mu_{\ell}, \mu_f\right\rbrace \notin E\left(h \right) $, it implies $T_f$ and $T_\ell$ cannot share any failure  in $F$, indicating  that $e_3\in E\left( T_f\right) $, $e_2\in E\left( T_\ell\right) $ and $e_3\neq e_2$. After seeing $e_3$ on $T_f$, we switch to the arborescence $T_j$, where $i= (f+1)\mod 4$ and $i'= (j+1)\mod 4$. If a canonical routing along $T_j$ cannot reach $t$, then a failure $e\in F$ on $T_j$ is confronted. Since $T_j$ and $T_\ell$ is edge-disjoint and $e_2\in E\left( T_\ell\right) $, then $e\neq e_2$. It further implies that $e=e_1$, otherwise $e=e_3$ leads to a loop containing $e_3$ on $T_{j}$. After seeing $e_1$ on $T_j$, the next arborescence switches to $T_{i'}$ and the  failure that can be hit on $T_{i'}$ must be $e_2$  since we start the canonical routing along $T_{i'}$  on the failure $e_1$ shared by $T_j$ and $T_{i'}$ is a directed tree. As the failure $e_2$ shared by $T_{i'}$ and $T_\ell$, the bouncing from  $T_{i'}$ to  $T_\ell$ on $e_2$ will be a well-bouncing to reach $t$ by a canonical routing along $T_\ell$.
\end{proof}
Chiesa et al.~\cite[Lemma~$7$]{DBLP:journals/ton/ChiesaNMGMSS17} show that  Lemma~\ref{lem: extension_k} holds for static failures. Next, we show that  Lemma~\ref{lem: extension_k}  is also true for dynamic failures.

\begin{lemma}
	\label{lem: extension_k}
	Given $k$ arc-disjoint arborescences $\mathcal{T}=\left\lbrace T_1, \ldots, T_k \right\rbrace$ of a graph $G$, if  a circular-arborescence routing on the first $k-1$ arborescences  $\mathcal{T}_{k-1}=\left\lbrace T_1, \ldots, T_{k-1} \right\rbrace$ is $\left(c-1 \right) $-resilient against dynamic failures with $c<k$, then there exists a $c$-resilient routing scheme in $G$.
\end{lemma}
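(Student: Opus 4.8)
The plan is to obtain the $c$-resilient scheme by \emph{prepending} the extra arborescence $T_k$ to the hypothesised $(c-1)$-resilient circular routing on $\mathcal{T}_{k-1}$. Concretely, a packet leaving its source first enters a \emph{primary phase} in which it is routed canonically along $T_k$ towards $t$; since the incoming port of a packet uniquely identifies which arc-disjoint arborescence it is currently travelling on, this phase is realisable by a basic (header-free) forwarding function. If the packet reaches $t$ along $T_k$, we are done. The first time the next hop prescribed by $T_k$ is a down link, the packet switches, once and for all, into a \emph{secondary phase} that runs the given $(c-1)$-resilient circular-arborescence routing on $\langle T_1,\ldots,T_{k-1}\rangle$ from the current node, treating it as a fresh source. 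Because the arcs of the $k$ arborescences are pairwise disjoint and the secondary circular order excludes $T_k$, every incoming port encountered in the secondary phase belongs to some $T_j$ with $j<k$; hence the switch is automatically one-way and no flapping of a link of $T_k$ can ever pull the packet back onto $T_k$.

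For correctness I would use a failure-charging reduction. If the packet never leaves the primary phase it is delivered along $T_k$, so assume it switches at a node $u$ because the arc $(u,v)\in E(T_k)$ is down; then the link $e=\{u,v\}$ lies in the fixed unstable set $F$ with $|F|\le c$. The aim is to show the secondary phase is obstructed by at most $c-1$ links of $F$, so that the hypothesis — which \emph{already} gives $(c-1)$-resilience against dynamic failures — delivers the packet from $u$. By arc-disjointness the forward arc $(u,v)$ occurs only in $T_k$, so the only way $e$ can still obstruct the secondary phase is through its reverse arc $(v,u)$ lying in some $T_j$, $j<k$. I would exclude this using the structural freedom in the arborescence decomposition (Lemma~\ref{lem: arc-disjoint-arborescences}): ordering the arborescences so that $T_k$ is edge-disjoint from $T_1\cup\cdots\cup T_{k-1}$ forces $(v,u)\notin\bigcup_{j<k}E(T_j)$, whence neither arc of $e$ is ever traversed in the secondary phase. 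Then the secondary phase behaves exactly as under the reduced set $F\setminus\{e\}$ of size at most $c-1$, and delivery follows from the hypothesis.

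The dynamic layer rests on two observations. First, a dynamic failure set is a \emph{fixed} set $F$ of unstable links whose up/down states merely fluctuate, so the charging argument is performed on $F$ and is insensitive to flapping: the link $e$ blamed on $T_k$ stays charged even if it later recovers, and the bound $|F\setminus\{e\}|\le c-1$ is unaffected. Second, flapping must not create a loop by bouncing the packet between the two phases; this is exactly why the switch is made irrevocable, and it is safe because, once in the secondary phase, the packet only ever traverses arcs of $T_1,\ldots,T_{k-1}$, so its fate is governed solely by the dynamic $(c-1)$-resilience of that routing, independently of the states of $T_k$'s links.

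The hardest step I anticipate is the shared-edge case of the charging argument: guaranteeing that the single link blamed on $T_k$ is genuinely invisible to the secondary phase rather than reappearing through its reverse arc on a lower arborescence, and that this remains true under arbitrary flapping. This is what forces the edge-disjoint placement of $T_k$ (or, as a fallback, an appeal to the good-arborescence structure of Theorem~\ref{thm: good-arborescence}); checking that such an ordering is always available for the arborescences arising in the $k\le5$ application, and that it coexists cleanly with the one-way-switch requirement demanded by dynamic failures, is the point that would need the most care.
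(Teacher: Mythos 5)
Your overall architecture is the same as the paper's: route first along the spare arborescence $T_k$, and on the first failure encountered there, switch irrevocably into the $(c-1)$-resilient circular routing on $\mathcal{T}_{k-1}$. The gap is in the step you yourself flag as the hardest one, namely discharging the blamed link $e=\{u,v\}$ from the secondary phase's failure budget. You resolve it by claiming that Lemma~\ref{lem: arc-disjoint-arborescences} lets you order the arborescences so that $T_k$ is edge-disjoint from $T_1\cup\cdots\cup T_{k-1}$, whence $(v,u)\notin\bigcup_{j<k}E(T_j)$. That lemma provides no such ordering: it only guarantees edge-disjointness \emph{within} the left half $T_1,\ldots,T_{\lfloor k/2\rfloor}$ and within the right half separately, while cross pairs may share edges (this is exactly how Theorem~\ref{thm: 4-resilience} uses it, pairing $T_1$ with $T_3$ and $T_2$ with $T_4$). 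Moreover no such ordering can exist in general: in a $k$-regular $k$-connected graph the $k$ arborescences occupy $k(n-1)$ of the $kn$ available arcs, so reserving both orientations of all $n-1$ edges of $T_k$ for $T_k$ alone forces $(k+1)(n-1)\le kn$, i.e.\ $n\le k+1$. So the reverse arc $(v,u)$ can perfectly well lie on some $T'\in\mathcal{T}_{k-1}$, and your reduction to $|F\setminus\{e\}|\le c-1$ does not go through as written.

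The paper confronts this shared-edge case directly rather than assuming it away. The switch is made by \emph{bouncing} onto the unique $T'\in\mathcal{T}_{k-1}$ containing $(v,u)$ (not onto an arbitrary starting arborescence), and then: if $T'$ is not the good arborescence of $\mathcal{T}_{k-1}$ under the remaining failures, the good arborescence is untouched by $e$ and the circular routing converges as before; if $T'$ is the good arborescence and $(v,u)$ is its highest failure, the bounce itself delivers the packet; and otherwise one argues that the secondary routing, started at $u$ on $T'$, can never traverse $(v,u)$ before reaching $t$, since doing so would return the packet to $u$ and create a loop, so $e$ is effectively invisible to the secondary phase after all. Your ``fallback'' appeal to Theorem~\ref{thm: good-arborescence} points in the right direction, but this case analysis is the actual content of the proof and is missing from your proposal; without it (or a correct substitute) the argument is incomplete.
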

\begin{proof}
	Chiesa et al.~\cite[Lemma~$7$]{DBLP:journals/ton/ChiesaNMGMSS17}  introduce a routing scheme $R$ as follows: a packet originated at a node $v\in V$ is first routed along the last arborescence $T_k$, and if a failure $\left(x,y \right) $ is hit along $T_k$ at $x$, then  it switches to a circular-arborescence routing based on arborescences $\left\lbrace T_1, \ldots, T_{k-1} \right\rbrace$ starting from the node $x$ along the arborescence $T'\in \left\lbrace T_1, \ldots, T_{k-1} \right\rbrace$ that contains $\left(y,x \right) $.
	
	Next, we show that $R$ is $c$-resilient against dynamic failurs if the circular-arborescence routing based on arborescences $\left\lbrace T_1, \ldots, T_{k-1} \right\rbrace$ (the starting arborescence is arbitrary) is $\left( c-1\right) $-resilient with $c<k$. If we meet the failure $\left( x,y\right) $ on $T_k$, then there is at most one arborescence $T'\in \mathcal{T}_{k-1}=\left\lbrace T_1, \ldots, T_{k-1} \right\rbrace$ that contains $\left(y,x\right) $. By   Theorem~\ref{thm: good-arborescence}, there must be at least one good arborescence in $\mathcal{T}_{k-1}$ for any $c-2$ dynamic failures $F'\subseteq E\left( \mathcal{T}_{k-1}\right) \setminus \{x,y\}$. If $T'$ is not the good arborescence, then the original good  arborescence  in $\mathcal{T}_{k-1}$ under $F'$  remain the same for failures $F=F'\cup \{x,y\}$, implying that circular-arborescence routing based on $\mathcal{T}_{k-1}$  can still converge to $t$ under $F$, where $\left| F\right| =c$ and  $\left| F'\right| =c-1$.

	 Now, we suppose that  $T'$ is the  good arborescence in $\mathcal{T}_{k-1}$ under $F'$. If $\left(y,x \right)\in E\left(T' \right)  $ is the highest failure, then we reach $t$ directly when bouncing from $T_k$ to $T'$. Next, the remaining scenario is to show that any circular-arborescence routing starting from $T'$ for a packet originated at the node $x$ will always lead the packet to $t$. Since the circular-arborescence  routing on $\mathcal{T}_{k-1}$ is $\left( c-1\right) $-resilient, then we only need to show that the packet originated at $x$ starting from $T'$ can reach $t$ before hitting $(y,x)$ under any $c-1$ dynamic failures $F'\subseteq \left( \mathcal{T}_{t-1} \right) \setminus \{x,y\}$.  By assuming $\{x,y\}$ is not failed,  then the circular-arborescence  routing on $\mathcal{T}_{k-1}$, under $F'$, starting from  $T'$ at the node $x$ cannot visit the arc $\left(y,x \right) $ along $T'$ before reaching $t$, otherwise a routing loop starting at $x$ and going back to $x$ again exists. Therefore, $R$ is $c $-resilient.
\end{proof}
After establishing Lemma~\ref{lem: extension_k}, by Theorem~\ref{thm: 4-resilience}, we can extend the ideal $(k-1)$\nobreakdash-resilience from  $k\leq 4$ to $k=5$.

\begin{theorem} 
	For any $5$-connected graph, there exists a $4$-resilient routing scheme against dynamic failures. \label{thm: 5-resilience}
\end{theorem}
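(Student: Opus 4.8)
The plan is to obtain the $4$-resilient scheme for a $5$-connected graph as a one-step bootstrap from the $3$-resilience already established for four arborescences in Theorem~\ref{thm: 4-resilience}, fed into the arborescence-deletion construction of Lemma~\ref{lem: extension_k} with $k=5$ and $c=4$. No new routing machinery is introduced; the work is entirely in lining up the hypotheses of these two results.

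First I would fix the root $t$ and invoke Lemma~\ref{lem: arc-disjoint-arborescences} to compute a set $\mathcal{T}=\{T_1,T_2,T_3,T_4,T_5\}$ of five arc-disjoint arborescences of $G$ rooted at $t$ (these exist since $G$ is $5$-connected), arranged so that the \emph{first four} carry the pairing required by Theorem~\ref{thm: 4-resilience}: $T_1$ and $T_3$ are edge-disjoint, and $T_2$ and $T_4$ are edge-disjoint, while $T_5$ is merely arc-disjoint from the other four. With this structure, Theorem~\ref{thm: 4-resilience} applies verbatim to $\{T_1,T_2,T_3,T_4\}$ and certifies that the circular-arborescence routing on these four arborescences, with the ordering $\langle T_1,T_2,T_3,T_4\rangle$, is $3$-resilient against dynamic failures.

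Then I would feed this into Lemma~\ref{lem: extension_k} with $k=5$ and $c=4$. The hypothesis of that lemma is exactly that a circular-arborescence routing on the first $k-1=4$ arborescences be $(c-1)=3$-resilient against dynamic failures, with $c=4<5=k$, which the previous step supplies. Hence the scheme of Lemma~\ref{lem: extension_k} --- route a packet first along $T_5$, and upon hitting a failure $(x,y)$ switch to the $3$-resilient circular routing on $\{T_1,\dots,T_4\}$ starting at $x$ along the arborescence containing $(y,x)$ --- is $4$-resilient in $G$. Because both Theorem~\ref{thm: 4-resilience} and Lemma~\ref{lem: extension_k} are already phrased for dynamic failures, the composition needs no additional argument about link flapping.

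The main obstacle is the very first step: guaranteeing that the two edge-disjoint pairs among $T_1,T_2,T_3,T_4$ can be realized \emph{simultaneously} with the extraction of a fifth arc-disjoint arborescence $T_5$, i.e., that Lemma~\ref{lem: arc-disjoint-arborescences} still yields the required pairing of the first four arborescences when a fifth is carved out of the $5$-connectivity. If the lemma does not directly deliver a decomposition of this mixed ``two paired plus one leftover'' shape, I would instead argue that $T_5$ can be chosen (or locally re-routed) without disturbing the edge-disjointness $T_1\!\perp\!T_3$ and $T_2\!\perp\!T_4$, since only arc-disjointness --- not edge-disjointness --- is demanded of $T_5$ relative to the others.
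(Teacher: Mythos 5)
Your proposal is correct and follows essentially the same route as the paper: compute five arc-disjoint arborescences, apply Theorem~\ref{thm: 4-resilience} to the first four, and then lift the resulting $3$-resilient circular-arborescence routing to a $4$-resilient scheme via Lemma~\ref{lem: extension_k} with $k=5$ and $c=4$. The obstacle you flag at the end is not an issue, since Lemma~\ref{lem: arc-disjoint-arborescences} in its $(2k+1)$-connected form with $k=2$ directly delivers the ``two edge-disjoint pairs plus one leftover arc-disjoint arborescence'' decomposition you need.
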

\begin{proof}
	For a  $5$-connected graph, we can compute five arc-disjoint arborescences  $\left\lbrace T_1, \ldots, T_{5} \right\rbrace$. By Theorem~\ref{thm: 4-resilience}, we can find a circular-arborescence routing on $\left\lbrace T_1, \ldots, T_{4} \right\rbrace$, which is $3$-resilient against dynamic failures. Then, by Lemma~\ref{lem: extension_k}, we can further find a routing scheme based on $\left\lbrace T_1, \ldots, T_{5} \right\rbrace$, which is $4$-resilient against dynamic failures.
\end{proof}
By Theorem~\ref{thm: k/2-resilience}, we show that ideal resilience can be attained in an arbitrary $k$-connected graph if the number of failures is at most half of the edge connectivity.
\begin{theorem}
	For any $k$-connected graph, there exists
	a $\left\lfloor\frac{k}{2} \right\rfloor$\nobreakdash-resilient routing scheme against dynamic failures. \label{thm: k/2-resilience}
\end{theorem}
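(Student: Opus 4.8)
The plan is to decouple the problem into a clean resilience bound for \emph{pure} circular-arborescence routing on a variable number of arborescences, and then to use Lemma~\ref{lem: extension_k} to squeeze out the final factor. Concretely, in the $k$-connected graph I first compute $k$ arc-disjoint arborescences $\mathcal{T}=\{T_1,\dots,T_k\}$ rooted at $t$ (as in \S\ref{subsec:recap}), and I prove the auxiliary claim $P(m)$: the circular-arborescence routing (Definition~\ref{def: circular-arborescence}) on any $m$ arc-disjoint arborescences, in any fixed circular order, is $\left(\lceil m/2\rceil-1\right)$-resilient against dynamic failures. Granting $P(m)$, I instantiate it with $m=k-1$, so that the routing on $\{T_1,\dots,T_{k-1}\}$ is $\left(\lceil (k-1)/2\rceil-1\right)$-resilient; since $\lceil (k-1)/2\rceil=\lfloor k/2\rfloor$ for every $k$, this is exactly $\left(\lfloor k/2\rfloor-1\right)$-resilience, and Lemma~\ref{lem: extension_k} (with $c=\lfloor k/2\rfloor<k$) then yields a $\lfloor k/2\rfloor$-resilient scheme on all of $\mathcal{T}$ (the case $k=1$ being vacuous). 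The parity shift from $k$ to $k-1$ is essential: pure circular routing on all $k$ arborescences only gives $\lceil k/2\rceil-1$, which for even $k$ falls one short of $\lfloor k/2\rfloor$, and this is precisely why the theorem promises a \emph{scheme} rather than plain circular routing.

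To establish $P(m)$ I use a switch-counting argument that is robust to dynamic link states. Suppose a packet never reaches $t$ under a dynamic failure set $F$ with $|F|\le \lceil m/2\rceil-1$. Every actual edge traversal is a \emph{canonical} step along the current arborescence towards its root, and since such a step strictly decreases the depth in that arborescence, no infinite run can remain on a single arborescence without switching (otherwise it ascends to $t$ in finitely many steps). Hence the non-terminating trajectory performs infinitely many arborescence switches; I pick any window of $m$ consecutive switches. Because the circular order advances the arborescence index by exactly one at each switch, across these $m$ switches the packet is driven out of all $m$ arborescences, each exactly once. The switch away from $T_i$ is triggered by the packet reaching, at the moment of traversal, the tail of a tree-arc of $T_i$ whose underlying link is \emph{down} at that moment. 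As the $m$ arborescences are \emph{arc-disjoint}, these $m$ triggering arcs are pairwise distinct as directed arcs, and a single undirected link contributes at most two directed arcs; the $m$ triggering arcs therefore correspond to at least $\lceil m/2\rceil$ distinct undirected links. Each such link is down at some time, hence unstable, so it lies in $F$, giving $|F|\ge \lceil m/2\rceil$ and contradicting $|F|\le\lceil m/2\rceil-1$.

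The step I expect to be the main obstacle is exactly the dynamic aspect: under dynamic failures the forwarding decision at a node is no longer a fixed function of the packet's history, since the same node may forward the same incoming packet differently at different times as its active-link set changes. I therefore cannot invoke determinism to ``close'' a forwarding loop in the usual way. The argument above side-steps this by reasoning at the level of \emph{arborescence indices} and \emph{arc-disjoint triggers} rather than trying to reproduce a static loop: it uses only (i) monotonicity of canonical routing inside a single arborescence, which holds irrespective of which links are up; (ii) the purely structural arc-disjointness of $\mathcal{T}$; and (iii) the fact that any link that ever triggers a switch is, by definition, an unstable link of $F$. The non-trap assumption guarantees the packet never leaves the connected component of $t$ in $G\setminus F$, so ``never reaching $t$'' is equivalent to an infinite trajectory with infinitely many switches. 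Combining $P(k-1)$ with Lemma~\ref{lem: extension_k} as described then delivers the claimed $\lfloor k/2\rfloor$-resilient routing scheme for every $k$.
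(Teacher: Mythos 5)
Your proposal is correct and follows essentially the same route as the paper: both reduce the theorem to the observation that, by arc-disjointness, $\lfloor k/2\rfloor-1$ undirected failures can contribute a failed arc to at most $2\left(\lfloor k/2\rfloor-1\right)\le k-2$ of the arborescences, so circular-arborescence routing on $\{T_1,\dots,T_{k-1}\}$ is $\left(\lfloor k/2\rfloor-1\right)$-resilient, after which Lemma~\ref{lem: extension_k} supplies the final unit of resilience (the paper states this as ``some arborescence is failure-free,'' which is exactly the contrapositive of your triggering-arc count). The only, and harmless, imprecision is your claim that each switch advances the arborescence index by exactly one---the ``next available arborescence'' rule may skip an arborescence whose out-arc at the current node is down, but any skipped arborescence thereby also owns a down arc of its own, so your count of $m$ pairwise distinct triggering arcs, hence at least $\lceil m/2\rceil$ failed links, goes through unchanged.
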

\begin{proof}
	For a  $k$-connected graph  $G$, there are $k$ arc-disjoint arborescences  $\mathcal{T}=\left\lbrace T_1, \ldots, T_{k} \right\rbrace$.  For a set $F$ of $\left\lfloor \frac{k}{2} \right\rfloor-1$ dynamic failures on $G$, there must be at one arborescence $T\in \mathcal{T}$, s.t., $T$ does not contain any failure in $F$. It implies that every circular-arborescence routing is $\left( \left\lfloor \frac{k}{2} \right\rfloor-1\right) $-resilient. Then, by Lemma~\ref{lem: extension_k}, there must be a $\left\lfloor \frac{k}{2} \right\rfloor$-resilient routing on $G$ against dynamic failures.
\end{proof}

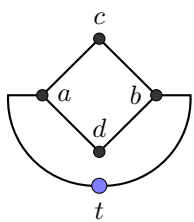
\begin{figure}[t]
	\centering
	\begin{tikzpicture}

	
	\tikzset{mynode/.style={circle,draw,fill=black!80, outer sep=0pt,inner sep=1.5pt}};
	
	\tikzset{source/.style={circle,draw,fill=green!80, outer sep=0pt,inner sep=2pt}};
	
	\tikzset{destination/.style={circle,draw,fill=blue!50, outer sep=0pt,inner sep=2pt}};
	
	
	\draw[] node[mynode,  label=right:$a$]  (a) at (-0.75,0)  {};
	
	\draw[] node[mynode,  label=left:$b$]  (b) at (0.75,0)  {};
	
	\draw[] node[mynode,  label=above:$c$]  (c) at (0,0.75)  {};
	
	\draw[] node[mynode,  label=above:$d$]  (d) at (0,-0.75)  {};
	
	\draw[] node[destination,  label=below:$t$]  (t) at (0,-1.2)  {};
	
		\draw[thick] (a) to[] (c);
		\draw[thick] (c) to[] (b);
		\draw[thick] (b) to[] (d);
		\draw[thick] (d) to[] (a);
		
			\draw[thick] (a) -- (-1.2,0) to[bend right=45] (t);
			\draw[thick] (b) -- (1.2,0) to[bend left=45] (t);

\end{tikzpicture}
	\vspace{-1mm}
	\caption{Counter-example for achieving $1$-resilience against dynamic failures in a $2$-connected graph $G$ when each node must employ a link-circular routing function.  When each node uses a link-circular routing, it  has only two possible orderings for its neighbors, i.e., clockwise and counter-clockwise for the shown drawing. For example, the clockwise and counter-clockwise orderings for $a$ are $\left<t,c,d\right>$ and  $\left<t,d,c\right>$, respectively. If $a$ and $b$ use the clockwise (resp., counter-clockwise) orderings, given a dynamic failure $F=\{c,b\}$ (resp., $F=\{b,d\}$) and a packet originated at $c$ (resp., $d$),  a forwarding loop: $\left(c,a,d,b,c\right) $ (resp., $\left(d,a,c,b,d \right) $) occurs, where $\{c,b\}$ (resp., $\{b,d\}$) is down only when the packet is originated for initial sending  but recovered afterwards. However, if $b$ and $a$ use  forwarding functions of clockwise and counter-clockwise orderings, respectively, and the node $c$ send its original packet to $v\in \{a,b\}$, the static failure $F=\left\lbrace v,t\right\rbrace $, implies a  forwarding loop: $\left(c,v,d, v' \right) $ with $v'=\{a,b\}\setminus v$. Analogous arguments can be given if  $a$ and $b$ reverse the orderings of their routing functions respectively.}
	\label{fig:fig1}
	\vspace{-2mm}
\end{figure}
Chiesa et al.~\cite[Theorem~$15$]{DBLP:journals/ton/ChiesaNMGMSS17} show that $2$-resilience cannot be achieved in a $3$-connected graph for static failures if each node employs a link-circular routing function. In Theorem~\ref{thm: no-1-resilience}, we can prove an even stronger conclusion for dynamic failures, where even the $1$\nobreakdash-resilience on a $2$\nobreakdash-connected graph cannot stand if  link-circular routing is applied on each node. We will prove  Theorem~\ref{thm: no-1-resilience} by giving a counter-example, as illustrated in  Fig.~\ref{fig:fig1}.

\begin{algorithm}[t]
	\caption{\textsc{HDR-Log}-K-\textsc{Bits}~{\cite[Algorithm~1]{DBLP:journals/ton/ChiesaNMGMSS17}}}\label{alg:log_k bits}
	\KwData{A set of $k$ arc-disjoint arborescences  $\mathcal{T}=\left\lbrace T_1, \ldots, T_k \right\rbrace $ and a destination $t$}
	Let $T_i\in \mathcal{T}$ be the first arborescence that is used to route a packet\;
	Set $current\_id:=i$\;
	\While{the packet is not delivered to $t$}{
		canonical routing along $T_i$ until reaching $t$ or hitting a failure $e\in F$\;
		\If{ $e\in F$ is shared by an arborescence $T_j$, $i\neq j$}{
			\eIf{$current\_id\neq i$}{$current\_id= \left( current\_id+ 1\right) \mod k$\; $i:=current\_id$\;}{$i:=j$}
		}
	}

\end{algorithm}
\begin{algorithm}[t]
	\KwData{A set of $k$ arc-disjoint arborescences  $\mathcal{T}=\left\lbrace T_1, \ldots, T_k \right\rbrace $ and a destination $t$}
	Set $i:=1$\;
	\While{the packet is not delivered to $t$}{
		canonical routing along $T_i$ until reaching $t$ or hitting a failure $e\in F$\;
		\If{ $e\in F$ is shared by an arborescence $T_j$, $i\neq j$}{
			Bounce and route along DFS traversal in $T_j$\;
			\If{the routing hits a failure $e'\in F$ on $T_j$}{Route back to the failure $e$ to determine $T_i$ by reversing DFS traversal   on $T_j$\;}
		}
		Set $i:= \left( i+1\right) \mod k$\;
	}

	\caption{HDR-$3$-\textsc{Bits}~{\cite[Algorithm~2]{DBLP:journals/ton/ChiesaNMGMSS17}}}\label{alg: three_bits}
\end{algorithm}
\begin{theorem}\label{thm: no-1-resilience}
	There exists a $2$-connected graph $G$ for which no $1$-resilient routing scheme  against dynamic failure can exist if each node  $v\in V\left( G\right) $  uses  a link-circular routing function. 
\end{theorem}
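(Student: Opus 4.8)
The plan is to prove the statement by exhibiting the explicit $2$-connected counter-example $G$ on the five nodes $\{a,b,c,d,t\}$ depicted in Fig.~\ref{fig:fig1}, consisting of the $4$-cycle $a\text{-}c\text{-}b\text{-}d\text{-}a$ together with the two edges $\{a,t\}$ and $\{b,t\}$. First I would verify that $G$ is indeed $2$-edge-connected by checking that it contains no bridge: the cycle edges lie on the $4$-cycle, while $\{a,t\}$ and $\{b,t\}$ both lie on the cycle $a\text{-}t\text{-}b\text{-}c\text{-}a$, so every edge is on a cycle. The key structural observation is that $c$, $d$, and $t$ have degree $2$, so their link-circular functions are forced (a packet is passed straight through the node, and bounced back only at a failed dead-end), whereas $a$ and $b$ have degree $3$, for which a link-circular function admits exactly the two cyclic orders of three neighbors, namely the clockwise order $\langle t,c,d\rangle$ and counter-clockwise order $\langle t,d,c\rangle$ at $a$, and symmetrically $\langle t,d,c\rangle$, $\langle t,c,d\rangle$ at $b$. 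Hence there are only $2\times 2 = 4$ candidate schemes, and I would cut these down using the two automorphisms of $G$: the reflection swapping $a\leftrightarrow b$ (which reverses both orientations) and the reflection swapping $c\leftrightarrow d$. These reduce the analysis to two essentially different cases, ``both nodes agree in orientation'' and ``the two nodes disagree''.

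In the agreeing case, say both $a$ and $b$ are clockwise, I would take the single dynamic failure $F=\{\{c,b\}\}$ and originate a packet at $c$ while $\{c,b\}$ is momentarily down, so that $c$ forwards to its only live neighbor $a$; I then let $\{c,b\}$ recover. Tracing the forced forwarding then yields $c\to a\to d\to b\to c$: at $a$ the packet arriving from $c$ is sent to the next neighbor $d$, $d$ passes it to $b$, and $b$ (arriving from $d$) forwards along the now-recovered link to $c$, which closes the loop by repeating the directed edge $c\to a$. Since $s=c$ and $t$ stay connected in $G\setminus F$ (via $c\text{-}a\text{-}t$), this violates $1$-resilience; the counter-clockwise subcase is identical under the $a\leftrightarrow b$ reflection with $F=\{\{b,d\}\}$ and source $d$.

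In the disagreeing case, say $a$ is counter-clockwise and $b$ is clockwise, the failure can even be taken static. Here I would let $c$ emit its packet and observe to which neighbor $v\in\{a,b\}$ it goes (both incident links are up, so this is determined by $c$'s forced function), and then choose the static failure $F=\{\{v,t\}\}$. Tracing from $v$, the arriving packet is forwarded past the now-failed link $\{v,t\}$ to $d$, then to $v'=\{a,b\}\setminus\{v\}$, then back to $c$, producing the loop $c\to v\to d\to v'\to c$; crucially $v'$ sends the packet onward to $c$ rather than to $t$, so $t$ is never reached even though $t$ remains connected to $c$ through $v'$. The remaining disagreeing subcase follows from the $c\leftrightarrow d$ symmetry. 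Assembling the four cases shows that no choice of link-circular functions is $1$-resilient against (already static, hence) dynamic failures.

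The main obstacle I anticipate is the bookkeeping of the link-circular skip rule combined with the timing of the dynamic failure: I must be careful that at each hop the packet is forwarded to the correct neighbor (in particular, that a failed outgoing link is skipped to the next neighbor in the cyclic order, and that the toggled edge $\{c,b\}$ or $\{b,d\}$ is treated as down exactly at origination and up thereafter), and that each traced walk genuinely repeats a directed edge so as to constitute a forwarding loop rather than merely a long path. A secondary point requiring care is the symmetry argument: I should state precisely that the two reflections are graph automorphisms, that they exchange clockwise and counter-clockwise orientations, and that resilience is preserved under automorphism, so that the four configurations indeed collapse to the two cases analysed above.
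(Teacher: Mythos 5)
Your proposal is correct and follows essentially the same route as the paper's proof: the identical counter-example graph, the same reduction to the two orientation choices at $a$ and $b$, the flapping failure $\{c,b\}$ (resp.\ $\{b,d\}$) with source $c$ (resp.\ $d$) for agreeing orientations, and the static failure $\{v,t\}$ for disagreeing orientations, yielding the same forwarding loops. The explicit verification of $2$-edge-connectivity and the automorphism-based case reduction are presentational refinements rather than a different argument.
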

\begin{proof}
	We construct a graph $G=\left(V,E \right) $ as shown in Fig.~\ref{fig:fig1}, where $V=\left\lbrace a,b,c,d,t\right\rbrace $ and \[E=\left\lbrace\left\lbrace a,c \right\rbrace, \left\lbrace a,d \right\rbrace, \left\lbrace a,t \right\rbrace, \left\lbrace b,c \right\rbrace,  \left\lbrace b,d \right\rbrace, \left\lbrace b,d \right\rbrace    \right\rbrace \,.\] Let $F\in E$ denote a dynamic failure (link) in $G$. For an arbitrary link-circular routing  with a destination $t\in V$, for each node $v\in V\setminus\left\lbrace t\right\rbrace $, if  $\Delta_{G\setminus F}\left( v\right)=2$, then  $\pi_v\left(x \right)=y $ and $\pi_v\left(y \right)=x$, where $N_{G\setminus F}\left( v\right)  =\left\lbrace x, y\right\rbrace \subset V$ denotes neighbors of $v$ in $G\setminus F$, and  if  $\Delta_{G\setminus F}\left( v\right)=1$, then $\pi_v\left(x \right)=x $, where $N_{G\setminus F}\left( v\right)  = x\in V.$   For the current drawing of $G$ as shown in Fig.~\ref{fig:fig1}, if  $\Delta_{G\setminus F}\left( v\right)=3$,  a link-circular routing on  $a$ (resp., $b$) must have either a clockwise ordering of $N_{G\setminus F}\left( a\right)$ (resp., $N_{G\setminus F}\left( b\right)$), i.e., $\left<t,c,d\right>$ (resp., $\left<c,t,d\right>$),  or the anticlockwise ordering of $N_{G\setminus F}\left( a\right)$ (resp., $N_{G\setminus F}\left( b\right)$), i.e., $\left<c,t,d\right>$ (resp., $\left<c,d,t\right>$). For each node $v\in \{b,d\}$, if $\Delta_{G\setminus F}\left( v\right) =1$, then the packet originated at $v$ must be sent through the unique link incident on $v$ in $G\setminus F$.
	
	Let us first consider the following two scenarios, where the routing functions on $a$ and $b$ have the same type of ordering of their neighboring nodes.
	\begin{itemize}
		\item Routing functions on $a$ and $b$ are both clockwise, $F=\{c,b\}$, a packet originated at $c$;
		\item Routing functions on $a$ and $b$ are both anti-clockwise, $F=\{b,d\}$, a packet originated at $d$.
	\end{itemize}
	If  $F$ is  down  only when $c$ (resp., $d$) sends a packet originated at itself, but  up in other time slots, we can easily verify that each of above two scenarios leads to a routing loop along the cycle $\{a,b,c,d\}$.
	
	Next, we deal with the cases, where $a$ and $b$ have the different type of ordering of their neighboring nodes. First, let $a$ take anti-clockwise but let $b$ take clockwise respectively.  If $c$ sends its original packet to $v\in \{a,b\}$ when $\Delta_{G\setminus F}\left( c\right)=2$, then $F=\{v,t\}$ acting as a static failure  can lead to a routing loop on $\{a,b,c,d\}$ directly. By symmetry, a similar result can be proved in much the same way when we reverse the type of ordering on $a$ and $b$ respectively.
	
	It is easy to see that our discussion has covered all possible link-circular routing functions on $\{a,b,c,d\}$. Thus, we can conclude no $1$-resilient routing in $G$, employing a link-circular routing function on each node,  against dynamic failure. 
\end{proof}
\subsection{Ideal Resilience by Packet Header Rewriting}
Given the  results of the ideal $\left(k-1 \right) $\nobreakdash-resilience of $k\leq 5$, the question arises, whether   $\left(k-1 \right) $\nobreakdash-resilience is feasible for any $k$. The previous work by Chiesa et al.~\cite{DBLP:journals/ton/ChiesaNMGMSS17}  only showed that the $\left(k-1 \right) $\nobreakdash-resilience for a general $k$ is possible by rewriting $\left\lceil \log k \right\rceil$ or three  bits in packet headers under static failures. We will show that the \textsc{HDR-Log}-K-\textsc{Bits}~\cite[Algorithm~1]{DBLP:journals/ton/ChiesaNMGMSS17} algorithm also works for dynamic failures, but HDR-$3$-\textsc{Bits}~\cite[Algorithm~2]{DBLP:journals/ton/ChiesaNMGMSS17} algorithm becomes ineffective for dynamic failures. The pseudocodes of \textsc{HDR-Log}-K-\textsc{Bits}~\cite[Algorithm~1]{DBLP:journals/ton/ChiesaNMGMSS17}  and HDR-$3$-\textsc{Bits}~\cite[Algorithm~2]{DBLP:journals/ton/ChiesaNMGMSS17}  are presented by Algorithm~\ref{alg:log_k bits} and  Algorithm~\ref{alg: three_bits} respectively.
\begin{theorem}
	For a $k$-connected graph, Algorithm~\ref{alg:log_k bits} (\textsc{HDR-Log}-K-\textsc{Bits}~\cite[Algorithm~1]{DBLP:journals/ton/ChiesaNMGMSS17}) is a $\left(k-1 \right) $\nobreakdash-resilient routing against dynamic failures by rewriting at most $\left\lceil \log k \right\rceil$ bits in the packet headers. \label{thm: log-k-bits-reilsiency}
\end{theorem}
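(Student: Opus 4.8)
The plan is to adapt Chiesa et al.'s correctness argument for Algorithm~\ref{alg:log_k bits} under static failures to the dynamic setting, the only genuinely new ingredient being the robustness of a well-bounce to link flapping. Fix dynamic failures $F\subseteq E$ with $|F|\le k-1$ and the $k$ arc-disjoint arborescences $\mathcal{T}=\{T_1,\dots,T_k\}$ rooted at $t$. Since $G$ is $k$-connected and $|F|\le k-1$, the source remains connected to $t$, and by Theorem~\ref{thm: good-arborescence} there is a good arborescence $T_g\in\mathcal{T}$: for \emph{every} failure $(u,v)\in E(T_g)$ the bounce to the arborescence $T'$ holding $(v,u)$ is a well-bounce, and, crucially for us, the $u$-to-$t$ path realising this well-bounce in $T'$ contains \emph{no} link of $F$. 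The heart of the argument is that such a path consists solely of stable links and is therefore traversable no matter how the links of $F$ flap.

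First I would track the header counter $current\_id$. The arborescence $i$ currently in use can be read off from the in-port (the incoming arc belongs to a unique $T_i$), so only $current\_id\in\{0,\dots,k-1\}$ must be carried, costing the claimed $\lceil\log k\rceil$ rewritable bits. Each invocation of canonical routing walks a fixed acyclic tree-path toward $t$ and hence terminates under dynamics as well: it either reaches $t$ or halts at the first arc that is \emph{currently} down, so no loop can form inside a single traversal even when links flap mid-path. Inspecting Algorithm~\ref{alg:log_k bits}, a failure met while $current\_id=i$ triggers exactly one bounce (to $T_j$, setting $i:=j$); a subsequent failure, now with $current\_id\neq i$, advances $current\_id:=(current\_id+1)\bmod k$ and restarts from the fresh base $T_{current\_id}$. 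Thus each attempt---a base traversal followed by at most one bounce traversal---either delivers the packet or increments $current\_id$ by one, so after at most $k$ attempts $current\_id$ has ranged over all of $\{0,\dots,k-1\}$, in particular over the index $g$.

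It then remains to verify that the attempt with $current\_id=g$ succeeds under dynamic failures. Canonical routing along $T_g$ from the packet's present node either reaches $t$, or stops at some currently-down arc $(u,v)\in E(T_g)$ with $\{u,v\}\in F$. In the latter case, because $current\_id=i=g$, the algorithm sets $i:=j$, switching to the arborescence $T'$ that holds $(v,u)$ and resuming from $u$. By the good-arborescence property this is a well-bounce, and since its $u$-to-$t$ path in $T'$ avoids $F$, every link along it is stable and thus up; the packet reaches $t$ uninterruptedly, immune to any flapping of the links in $F$. Delivery therefore occurs within the first $k$ attempts, which establishes $(k-1)$-resilience against dynamic failures while rewriting at most $\lceil\log k\rceil$ header bits.

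I expect the principal obstacle to be exactly this upgrade of ``uninterrupted arrival'' from static to dynamic failures: for static $F$ one only needs the well-bounce path to avoid the \emph{currently failed} links, whereas under flapping it must avoid the \emph{entire} unstable set $F$. The resolution I would invoke is that the good arborescence is fixed by the combinatorial pattern in which $F$ meets $\mathcal{T}$---identical data for static and dynamic $F$---so Theorem~\ref{thm: good-arborescence} supplies a good arborescence whose well-bounce path is already $F$-avoiding, hence stable. A secondary point to confirm is that the finiteness of the $current\_id$ search is untouched by dynamics; this holds because each canonical traversal is a single forward pass along a fixed tree-path, so dynamics can only change \emph{where} a pass stops, never create an unbounded walk.
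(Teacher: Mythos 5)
Your proposal is correct and follows essentially the same route as the paper's proof: invoke Theorem~\ref{thm: good-arborescence} to obtain a good arborescence under dynamic failures, observe that Algorithm~\ref{alg:log_k bits} cycles $current\_id$ through all $k$ arborescences (storing only that counter in $\lceil\log k\rceil$ bits, with the current arborescence recoverable from the in-port), and conclude that once the good arborescence is reached the well-bounce path avoids the entire unstable set $F$ and is therefore immune to flapping. Your write-up is in fact more explicit than the paper's about why the well-bounce path is stable and why the counter must eventually reach the good index, but the underlying argument is identical.
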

\begin{proof}
	By Theorem~\ref{thm: good-arborescence}, there must be a good arborescence against $k-1$ dynamic failures. In Algorithm~\ref{alg:log_k bits}, the while loop conducts a circular-arborescence routing on $\left\langle T_1, \ldots,T_k \right\rangle $ by maintaining $current\_id= i$. If the canonical routing on the current arborescence $T_i$ hits a failure $e\in F$, which is shared by an arborescence $T_j$, then bouncing from $T_i$ to $T_j$ occurs and $T_j$ becomes the next arborescence for canonical routing. If $T_i$ is a good arborescence, the packet reaches $t$ along $T_j$, otherwise it switches to the circular-arborescence routing again by setting the next arborescence as $T_{\left( \left( i+ 1\right) \mod k\right) }$. We need the variable $current\_id$ to keep the index of the current arborescence in $\left\langle T_1, \ldots,T_k \right\rangle $ when bouncing  on a failure occurs, and we need $\lceil \log k \rceil$ bits to store  $current\_id$.
\end{proof}
In Theorem~\ref{thm: 3bits-original}, we repeat the conclusion for the HDR-$3$-\textsc{Bits} algorithm under static failures by   Chiesa et al.~\cite{DBLP:journals/ton/ChiesaNMGMSS17,robroute16infocom}. We refer the reader to~\cite[Theorem~$5$]{robroute16infocom} for the proof details.
\begin{theorem}[{\cite[Theorem~$5$]{robroute16infocom}}] 
	For a $k$-connected graph,  Algorithm~\ref{alg: three_bits} (HDR-$3$-\textsc{Bits}~\cite[Algorithm~2]{DBLP:journals/ton/ChiesaNMGMSS17}) is a $\left(k-1 \right) $\nobreakdash-resilient routing against static failures by rewriting at most $3$ bits in  packet headers.  \label{thm: 3bits-original}
\end{theorem}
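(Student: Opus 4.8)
The plan is to combine two ingredients: the existence of a \emph{good arborescence} (the static-failure counterpart of Theorem~\ref{thm: good-arborescence}, established by Chiesa et al.~\cite{DBLP:journals/ton/ChiesaNMGMSS17}) and a \emph{reversible} traversal of each arborescence that can be driven by only a constant amount of header state. Fix the $k$ arc-disjoint arborescences $\mathcal{T}=\{T_1,\ldots,T_k\}$ rooted at $t$. By the good-arborescence lemma, for any static $F$ with $\left|F\right|\leq k-1$ there is a good arborescence $T_g\in\mathcal{T}$ such that hitting any failure $(u,v)\in E(T_g)$ and bouncing onto the arborescence $T'$ carrying $(v,u)$ lets canonical routing from $u$ reach $t$ uninterrupted along $T'$. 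The first step is to pin down what the three bits encode and check that this is enough: the header carries a constant-size mode flag distinguishing (i) canonical routing along the current $T_i$, (ii) forward DFS exploration of the bounce-target $T_j$, and (iii) the reverse DFS used to retrace to the bounce edge $e$. In each mode the next hop is a deterministic function of the incoming port and the local tree structure — in DFS the in-port reveals whether the packet arrived from the parent or from a child, which fixes the next child or the parent to visit. The crucial observation is that the packet never needs to remember the index $i$ across a bounce, which is precisely what replaces the $\lceil\log k\rceil$ bits of Algorithm~\ref{alg:log_k bits} by a constant.

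The second step is to establish the reversibility and self-containment of the DFS probe, which is where the \emph{static} hypothesis is used essentially. Because $F$ is static, the failed-edge set is identical throughout a single DFS of $T_j$; hence the DFS from the bounce point $u$ either reaches $t$ (exactly when $u$ and $t$ lie in the same component of $T_j\setminus F$) or, upon meeting a failed edge $e'$, can be run in reverse to return deterministically to $e$ on $T_i$. This lets the algorithm \emph{test} a bounce without committing to it: a failed probe restores the packet to the known state ``at $e$, on $T_i$,'' whereupon the index advances to $(i+1)\bmod k$ and the next arborescence is tried. I would verify carefully that the reverse-DFS retraces exactly the forward path (so it genuinely lands back at $e$) and that reaching $t$ versus hitting $e'$ are distinguishable from the local mode flag alone.

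The third step ties this to the good arborescence to conclude correctness. As the index cycles through all of $\{1,\ldots,k\}$, the canonical routing eventually runs along $T_g$; when it hits a failure $(u,v)\in E(T_g)$, the bounce targets the arborescence $T'$ carrying $(v,u)$, and by the well-bouncing property the forward DFS on $T'$ from $u$ reaches $t$, since $u$ and $t$ are connected in $T'\setminus F$. I expect the \textbf{main obstacle} to be the progress/termination argument: showing the packet cannot cycle forever among the non-good arborescences without ever reaching the critical failure of $T_g$. I would attack this with a monotonicity/potential argument — each completed round of failed probes either delivers the packet or certifies that the current bounce is not the good arborescence's critical bounce, so the at most $k-1$ failures and $k$ arborescences bound the number of probes before the good bounce is forced. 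It is worth stressing that the static-failure assumption is nowhere discardable: it is what makes the DFS see a consistent failure set and guarantees that reverse-DFS faithfully returns to $e$; this is exactly why the same reasoning — and the theorem itself — fails for dynamic failures, as shown separately in Theorem~\ref{thm: counter-example-3bits}.
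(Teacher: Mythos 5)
The paper does not actually prove Theorem~\ref{thm: 3bits-original}: it is stated as an imported result, and the text explicitly defers to \cite[Theorem~5]{robroute16infocom} for the proof. So there is no in-paper argument to compare yours against; what can be said is that your sketch is a faithful and essentially correct reconstruction of the cited argument, and it is consistent with everything the present paper does use --- the existence of a good arborescence (the static case underlying Theorem~\ref{thm: good-arborescence}), the constant-size mode flag replacing the $\lceil\log k\rceil$-bit index of Algorithm~\ref{alg:log_k bits}, and the essential role of staticness in making the reverse traversal land back on the original bounce edge $e$ (exactly the point that breaks in Theorem~\ref{thm: counter-example-3bits} and is salvaged for semi-dynamic failures in Theorem~\ref{thm: semi-dynamic-3bits}).

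Two small points are worth tightening. First, you justify ``next hop is a function of the in-port'' only for the DFS modes; for the canonical mode you should say explicitly that arc-disjointness is what lets the in-port identify the current arborescence $T_i$ (each directed arc lies in at most one $T_i$), which is the reason the index never needs to live in the header. Second, the ``main obstacle'' you flag --- that the packet might cycle forever among non-good arborescences --- is not really an obstacle given the algorithm's structure: the index is incremented unconditionally at the end of every iteration of the while loop in Algorithm~\ref{alg: three_bits}, each iteration (canonical segment, probe, reverse) is finite under static failures, and once $i$ equals the good arborescence's index the definition of good arborescence guarantees delivery from \emph{any} starting node; so at most $k$ iterations suffice and no potential argument is needed.
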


\begin{figure*}[t!]
	\centering
			\resizebox{\textwidth}{!}{

	\begin{tikzpicture}

		\tikzset{mynode/.style={circle,draw,fill=black!80, outer sep=0pt,inner sep=1.5pt}};

\tikzset{source/.style={circle,draw,fill=green!80, outer sep=0pt,inner sep=2pt}};

\tikzset{destination/.style={circle,draw,fill=blue!50, outer sep=0pt,inner sep=2pt}};
		
\begin{scope}[shift={(0,0)}]


\draw[] node[mynode,  label=above:$d$]  (a) at (-0.75,0)  {};

\draw[] node[mynode,  label=above:$c$]  (b) at (0,0)  {};

\draw[] node[mynode,  label=above:$b$]  (c) at (0.75,0)  {};

\draw[] node[mynode,  label=above:$a$]  (d) at (1.5,0)  {};

\draw[] node[destination,  label=above:$t$]  (t) at (0.385,1)  {};
\draw[] node[mynode,  label=below:$x$]  (x) at (0.385,-1)  {};

\node[align=center] at (0.385,-1.8) {Graph $G$};

\draw[] (a) to (t);
\draw[] (c) to (t);
\draw[] (b) to (t);
\draw[] (d) to (t);

\draw[] (a) to (x);
\draw[] (c) to (x);
\draw[] (b) to (x);
\draw[] (d) to (x);

\draw[] (a) to[bend right=30] (c);
\draw[] (a) to[bend right=40] (d);
\draw[] (b) to[bend right=30] (d);
%
\draw[] (a) -- (b)-- (c) --(d);
\end{scope}

\begin{scope}[shift={(3,0)}]

		\draw[] node[mynode,  label=above:$d$]  (a) at (-0.75,0)  {};
		
		\draw[] node[mynode,  label=above:$c$]  (b) at (0,0)  {};
		
		\draw[] node[mynode,  label=above:$b$]  (c) at (0.75,0)  {};
		
		\draw[] node[mynode,  label=above:$a$]  (d) at (1.5,0)  {};
		
		\draw[] node[destination,  label=above:$t$]  (t) at (0.385,1)  {};
		\draw[] node[mynode,  label=below:$x$]  (x) at (0.385,-1)  {};

		\node[align=center] at (0.385,-1.8) {$T_1$};
		
		\draw[opacity=0.3] (a) to (t);
		\draw[opacity=0.3] (c) to (t);
		\draw[opacity=0.3] (b) to (t);
		\draw[opacity=0.3] (d) to (t);

		\draw[opacity=0.3] (a) to (x);
		\draw[opacity=0.3] (c) to (x);
		\draw[opacity=0.3] (b) to (x);
		\draw[opacity=0.3] (d) to (x);
		
		\draw[opacity=0.3] (a) to[bend right=30] (c);
		\draw[opacity=0.3] (a) to[bend right=40] (d);
		\draw[opacity=0.3] (b) to[bend right=30] (d);
		%
		\draw[opacity=0.3] (a) -- (b)-- (c) --(d);

		\draw[transform canvas={yshift=-0.5ex,}, red,->,>=stealth,shorten >=1,shorten <=1, thick] (d) --(t);
		\draw[transform canvas={xshift=-2,yshift=2}, red,->,>=stealth,shorten >=1,shorten <=2, thick] (c) --(d);
		\draw[transform canvas={xshift=-2,yshift=2}, red,->,>=stealth,shorten >=1,shorten <=2, thick] (a) to[bend right=40] (d);
		
		\draw[transform canvas={xshift=-2,yshift=0}, red,->,>=stealth,shorten >=0,shorten <=0, thick] (b) --(x);
		\draw[transform canvas={xshift=2,yshift=0}, red,->,>=stealth,shorten >=1,shorten <=0, thick] (x) --(c);
	\end{scope}

	\begin{scope}[shift={(6,0)}]

	\draw[] node[mynode,  label=above:$d$]  (a) at (-0.75,0)  {};
	
	\draw[] node[mynode,  label=above:$c$]  (b) at (0,0)  {};
	
	\draw[] node[mynode,  label=above:$b$]  (c) at (0.75,0)  {};
	
	\draw[] node[mynode,  label=above:$a$]  (d) at (1.5,0)  {};
	
	\draw[] node[destination,  label=above:$t$]  (t) at (0.385,1)  {};
	\draw[] node[mynode,  label=below:$x$]  (x) at (0.385,-1)  {};

	\node[align=center] at (0.385,-1.8) {$T_2$};
	
	\draw[opacity=0.3] (a) to (t);
	\draw[opacity=0.3] (c) to (t);
	\draw[opacity=0.3] (b) to (t);
	\draw[opacity=0.3] (d) to (t);

	\draw[opacity=0.3] (a) to (x);
	\draw[opacity=0.3] (c) to (x);
	\draw[opacity=0.3] (b) to (x);
	\draw[opacity=0.3] (d) to (x);
	
	\draw[opacity=0.3] (a) to[bend right=30] (c);
	\draw[opacity=0.3] (a) to[bend right=40] (d);
	\draw[opacity=0.3] (b) to[bend right=30] (d);
	%
	\draw[semitransparent] (a) -- (b)-- (c) --(d);

	\draw[transform canvas={yshift=0.5ex,}, green,->,>=stealth,shorten >=1,shorten <=0, thick] (a) --(t);

	\draw[transform canvas={yshift=-0.5ex,}, green,->,>=stealth,shorten >=-1,shorten <=2, thick] (x) --(a);
	
	\draw[transform canvas={xshift=-0.5, yshift=2}, green,->,>=stealth,shorten >=-1,shorten <=-1, thick] (d) --(c);
		\draw[transform canvas={xshift=-0.5, yshift=2}, green,->,>=stealth,shorten >=0,shorten <=-1, thick] (c) --(b);
				\draw[transform canvas={xshift=-0.5, yshift=2}, green,->,>=stealth,shorten >=0,shorten <=-2, thick] (b) --(a);
\end{scope}

\begin{scope}[shift={(12,0)}]

	\draw[] node[mynode,  label=above:$d$]  (a) at (-0.75,0)  {};
	
	\draw[] node[mynode,  label=above:$c$]  (b) at (0,0)  {};
	
	\draw[] node[mynode,  label=above:$b$]  (c) at (0.75,0)  {};
	
	\draw[] node[mynode,  label=above:$a$]  (d) at (1.5,0)  {};
	
	\draw[] node[destination,  label=above:$t$]  (t) at (0.385,1)  {};
	\draw[] node[mynode,  label=below:$x$]  (x) at (0.385,-1)  {};

	\node[align=center] at (0.385,-1.8) {$T_4$};
	
	\draw[opacity=0.3] (a) to (t);
	\draw[opacity=0.3] (c) to (t);
	\draw[opacity=0.3] (b) to (t);
	\draw[opacity=0.3] (d) to (t);

	\draw[opacity=0.3] (a) to (x);
	\draw[opacity=0.3] (c) to (x);
	\draw[opacity=0.3] (b) to (x);
	\draw[opacity=0.3] (d) to (x);
	
	\draw[opacity=0.3] (a) to[bend right=30] (c);
	\draw[opacity=0.3] (a) to[bend right=40] (d);
	\draw[opacity=0.3] (b) to[bend right=30] (d);
	%
	\draw[opacity=0.3] (a) -- (b)-- (c) --(d);

	\draw[transform canvas={xshift=-2}, blue,->,>=stealth,shorten >=0,shorten <=0, thick] (c) --(t);
	\draw[transform canvas={yshift=2}, blue,->,>=stealth,shorten >=0,shorten <=0, thick] (b) --(c);
	\draw[transform canvas={xshift=0,yshift=-2}, blue,<-,>=stealth,shorten >=0,shorten <=-1.5, thick] (a) to[bend right=40] (d);
	\draw[transform canvas={xshift=0,yshift=1.5}, blue,->,>=stealth,shorten >=0,shorten <=1, thick] (a) to[bend right=30] (c);
	\draw[transform canvas={xshift=2,yshift=0}, blue,->,>=stealth,shorten >=-2,shorten <=0, thick] (x) --(b);
\end{scope}		

\begin{scope}[shift={(9,0)}]

	\draw[] node[mynode,  label=above:$d$]  (a) at (-0.75,0)  {};
	
	\draw[] node[mynode,  label=above:$c$]  (b) at (0,0)  {};
	
	\draw[] node[mynode,  label=above:$b$]  (c) at (0.75,0)  {};
	
	\draw[] node[mynode,  label=above:$a$]  (d) at (1.5,0)  {};
	
	\draw[] node[destination,  label=above:$t$]  (t) at (0.385,1)  {};
	\draw[] node[mynode,  label=below:$x$]  (x) at (0.385,-1)  {};

	\node[align=center] at (0.385,-1.8) {$T_3$};
	
	\draw[opacity=0.3] (a) to (t);
	\draw[opacity=0.3] (c) to (t);
	\draw[opacity=0.3] (b) to (t);
	\draw[opacity=0.3] (d) to (t);

	\draw[opacity=0.3] (a) to (x);
	\draw[opacity=0.3] (c) to (x);
	\draw[opacity=0.3] (b) to (x);
	\draw[opacity=0.3] (d) to (x);
	
	\draw[opacity=0.3] (a) to[bend right=30] (c);
	\draw[opacity=0.3] (a) to[bend right=40] (d);
	\draw[opacity=0.3] (b) to[bend right=30] (d);
	%
	\draw[semitransparent] (a) -- (b)-- (c) --(d);

\draw[transform canvas={xshift=2}, orange,->,>=stealth,shorten >=1,shorten <=-1, thick] (b) --(t);
\draw[transform canvas={yshift=2}, orange,->,>=stealth,shorten >=-1,shorten <=-1, thick] (a) --(b);

\draw[transform canvas={xshift=0,yshift=-1.5}, orange,<-,>=stealth,shorten >=0,shorten <=-1, thick] (a) to[bend right=30] (c);
\draw[transform canvas={xshift=0,yshift=1.5}, orange,<-,>=stealth,shorten >=1,shorten <=0, thick] (b) to[bend right=30] (d);
\draw[transform canvas={xshift=2,yshift=0}, orange,->,>=stealth,shorten >=0,shorten <=0, thick] (x) --(d);

\end{scope}

	\end{tikzpicture}

}
	\caption{Counter-example for applying HDR-$3$-\textsc{Bits} (\hspace{1sp}\cite[Algorithm~2]{DBLP:journals/ton/ChiesaNMGMSS17}, presented in Algorithm~\ref{alg: three_bits}) against dynamic failures. For the $4$-connected graph $G=\left(V,E \right) $ and four arc-disjoint arborescences $\{T_1, \ldots,T_4\}$ as shown in this figure,  HDR-$3$-\textsc{Bits} algorithm will result in a routing loop for the dynamic failures: $F=\left\lbrace \{a,b\}, \{b,c\}, \{c,d\}\right\rbrace $. A packet originated at the node $x\in V$ will be routed along $T_1$ until hitting the first failure $(b,a)$, and then it is bounced to $T_2$ to follow the directed path $\left(b,c,d \right) $ until hitting the second failure $\left( c,d\right) $. Now, the packet starts at $c$ to do a reversing DFS traversal of $T_2$ to hit the failure $\left(c,b \right) $. Algorithm~\ref{alg: three_bits} will interpret $\left(c,b \right) $ as the first failure to believe that  the current arborescence $T_i$ is $T_4$. Thus, the algorithm will select the next arborescence of $T_i$ as $T_1$ instead of $T_2$ and the packet will follow the directed path $\left(c, x,b \right) $ on $T_1$ to hit the failure $\left(b,a \right) $ again, s.t., the same steps are repeated to generate a routing loop.}
	\label{fig:fig2}

\end{figure*}
In the following, we introduce a counter-example for  the HDR-$3$\nobreakdash-\textsc{Bits} algorithm, which can lead to a forwarding loop even for three dynamic failures in a $4$-connected graph.
\begin{theorem}
	For a $k$-connected graph with $k\ge 4$,  Algorithm~\ref{alg: three_bits} (HDR-$3$-\textsc{Bits}~\cite[Algorithm~2]{DBLP:journals/ton/ChiesaNMGMSS17}) cannot be a $\left(k-1 \right) $\nobreakdash-resilient routing against dynamic failures by rewriting at most $3$ bits in  packet headers.  \label{thm: counter-example-3bits}
\end{theorem}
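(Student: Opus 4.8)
The plan is to exhibit a concrete counter-example: a specific $4$-connected graph $G$ together with four fixed arc-disjoint arborescences $\{T_1, T_2, T_3, T_4\}$ and a carefully timed schedule of three dynamic failures under which Algorithm~\ref{alg: three_bits} (HDR-$3$-\textsc{Bits}) produces a forwarding loop, thereby failing to deliver a packet even though the source and destination remain connected in $G\setminus F$. Since this disproves $(k-1)$-resilience for $k=4$, and any larger $k$-connected graph can be built to contain this gadget (so the impossibility propagates upward), establishing the case $k=4$ suffices. I would take the graph and arborescence decomposition already drawn in Fig.~\ref{fig:fig2}, with vertices $V=\{a,b,c,d,t,x\}$, destination $t$, and the three dynamic failures $F=\left\{\{a,b\}, \{b,c\}, \{c,d\}\right\}$.

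The heart of the argument is to trace the packet's trajectory step by step and identify the precise moment where the $3$-bit state becomes inconsistent with the actual position in the circular ordering. The key mechanism to exploit is that HDR-$3$-\textsc{Bits} reconstructs the identity of the current arborescence only \emph{locally}, by checking which arborescence shares the failure it currently hits; under static failures this reconstruction is sound, but under dynamic failures a link can be \emph{up} when the packet passes one direction and \emph{down} when relevant to the bookkeeping, so the algorithm misidentifies the current arborescence. Concretely, I would argue: the packet originates at $x$, routes along $T_1$ to the first failure $(b,a)$, bounces to $T_2$, follows the directed path $(b,c,d)$ until the second failure $(c,d)$, then begins the reversing DFS traversal of $T_2$ from $c$ and hits the failure $(c,b)$. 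The crucial step is that the algorithm now interprets $(c,b)$ as a \emph{first}-failure encounter and concludes the current arborescence is $T_4$ (the one sharing that arc), rather than correctly recognizing it is mid-recovery on $T_2$. It therefore advances to $T_1$ rather than $T_2$, routes along $(c,x,b)$ on $T_1$, and hits $(b,a)$ again — returning to an earlier configuration and closing a forwarding loop.

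The main obstacle — and the step requiring the most care — is verifying the timing of the dynamic failures so that the scenario is realizable: at each relevant moment each of the three links in $F$ must be in the state (up or down) that both (i) forces the packet along the claimed path and (ii) causes the algorithm's local failure-detection to misfire, while simultaneously ensuring $s{-}t$ connectivity is never genuinely severed in $G\setminus F$ (so that the Non-Trap Assumption is respected and the failure to deliver is a genuine routing deficiency rather than a disconnection). I would therefore tabulate, for each edge of the packet's walk, exactly which links are down, and confirm that the stated arborescence-identity checks in Algorithm~\ref{alg: three_bits} produce the misidentification of $T_2$ as $T_4$. Once the loop $(c,x,b)\to(b,a)\to\cdots$ is shown to repeat the same state configuration, the packet never reaches $t$, contradicting $(k-1)$-resilience and completing the proof.
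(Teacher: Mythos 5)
Your proposal is correct and follows essentially the same approach as the paper's proof: the same graph and arborescence decomposition from Fig.~\ref{fig:fig2}, the same three dynamic failures $\left\lbrace \{a,b\}, \{b,c\}, \{c,d\}\right\rbrace$, and the same packet trajectory in which the reversing DFS traversal on $T_2$ hits $(c,b)$, causing the algorithm to misidentify the current arborescence as $T_4$, advance to $T_1$, traverse $(c,x,b)$, and hit $(b,a)$ again to close the forwarding loop. Your added remarks --- that the gadget embeds into larger $k$-connected graphs and that one should tabulate the up/down state of each unstable link along the walk --- are sensible elaborations of what the paper leaves implicit.
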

\begin{proof}
	We will show a counter-example for Algorithm~\ref{alg: three_bits}, which can result in a routing loop for three dynamic failures.

	As shown in Fig.~\ref{fig:fig2},  we can construct a  $4$-connected graph $G=\left(V,E \right) $ and four arc-disjoint arborescences $\left\lbrace T_1, T_2, T_3, T_4 \right\rbrace $ rooted at the node $t\in V$.
	
	Let  three  dynamic failures $F$ be $\{a,b\}$, $\{b,c\}$, and $\{c,d\}$. For a packet starting at the node $b$, it is first routed along $T_1$ to meet the first failure $\left( a,b\right) $. Since $\{a,b\}$ is shared by $T_1$ and $T_2$, we will bounce from $T_1$ to $T_2$ and start a DFS traversal along $T_2$ from  $b$, i.e., following the directed path $\left(b,c,d,t \right) $.
	
	When $\left(b,c \right) $ is up and  $\left(c,d \right) $ is failed, the DFS traverseal along $T_2$ will  stop at $c$ due to the second failure $\left( c,d\right) $. Now, if the reversing DFS traversal on $T_2$ from the node $c$ hits the failure  $\left(c,b \right) $, then Algorithm~\ref{alg: three_bits} will conclude that the first failure should be $\left(c,b \right) $ instead of $\left( a,b\right) $ and the current arborescence is $T_i=T_4$. According to   Algorithm~\ref{alg: three_bits}, we should shift the current arborescence $T_4$ to the next one, which is $T_1$ again. Starting at $c$ on $T_1$, the canonical routing along $T_1$ will go through $\left(c, x,b \right) $ to hit the failure $\left(b,a \right) $ again to repeat the previous routing loop.
\end{proof}
Although the HDR-$3$-\textsc{Bits} algorithm cannot be applied to dynamic failures anymore, we further illustrate that the algorithm can still work for semi-dynamic failures, where a link becomes permanently failed  once its state is down, in contrast to arbitrary link states of dynamic failures.

\begin{theorem}
	For a $k$-connected graph,  Algorithm~\ref{alg: three_bits} (HDR-$3$-\textsc{Bits}~\cite[Algorithm~2]{DBLP:journals/ton/ChiesaNMGMSS17}) is a $\left(k-1 \right) $\nobreakdash-resilient routing against semi-dynamic failures by rewriting at most $3$ bits in the packet headers. \label{thm: semi-dynamic-3bits}
\end{theorem}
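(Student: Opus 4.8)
The plan is to reduce the semi-dynamic case to the already-established static case (Theorem~\ref{thm: 3bits-original}) by exploiting the defining feature of semi-dynamic failures: links are fail-stop, so the set of down-links is monotonically nondecreasing over time. Since $|F|\le k-1$ and each link fails at most once, only finitely many failure events occur; let $\tau$ be the time after the last such event. From $\tau$ onward the down-set equals a fixed static set $F^\ast\subseteq F$ with $|F^\ast|\le k-1$, and, crucially, no link ever recovers. Hence the suffix of the execution after $\tau$ is a genuine trajectory of the \emph{static}-$F^\ast$ instance of Algorithm~\ref{alg: three_bits}: every forwarding decision depends only on the current node, in-port, header bits, and down-set, and the last of these is now frozen at $F^\ast$. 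Because $G$ is $k$-connected and $|F^\ast|\le k-1$, the subgraph $G\setminus F^\ast$ stays connected, so $t$ is reachable from wherever the packet sits at time $\tau$.

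First I would argue that this suffix delivers the packet to $t$, which contradicts the existence of an infinite (looping) execution and thereby proves resilience. The argument has two ingredients. (i) Whatever DFS sub-procedure of Algorithm~\ref{alg: three_bits} may be in progress at time $\tau$ terminates after a bounded number of steps, since a DFS and its reverse on a fixed arborescence traverse each tree edge at most a constant number of times and the down-set no longer changes; the algorithm therefore reaches the next canonical-routing phase, i.e., a configuration consisting of a node $v$ and an arborescence index $i$. (ii) From \emph{any} such configuration, the static circular-arborescence mechanism of HDR-$3$-\textsc{Bits} delivers the packet to $t$ under any static failure set of size $\le k-1$: by Theorem~\ref{thm: good-arborescence} there is a good arborescence in $\mathcal{T}$, and since the index is advanced circularly, canonical routing is within one cycle resumed on that good arborescence, after which, by the reasoning underlying Theorem~\ref{thm: 3bits-original}, the well-bouncing property guarantees uninterrupted arrival at $t$. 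Neither the particular starting index nor a previously misrecorded index affects this, because the good arborescence depends only on $F^\ast$.

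The main obstacle, and the precise point where semi-dynamic behaviour departs from the fully dynamic case of Theorem~\ref{thm: counter-example-3bits}, is ingredient (ii): I must show convergence from an \emph{arbitrary} starting configuration, including one in which the reverse-DFS probe has recorded the \emph{wrong} current arborescence—exactly what happens when a link fails mid-traversal so that the return path to the original failure $e$ is blocked. This is where monotonicity is essential. In the dynamic counter-example, a wrong recorded index is harmful only because a down-link later recovers, allowing the very same forward DFS, and hence the same misidentification, to be replayed, thereby closing the loop $(c,x,b)\ldots$ of Figure~\ref{fig:fig2}. Under semi-dynamic failures such a recovery is forbidden: once the link that caused the misidentification is down it stays down, so the offending forward DFS can never be re-executed, and after $\tau$ a (possibly wrong) index merely selects \emph{some} arborescence from which, by (ii), the packet still reaches $t$. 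I would therefore isolate as a lemma the loop-freeness of the static-$F^\ast$ transition system from all reachable configurations, and then let the monotone-stabilization argument above assemble the full proof.
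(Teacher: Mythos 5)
Your proposal is correct and follows essentially the same route as the paper: both arguments exploit the fail-stop monotonicity of semi-dynamic failures to identify a final stabilization point (the paper's ``last phase'', your time $\tau$) after which the execution is a genuine trajectory of the static instance, and then invoke the static analysis of Theorem~\ref{thm: 3bits-original} together with the existence of a good arborescence (Theorem~\ref{thm: good-arborescence}) to conclude delivery. If anything, you are slightly more explicit than the paper about the one delicate point --- convergence of the static transition system from an \emph{arbitrary} reachable configuration, including one with a misrecorded arborescence index --- which the paper dispatches with a brief appeal to the analysis of Chiesa et al.
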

\begin{proof}
	The pseudo-code of Algorithm~\ref{alg: three_bits} implies that Algorithm~\ref{alg: three_bits}  will not stop until packet reaches $t$. We can divide an execution of Algorithm~\ref{alg: three_bits} on any set of $\left( k-1\right) $ semi-dynamic failures into several phases. Start running  Algorithm~\ref{alg: three_bits} on an arbitrary arborescence, which is the initial phase, and when the packet observes two different states on a link $e$ during its traversing on arborescences, a new phase of the algorithm execution starts immediately. For a semi-dynamic failure,  it becomes a static failure after becoming  down for the first time. Thus, the $k-1$ semi-dynamic failures (links) indicates at most $k$ phases. Clearly, in the last phase, the packet cannot observe any link that can change its state, otherwise another new phase starts again and the current phase is not the last phase.
	
	Now, in the last phase, we show that Algorithm~\ref{alg: three_bits} will stop by sending the packet to the destination $t$.  Easy to note that, all failures that will be visited in the last phase must be already fixed as static failures in the beginning of the last phase, otherwise it is not the last phase yet. Thus, the last phase can be understood as the beginning time of running  Algorithm~\ref{alg: three_bits} for static failures.  Similar to  the analysis by Chiesa et al.~\cite{DBLP:journals/ton/ChiesaNMGMSS17}, by iterating on each arborescence in $\mathcal{T}$, the packet can finally find a good arborescence to reach $t$.
\end{proof}

\section{Perfect Resilience Against Dynamic Failures}\label{sec: perfect}
We devote this section to the $k$-resilience in a general graph. First, we show that the $1$\nobreakdash-resilience always exists against dynamic failures, but no $2$\nobreakdash-resilient source-matched routing anymore  for  dynamic failures, and finally demonstrate that the perfect $k$-resilience is impossible by rewriting $O(\log k)$ bits.
\begin{figure}[t]
	\centering
	\input{Figures/intuition_example.tex}
	\caption{Example of applying the  $2$\nobreakdash-resilient source-matched routing algorithm proposed by Dai et al.~\cite[Algorithm~$1$]{DBLP:conf/spaa/DaiF023} to a graph $G=\left(V,E \right) $ shown as bold lines without arrows in Fig.~\ref{fig:fig_SPAA} (\cite[Fig.~$1$]{DBLP:conf/spaa/DaiF023}) for the source-destination pair $\left(s,t \right) $ to obtain its kernel graph $\mathcal{G}$ by excluding  these four red bold lines: $\left\lbrace\{v_1,v_4\}, \{v_2,v_3\},\{u_1,u_4\}, \{u_2,u_3\} \right\rbrace $ as shown in~\cite[Fig.~$2$]{DBLP:conf/spaa/DaiF023}, where a kernel graph $\mathcal{G}$ is a subgraph of $G$, s.t., for any two failures $F\subseteq E$, if $s-t$ is connected in $G\setminus F$ then $s-t$ is also connected in $\mathcal{G}\setminus F$.  By~\cite[Definition~$6.2$]{DBLP:conf/spaa/DaiF023},  a forwarding scheme $\Pi^{(s,t)}$ defines a link-circular forwarding function at each node of $\mathcal{G}$, and we can easily verify that $\Pi^{(s,t)}$ is $2$-resilient against static failures.  In this figure,  $\Pi^{(s,t)}$ is illustrated by solid (red) arcs, dotted (green) arcs, and dashed (blue) arcs respectively, s.t., at a node $v$, a packet from an incoming arc $(u,v)$ is forwarded to an outgoing arc $(v,w)$ that has the same dash pattern (color) as $(u,v)$. If an outgoing arc $(v,w)$ is failed, then the arc $(w,v)$ is treated as an incoming arc to continue forwarding on the dash pattern (color) of $(w,v)$, while a packet originated at $s$ can select  either the solid (red) arc $(s,v_{10})$ or the dashed (blue) arc $(s,u_{10})$ arbitrarily to start.  However, this forwarding scheme $\Pi^{(s,t)}$ is not $2$-resilient against semi-dynamic failures. For semi-dynamic failures $F=\left\lbrace \{v_1,v_2\}, \{v_7,v_9\} \right\rbrace $, by starting at $s$ and following forwarding rules (red arcs), the packet goes through $\left(s, v_{10}, v_0, v_5, v_1, v_2, v_7 \right) $ to meet the first failure $\left(v_7, v_9 \right) $, and then it is rerouted by the dashed forwarding rules (green arcs) to traverse $\left(v_7, v_2 \right) $ to hit the second failure $\left(v_2,v_1 \right) $.  Now, $\Pi^{(s,t)}$ makes the packet stuck in the connected component on $\{v_2, v_7\}$, but in the graph $G\setminus F$, there is still a path from $v_7$ to $t$, e.g., $\left(v_7, v_2, v_3, v_4, v_8, v_9, v_{11}, t \right) $, implying that $\Pi^{(s,t)}$ is not $2$-resilient against semi-dynamic failures. Moreover, after adapting $\Pi^{(s,t)}$ by additionally enforcing clockwise link-circular routing at $v_1$ and $v_2$ to include $\left\lbrace \{v_1,v_4\}, \{v_2,v_3\} \right\rbrace $, we can easily verify that it becomes a $2$-resilient source-matched routing against semi-dynamic failures.}\label{fig:fig_SPAA}
\end{figure}
\begin{theorem}
	For a general graph $G$,  there exists a $1$\nobreakdash-resilient routing scheme against dynamic failures. \label{thm: 1-resilience-always}
\end{theorem}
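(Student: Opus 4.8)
The plan is to reduce the general-graph problem to the already-solved 2-edge-connected case by decomposing $G$ along its bridges. First I would compute the decomposition of $G$ into its 2-edge-connected components (blocks) $B_1,\dots,B_m$ together with the bridges joining them, and form the associated bridge tree $\mathcal{B}$ whose nodes are the blocks and whose edges are the bridges. Since $\mathcal{B}$ is a tree, for every block $B$ there is a unique path in $\mathcal{B}$ from $B$ to the block $B_t$ containing $t$; I call the first bridge on this path the \emph{exit bridge} of $B$, and its endpoint inside $B$ the \emph{exit vertex} $r_B$ (for $B_t$ I set $r_{B_t}=t$). Because each $B$ is 2-edge-connected, I would equip it with two arc-disjoint arborescences rooted at $r_B$ and install the circular-arborescence routing of Definition~\ref{def: circular-arborescence} on them; at $r_B$ the routing simply forwards the packet across the exit bridge into the next block toward $t$. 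This scheme is purely destination-based, consistent with the per-destination setting.

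The resulting routing function drives any packet monotonically upward in $\mathcal{B}$: inside its current block it is sent to the exit vertex, then pushed across the exit bridge, so it visits strictly the blocks on the bridge-tree path from its origin to $B_t$ and finally reaches $t$ inside $B_t$. To verify $1$-resilience against a dynamic failure set $F=\{e\}$ I would distinguish where $e$ lives. If $e$ is a bridge, then $G\setminus\{e\}$ keeps $s$ and $t$ connected exactly when $e$ does not lie on the bridge-tree path from $s$'s block to $B_t$; in that case $e$ is never the exit bridge of any visited block, so the packet crosses only live bridges, and since no block contains a failing edge, the circular-arborescence routing inside each visited block degenerates to plain arborescence routing and delivers the packet to the exit vertex without obstruction. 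If instead the hypothesis fails (i.e. $e$ is the on-path bridge), then $s$-$t$ is disconnected in $G\setminus\{e\}$ and nothing is required, consistent with the non-trap assumption.

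The remaining, genuinely non-trivial case is $e$ lying inside some block $B$. Here every bridge stays up, so outside $B$ the packet progresses deterministically toward $B$ and, after leaving $B$, toward $t$; within $B$ it is governed by circular-arborescence routing on two arc-disjoint arborescences of the 2-edge-connected graph $B$ rooted at $r_B$. By Theorem~\ref{thm: 3-resilience} instantiated at $k=2$, this routing is $1$-resilient against a single dynamic failure, so despite $e$ flapping arbitrarily the packet still reaches $r_B$ (or $t$ if $B=B_t$), whence it continues along live bridges to $t$. I expect the main obstacle to be precisely this reduction: arguing that restricting attention to a single block is legitimate, that the exit vertices can be chosen consistently (including at cut vertices shared by several blocks, where the incoming port selects the block context) so the per-block schemes compose into a globally loop-free route, and that a single dynamic failure is necessarily confined either to one block or to one bridge, so that exactly one of the above cases applies. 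Once this bookkeeping is pinned down, the 2-edge-connected case already supplied by Theorem~\ref{thm: 3-resilience} does the real work.
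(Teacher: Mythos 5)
Your proof is correct, and it rests on the same underlying insight as the paper's: a single bridge failure is either irrelevant or fatal to $s$--$t$ connectivity, so all the real work happens inside $2$-edge-connected pieces, where two arc-disjoint arborescences suffice. The realization differs, though. The paper does not decompose: it logically doubles every bridge to obtain a $2$-edge-connected multigraph $G' = G \cup E'$, computes one global pair of arc-disjoint arborescences $\{T_0, T_1\}$ rooted at $t$ (arranged so that each bridge appears with the same orientation in both), routes on $T_0$, and switches once to the $T_1$-path upon hitting a non-bridge failure; arc-disjointness alone guarantees the $T_1$-path from the hitting point avoids the failed edge. You instead make the block/bridge-tree structure explicit, root a local arborescence pair at each block's exit vertex, and invoke Theorem~\ref{thm: 3-resilience} at $k=2$ as a black box inside the one block that can contain the failure, with monotone progress along the bridge tree gluing the pieces together. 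Your route is more modular and makes the ``why'' transparent (the failure is confined to one block, which is exactly the solved case); the paper's is more compact and avoids the composition bookkeeping you flag at the end. Two small points to tidy: in the $2$-edge-connected-component decomposition the components partition $V$, so no vertex is shared between blocks (your worry about ``cut vertices shared by several blocks'' belongs to the biconnected decomposition and does not arise here); and single-vertex blocks should be noted as the degenerate case where the packet is already at its exit vertex. Neither affects correctness.
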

\begin{proof}
	If a general graph $G=\left( V,E\right) $ is $2$-connected, then Theorem~\ref{thm: 3-resilience} directly implies that $G$ admits a $1$\nobreakdash-resilient routing scheme against dynamic failures. Furthermore, if $G$ is $1$-connected, let $E'\subset E$ denote a set of \emph{bridges}, s.t., $\forall e\in E',$ $G\setminus \{e\}$ is disconnected.  Clearly, each connected component $H_i$ in $G\setminus E'$ must be $2$-connected. Logically, by allowing two parallel edges for each $\{u,v\}\in E'$, we can obtain a $2$-connected (logical) graph $G'=G\cup E'$. We can compute two arc-disjoint arborescences $\{T_0, T_1\}$ of $G'$, s.t., $\forall \{u,v\}\in E'$, either $\left( u,v\right) $ or $\left(v, u\right) $ is included in both $T_0$ and $T_1$.
	  For a dynamic failure $e\in E$, routing along $T_0$ either reaches $t$ or hit $e$ at a node $u\in V$, and if $e\notin E'$, rerouting through the directed path $P_{u,t}$ of $T_1$ will reach $t$ without hitting $e$ anymore. However, if $e\in E'$, the packet cannot arrive at $t$ anymore since the destination $t$ is  in the different connected component in $G\setminus \{e\}$.  
	  We also note that if routing passes through an arc \((u,v) \in T_0\) (or \((u,v) \in T_1\)), which satisfies \(\{u,v\} \in E'\), then the routing always switches to the directed path in \(T_0\) starting from \(v\).
\end{proof}

Chiesa et al.~\cite{DBLP:journals/ton/ChiesaNMGMSS17} shows that $2$-connected graphs cannot admit $2$-resilience against static failures. Conversely, Dai et al.~\cite{DBLP:conf/spaa/DaiF023}, develop a $2$-resilient routing algorithm against static failures by additionally matching the source. In Theorem~\ref{thm: counter-Dai-SPAA'23}, we will first demonstrate that  the  $2$\nobreakdash-resilient source-matched routing algorithm proposed by Dai et al.~\cite[Algorithm~$1$]{DBLP:conf/spaa/DaiF023}  cannot work for two semi-dynamic failures anymore.

\begin{figure}[t]
	\centering
	\begin{tikzpicture}


	\tikzset{mynode/.style={circle,draw,fill=black!80, outer sep=0pt,inner sep=1.5pt}};
	
	\tikzset{source/.style={circle,draw,fill=green!80, outer sep=0pt,inner sep=2pt}};
	
	\tikzset{destination/.style={circle,draw,fill=blue!50, outer sep=0pt,inner sep=2pt}};
	
	
	\begin{scope}
	\draw[] node[mynode,  label=above:$v_1$]  (v1) at (-0.75,0)  {};

\draw[] node[mynode,  label=above:$v_3$]  (v3) at (0.75,0)  {};

\draw[] node[mynode,  label=above:$v_2$]  (v2) at (0,0.3)  {};

\draw[] node[mynode,  label=above:$v_4$]  (v4) at (0,-0.3)  {};

\draw[] node[mynode,  label=right:$v_5$]  (v5) at (1.75,0)  {};
\draw[] node[mynode,  label=left:$v_0$]  (v0) at (-1.75,0)  {};

\draw[] node[destination,  label=right:$t$]  (t) at (2.3,-0.6)  {};

\draw[] node[source,  label=left:$s$]  (s) at (-2.3,-0.6)  {};

\draw[thick] (v1) to[] (v2);
\draw[thick] (v2) to[] (v3);
\draw[thick] (v3) to[] (v4);
\draw[thick] (v4) to[] (v1);
\draw[thick] (v3) to[] (v5);
\draw[thick] (v0) to[] (v1);

\draw[thick] (s) to[] (v0);
\draw[thick] (t) to[] (v5);
\end{scope}

\begin{scope}[shift={(0,-1.2)}]
	\draw[] node[mynode,  label=above:$u_1$]  (v1) at (-0.75,0)  {};

\draw[] node[mynode,  label=above:$u_3$]  (v3) at (0.75,0)  {};

\draw[] node[mynode,  label=above:$u_2$]  (v2) at (0,0.3)  {};

\draw[] node[mynode,  label=above:$u_4$]  (v4) at (0,-0.3)  {};

\draw[] node[mynode,  label=right:$u_5$]  (v5) at (1.75,0)  {};
\draw[] node[mynode,  label=left:$u_0$]  (v0) at (-1.75,0)  {};

\draw[thick] (v1) to[] (v2);
\draw[thick] (v2) to[] (v3);
\draw[thick] (v3) to[] (v4);
\draw[thick] (v4) to[] (v1);
\draw[thick] (v3) to[] (v5);
\draw[thick] (v0) to[] (v1);

\draw[thick] (s) to[] (v0);
\draw[thick] (t) to[] (v5);

\end{scope}

\end{tikzpicture}
	\caption{Counter-example topology $G$ for $2$-resilient source-matched routing scheme against dynamic failures, where $s$ is the source and $t$ is the destination. Let $V'=\left\lbrace v_0,\ldots, v_5 \right\rbrace $ and $U'=\left\lbrace u_0, \ldots, u_5 \right\rbrace $. By symmetry, w.l.o.g., we can assume $\pi_{s}\left(\bot \right) =v_0$ when $F_{s}=\emptyset$. Then, we can show that each node $v\in V'\cup \{s\}$ must use a link-circular routing function, which has only two possible orderings for its neighbors, i.e., clockwise and counter-clockwise for the shown drawing. For instance, the clockwise and counter-clockwise orderings for $v_1$ are $\left<v_{0},v_2,v_4\right>$ and  $\left<v_0,v_4,v_2\right>$, respectively.  We can further show that $v_1$ and $v_3$ must have the same type of orderings (clockwise or counter-clockwise), otherwise a routing loop can occur, e.g., if $v_1$ and $v_3$ select clockwise and counter-clockwise orderings respective, then a loop $\left(s,v_{0},v_1,v_2,v_3,v_4, v_1,v_0,s\right) $ occurs for a static failure $F=\left\lbrace s,u_0 \right\rbrace $. When $v_1$ and $v_3$ both use the clockwise (resp., counter-clockwise) ordering, for a dynamic failure $\{v_2, v_3\}\in F$ (resp., $\{v_3, v_4\}\in F$) , let $\left(v_2, v_3 \right) $ (resp., $\left(v_4, v_3 \right) $) be down and $\left(v_3, v_2 \right) $ (resp., $\left(v_3, v_4 \right)$)  be up. Then,  a routing loop: $\left( s,v_{0},v_1,v_2,v_1,v_4, v_3,v_2,v_1\right) $ (resp., $\left( s,v_{0},v_1,v_4,v_1,v_2, v_3,v_4,v_1\right)$) appears and the packet originated at $s$ cannot  reach $t$ even there is an $s-t$ a path containing no dynamic failure. A similar proof can be given when $\pi_{s}\left(\bot \right) =u_0$ for $F_{s}=\emptyset$.}
	 \label{fig:fig3}
\end{figure}
\begin{theorem}
There exists a general graph $G$, where the  $2$\nobreakdash-resilient source-matched routing algorithm proposed by Dai et al.~\cite[Algorithm~$1$]{DBLP:conf/spaa/DaiF023} for static failures cannot work for two semi-dynamic failures in $G$ even if $G$ admits a  $2$\nobreakdash-resilient source-matched routing against semi-dynamic failures. \label{thm: counter-Dai-SPAA'23}
\end{theorem}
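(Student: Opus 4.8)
The plan is to exhibit one graph $G$ together with the concrete output of Dai et al.'s algorithm on it, break that output with a carefully \emph{timed} pair of semi-dynamic failures, and then repair it to certify the ``even if'' clause. I would use the graph of Figure~\ref{fig:fig_SPAA}, built from two symmetric gadgets (on $v_0,\ldots,v_{11}$ and $u_0,\ldots,u_{11}$) joined at $s$ and $t$, each gadget being $2$-connected. First I would run Dai et al.'s algorithm~\cite[Algorithm~$1$]{DBLP:conf/spaa/DaiF023}: it extracts a \emph{kernel graph} $\mathcal{G}\subseteq G$, which here deletes exactly the four edges $\{v_1,v_4\},\{v_2,v_3\},\{u_1,u_4\},\{u_2,u_3\}$, and then installs a link-circular forwarding function at each node of $\mathcal{G}$, yielding a scheme $\Pi^{(s,t)}$. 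Since $\mathcal{G}$ preserves $s$--$t$ connectivity under any two \emph{static} failures of $G$, I would invoke the correctness of their algorithm to confirm that $\Pi^{(s,t)}$ is $2$-resilient against static failures, so it is a legitimate output to attack.

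The core step is to break $\Pi^{(s,t)}$ using $F=\{\{v_1,v_2\},\{v_7,v_9\}\}$ where the \emph{ordering} of failures, not their mere presence, does the damage. I would schedule $\{v_7,v_9\}$ to be already down when the packet reaches it, but let $\{v_1,v_2\}$ fail only \emph{after} the packet has crossed it (permissible since semi-dynamic links are fail-stop yet need not fail simultaneously). Following the red rules from $s$, the packet traverses $(s,v_{10},v_0,v_5,v_1,v_2,v_7)$ over the still-up link $v_1\to v_2$ and meets the first failure at $(v_7,v_9)$; it is rerouted along the green rules to $(v_7,v_2)$ and now meets the second failure $(v_2,v_1)$, which has gone down in the meantime and stays down. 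In $\mathcal{G}$ the node $v_2$ retains only its edges to $v_1$ and $v_7$, both now blocked, so the packet is trapped in the component $\{v_2,v_7\}$ of $\mathcal{G}\setminus F$. The point I would emphasize is that $s$--$t$ is still connected in $G\setminus F$, e.g.\ via $(v_7,v_2,v_3,v_4,v_8,v_9,v_{11},t)$, using precisely the edge $\{v_2,v_3\}$ that the kernel reduction discarded: safe to drop under static failures (where a simultaneous $\{v_1,v_2\}$ would have diverted the packet earlier) but indispensable once that failure is delayed.

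Finally, to establish that $G$ nonetheless \emph{admits} a $2$-resilient source-matched scheme against semi-dynamic failures, I would modify $\Pi^{(s,t)}$ by reinstating the discarded edges at $v_1,v_2$, i.e.\ enforcing a clockwise link-circular order at $v_1$ and $v_2$ that includes $\{v_1,v_4\}$ and $\{v_2,v_3\}$, so that the trapped packet escapes via $v_2\to v_3$ and proceeds to $t$. The hard part will be this last step: verifying resilience against \emph{every} admissible pair of semi-dynamic failures, since non-simultaneous timing forces one to re-examine each direction a packet may retrace and to rule out loops or traps for all failure orderings (a difficulty absent in the static analysis of~\cite{DBLP:conf/spaa/DaiF023}). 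I would organize this as a structured case analysis over which gadget(s) the two failures lie in and over the order in which the packet first observes each failed link, using the $2$-connectivity of each gadget to guarantee an alternate exit; the $v$/$u$ symmetry roughly halves the cases.
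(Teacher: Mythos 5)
Your proposal is correct and follows essentially the same route as the paper: the identical counter-example graph of Fig.~\ref{fig:fig_SPAA}, the same kernel-graph scheme $\Pi^{(s,t)}$, the same semi-dynamic failure set $F=\left\lbrace \{v_1,v_2\},\{v_7,v_9\}\right\rbrace$ with the link $\{v_1,v_2\}$ failing only after the packet has crossed it, the same trapped component $\{v_2,v_7\}$ versus the surviving path through $\{v_2,v_3\}$, and the same repair via clockwise link-circular routing at $v_1$ and $v_2$. Your explicit treatment of the failure timing and your acknowledgement that verifying the repaired scheme requires a case analysis are, if anything, more careful than the paper's caption-level argument.
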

\begin{proof}
We give a counter-example in Fig.~\ref{fig:fig_SPAA}, where Theorem~\ref{thm: counter-Dai-SPAA'23} can be applied. The main ideas of the proof are explained in the caption of Fig.~\ref{fig:fig_SPAA}.
\end{proof}
Next, by Theorem~\ref{thm: 2-resilient-with-1bit}, we reveal that no $2$\nobreakdash-resilient source-matched routing scheme  can tolerate two dynamic failures and we illustrate the proof ideas of  Theorem~\ref{thm: 2-resilient-with-1bit} in Fig.~\ref{fig:fig3}.
\begin{theorem}\label{thm: 2-resilient-with-1bit}
There exists a $2$\nobreakdash-edge-connected (planar) graph $G$ as shown in  Fig.~\ref{fig:fig3}, where  it is impossible to have a $2$\nobreakdash-resilient source-matched routing scheme against dynamic failures without rewriting bits in packet headers.
\end{theorem}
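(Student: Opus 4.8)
The plan is to exhibit the explicit graph $G$ of Fig.~\ref{fig:fig3} and to argue that, for the single pair $(s,t)$, any deterministic source-matched scheme must degenerate into \emph{link-circular} forwarding at the nodes bordering the central $4$-cycle on $\{v_1,v_2,v_3,v_4\}$; I then defeat every such scheme with a failure set of size at most two. A first observation that simplifies everything is that, since the pair $(s,t)$ is fixed, each function $\pi_v^{(s,t)}(\cdot,F_v)$ depends only on the in-port and the set of active incident links, so for this instance source-matching buys nothing beyond a per-node forwarding table. The graph also has a reflection symmetry exchanging $v_i\leftrightarrow u_i$ and fixing $s$ and $t$, so without loss of generality I assume that a packet originated at $s$ with $F_s=\emptyset$ is sent to $v_0$, i.e.\ $\pi_s(\bot)=v_0$; the case $\pi_s(\bot)=u_0$ is symmetric.

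First I would show that $2$-resilience forces a link-circular function at $s$ and at every node of $V'=\{v_0,\dots,v_5\}$. For the degree-$2$ nodes $v_0,v_2,v_4,v_5$ (and $s$) this is immediate: a packet entering such a node must be pushed out along its other incident link whenever that link is up, since any other choice makes no forward progress and is trivially non-resilient, and it must bounce back when that link is down---which is exactly the link-circular rule. The substantive part is the two degree-$3$ gateways $v_1$ (the only entry from the $s$-side, reached through $v_0$) and $v_3$ (the only exit toward $t$, through $v_5$). Here I would argue that any forwarding table at $v_1$ or $v_3$ that deviates from the cyclic ``advance to the next live port'' rule admits at most two failures that either strand a packet at a dead-end or close a forwarding loop inside the $4$-cycle before $v_5$ is ever reached; enumerating the few non-link-circular tables on three ports and attaching a witnessing failure to each is the bookkeeping at this stage.

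Once link-circularity is established, each of $v_1,v_3$ has exactly two admissible cyclic orders in the planar drawing (e.g.\ $\langle v_0,v_2,v_4\rangle$ versus $\langle v_0,v_4,v_2\rangle$ at $v_1$), and I split into two cases. If $v_1$ and $v_3$ use \emph{opposite} orientations, I would fail the single edge $\{s,u_0\}$ (a static failure already suffices, so this case is excluded a fortiori for dynamic failures): the packet runs $(s,v_0,v_1,v_2,v_3,v_4,v_1,v_0,s)$ and then bounces at $s$ forever, never visiting $v_5$, although $s$ and $t$ stay connected through $v_0,v_1,v_2,v_3,v_5$. If instead $v_1,v_3$ share an orientation, say both clockwise, I would use the genuinely \emph{dynamic} failure $\{v_2,v_3\}$, declaring the link \emph{down} exactly when a packet at $v_2$ tries to advance to $v_3$ but \emph{up} when a packet at $v_3$ advances to $v_2$. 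Tracing the forced moves yields $(s,v_0,v_1,v_2,v_1,v_4,v_3,v_2,v_1,\dots)$, which settles into the loop $v_1\to v_4\to v_3\to v_2\to v_1$ and again never uses the exit $v_3\to v_5$, even though $s$ and $t$ remain connected via $v_0,v_1,v_4,v_3,v_5$; the counter-clockwise subcase is symmetric, using $\{v_3,v_4\}$ in place of $\{v_2,v_3\}$.

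The crucial point---and what makes this a strictly dynamic phenomenon---is that the loop in the uniform case needs $\{v_2,v_3\}$ to be down for the attempted $v_2\to v_3$ step but up for the later $v_3\to v_2$ traversal; a fail-stop (semi-dynamic) link could not flip back, so the argument genuinely exploits link flapping, in line with the achievability results for the weaker failure models. I expect the main obstacle to be the forcing step for the two degree-$3$ gateways: proving that \emph{every} deviation from link-circular forwarding at $v_1$ or $v_3$ is punishable by at most two failures requires a careful but finite case analysis over the possible in-port/out-port tables, and in each constructed instance one must verify that $s$ and $t$ stay connected in $G\setminus F$, so that the resulting looping or stranding truly violates $2$-resilience rather than being excused by disconnection.
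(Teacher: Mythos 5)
Your proposal is correct and follows essentially the same route as the paper's proof: fix $\pi_s(\bot)=v_0$ by symmetry, force link-circular forwarding on $V'\cup\{s\}$ (with the degree-$3$ nodes $v_1,v_3$ handled by exhibiting a two-failure witness for each non-circular table), and then defeat the opposite-orientation case with the static failure $\{s,u_0\}$ and the same-orientation case with a flapping $\{v_2,v_3\}$ (resp.\ $\{v_3,v_4\}$) that is down on the first hit and up afterwards. The ``bookkeeping'' step you flag as the main obstacle is exactly the short case analysis the paper carries out ($\pi_v(x)=y$, $\pi_v(y)=x$ for two of the three ports, punished by failures forcing the only surviving $s$--$t$ path through the third port), so there is no gap.
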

\begin{proof}
	We first assume that a $2$-resilient source-matched forwarding scheme $\Pi^{(s,t)}$  exists in the graph $G $ as shown in Fig.~\ref{fig:fig3}.
Then, for contradiction, we show that a packet  originated at $s$ cannot be routed to the destination $t$ anymore, but is forwarded in a loop  when there are   two dynamic failures $F$  in $G$, even if there exists an $s-t$ path  in the graph $G\setminus F$.

Let $V'=\left\lbrace v_0,\ldots, v_5 \right\rbrace $ and $U'=\left\lbrace u_0, \ldots, u_5 \right\rbrace $. For each  node $v\in V\left( G\right) $, we define a forwarding function $\pi_{v}\left(u,F_v \right) $ at  $v$, where $F_{v}\subseteq F$ denotes a subset of dynamic failures $F$ that incidents on  $v$ and the source-destination $(s,t)$ is used implicitly in this proof. Clearly, the induced graphs $G[U']$ and $G[V']$ are symmetric. By symmetry,  when $F_{s}=\emptyset$, an arbitrary node in $\{v_{0}, u_{0}\}$ can be chosen as the outgoing port for the packet originated at $s$. W.l.o.g., we  assume that $v_{0}$ is chosen, i.e., $\pi_{s}\left(\bot \right) =v_0$ for $F_{s}=\emptyset$.

Let dynamic failures $F\subseteq E\left( G\right) $ be a set of arbitrary links, s.t.,  $\left|F \right|\leq 2 $ and $F$ can be empty. Next, we claim that, given $\pi_{s}\left(\bot \right) =v_0$ with $F_s=\emptyset$,  for each node $v\in V'\cup \{s\} $, its routing function must be \emph{link-circular} even when $F$ are static failures. If $v\in  V'\cup \{s\}$ has  $\Delta_{G\setminus F}\left( v\right)=1$, then $\pi_{v}\left(u,  F_v\right) =u$, where $\pi_{v}\in\Pi^{(s,t)} $ and $u\in E_{G\setminus F}\left( v\right) $ denotes its unique neighbor in $G\setminus F$, otherwise packets get stuck at $v$. This case can be thought as a special case of the link-circular forwarding.  If $v\in  V'\cup \{s\} $ has  $\Delta_{G\setminus F}\left( v\right)=3$, i.e., $\Delta_{G}\left( v\right)=\Delta_{G\setminus F}\left( v\right)$ and $F_v=\emptyset$, a \emph{non-link-circular} forwarding function at $v$ must imply $\exists x,y\in N_{G\setminus F}\left(v\right): \pi_{v}\left(x\right) =y$ and  $\pi_{v}\left(y  \right) =x$, where $N_{G\setminus F}\left( v\right)  =\left\lbrace x, y, z\right\rbrace $ are neighbors of $v$ in $G\setminus F$. However, a non-link-circular forwarding function cannot be $2$-resilient if the only  $s-t$ path remained in $G\setminus F$ has to go through the link $\{v, z\}$.
For example,  when $F=\left\lbrace \{s,u_{0}\},\{v_2,v_3\}\right\rbrace $ and $\Delta_{G\setminus F}\left(v_{1}\right)=3$, if a non-link-circular forwarding function  has $\pi_{v_1}\left(v_{0}  \right) =v_2$ and  $\pi_{v_1}\left(v_{2} \right) =v_{0}$,  then a packet starting at $s$ cannot approach $t$ anymore even if $s-t$ is connected via $\{v_1,v_4\}$. A similar argument can be established if $\pi_{v_1}\left(v_{0} \right) =v_4$ and  $\pi_{v_1}\left(v_{4}\right) =v_{0}$, and  $F=\left\lbrace \{s,u_{0}\},\{v_4,v_3\}\right\rbrace $. Moreover, for each $v\in  V'\cup \{s\}$ having  $\Delta_{G\setminus F}\left( v\right)=2$, a non-link-circular forwarding function   at $v$ must imply $\exists x\in N_{G\setminus F}\left(v\right): \pi_{v}\left(x\right) =x$ for $N_{G\setminus F}\left( v\right)  =\{x, y\}$, which can make $v$ become a dead-end node, i.e., a packet cannot traverse from one neighbor of $v$ to the other neighbor to approach $t$ anymore. Therefore, each $v\in  V'\cup \{s\} $ must have a link-circular forwarding function.

If $v\in  V'\cup \{s\} $ has  $\Delta_{G\setminus F}\left( v\right)=2$, then its link-circular forwarding function is unique, i.e., from one neighbor to the other neighbor.  If $v\in V'\cup \{s\} $ has  $\Delta_{G\setminus F}\left( v\right)=3$, where $F_v=\emptyset$, then there are two possible circular orderings for its neighbors $N_{G\setminus F}\left( v\right) $, i.e., one clockwise and the other counter-clockwise based on their geometric locations in Fig.~\ref{fig:fig3}. For example, at $v_1$, the clockwise ordering of $N_{G}\left( v_1\right) $ is $\left<v_{0}, v_2,v_4\right>$ and the counter-clockwise ordering of $N_{G}\left( v_0\right)$ is $\left<v_{0}, v_4, v_2\right>$. Thus, for each $v\in  V'\cup \{s\} $ that has  $\Delta_{G\setminus F}\left( v\right)=3$, its link-circular forwarding function can choose one of two options: clockwise and counter-clockwise, arbitrarily.

Fixing $\{s,u_{0}\}\in F$, let $\{v_2,v_3\}\in F$ (resp., $\{v_4,v_3\}\in F$) be a dynamic failure in $G\left[ V'\cup \{s,t\}\right] $ if $v_0$ and $v_9$ both have the clockwise (resp., counter-clockwise) of link-circular forwarding functions.  In this case, even if $s,t$ are connected in $G\left[ V'\cup \{s,t\}\right] $, a packet originated at $s$ will enter a forwarding loop: $\left(v_0, v_1, v_2, v_1, v_4, v_3, v_2, v_1\right) $ (resp., $\left(v_0, v_1, v_4, v_1, v_2, v_3, v_4, v_1\right) $), where the dynamic failure $\{v_2,v_3\}$ (resp., $\{v_4,v_3\}$) is down only for the first hitting but always up since then,  but never traverses the link $\{v_3, v_{5}\}$ to arrive at $t$. When  $v_1$ and $v_3$ have the different type, by fixing $\{s,u_{0}\}\in F$, even if there is no dynamic failure in $G\left[ V'\cup \{s,t\}\right] $, a forwarding loop: $(s, v_0, v_1, v_2, v_3, v_4, v_1, v_0, s)$ occurs if $v_1$ and $v_3$ take  forwarding functions of clockwise and counter-clockwise orderings respectively, otherwise another forwarding loop: $(s, v_0, v_1, v_4, v_3, v_2, v_1, v_0, s)$ exists. Moreover, a similar discussion can be applied  when $\pi_{s}\left(\bot \right) =u_0$ for $F_{s}=\emptyset$.

Thus, no  $2$-resilient source-matched  forwarding scheme against dynamic failures for $(s,t)$ exists  in~Fig.~\ref{fig:fig3}.
\end{proof}
For the counter-example graph $G$  as shown in~Fig.~\ref{fig:fig3}, fixing $\pi_{s}\left(\bot \right) =v_0$ when $F_{s}=\emptyset$, the routing functions at  $v_1$ and $v_3$ cannot know whether an incoming packet currently should either continue searching  a path towards $t$ in $G\left[V'\cup\{s,t\}\right]$ or finding a path back to $s$ in $G\left[V'\cup\{s,t\}\right]$ to try paths in $G\left[U'\cup\{s,t\}\right]$. Simply, by rewriting one bit in packet headers, the source-matched routing functions can resolve this weakness to achieve $2$-resilience against dynamic failures again in $G$. Now, a fundamental question arises: Can we achieve  $k$-resilience against dynamic failures in a general graph by only modifying $O\left( \log k\right) $ bits? 

In light of Theorem~\ref{thm: no-perfect-resilience}, we demonstrate that achieving perfect resilience through the modification of $O\left( \log k\right) $ bits is impossible.
\begin{figure}[t]
	\centering
	\begin{tikzpicture}


	\tikzset{mynode/.style={circle,draw,fill=black!80, outer sep=0pt,inner sep=1.5pt}};
	
	\tikzset{source/.style={circle,draw,fill=green!80, outer sep=0pt,inner sep=2pt}};
	
	\tikzset{destination/.style={circle,draw,fill=blue!50, outer sep=0pt,inner sep=2pt}};
	

%

  \draw[very thick] (-0.8,2.) node[mynode,label=below:\large$v_{i}$]{} -- (0.8,2.) node[mynode,label=below:\large$v_{j}$]{};

\draw[blue, fill=green, -{Triangle[width = 18pt, length = 8pt]}, line width = 10pt] (0.0, 1.8) -- (0.0,1.2);

%

%
%


\node[cloud,
draw =blue,
text=black,
fill = gray!10,
aspect=1.5,
cloud puffs = 16, thick] (c) at (0,0) {\Large$H^{i,j}$};

\draw[] node[mynode,  label={[shift={(0.2,-0.1)}]:\large$u^{i,j}_t$}]  (v5) at (0.88,0)  {};
\draw[] node[mynode,  label={[shift={(-0.2,-0.1)}]:\large$u^{i,j}_s$}]  (v0) at (-0.88,0)  {};


\draw[dashed, thick] (-2.3,-0.9) rectangle node{} (2.3,0.9);

\draw[] node[mynode,  label=above:\large $v_{j}$]  (vj) at (1.8,0)  {};
\draw[] node[mynode,  label=above:\large$v_{i}$]  (vi) at (-1.8,0)  {};

\draw[very thick] (vj) to[] (v5);
\draw[very thick] (vi) to[] (v0);

	
	\draw[dashed, thick] (-2.3,-0.9) rectangle node{$G_{i,j}$} (-1,-0.3);

\draw[] node[]  () at (1.8,0)  {$$};

\end{tikzpicture}
	\caption{An illustration of constructing the graph $G'$ in the proof of Theorem~\ref{thm: no-perfect-resilience}, where  we replace each edge $\{v_i,v_j\}$ in the graph $G$ as shown in Fig.~\ref{fig:fig3} with another graph $G_{i,j}= H^{i,j}\cup \left\lbrace \{v_i, u_s^{i,j}\}, \{u_t^{i,j},v_j\} \right\rbrace$.}\label{fig: non-constant-bits}
\end{figure}
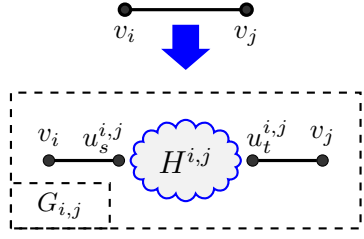
\begin{theorem}
	There exists graphs for which any resilient source-matched routing that can tolerate $2k$ dynamic failures needs rewriting of at least $k$ bits in packet headers for $k\in \mathbb{N}$. \label{thm: no-perfect-resilience}
\end{theorem}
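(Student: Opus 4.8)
The plan is to construct a family of graphs by nesting the gadget of Fig.~\ref{fig: non-constant-bits} to depth $k$ and to establish the bit lower bound by induction on the nesting depth. Let $G^{(1)}$ be the base graph of Fig.~\ref{fig:fig3}, for which Theorem~\ref{thm: 2-resilient-with-1bit} already rules out any source-matched scheme tolerating $2$ dynamic failures with $0$ rewritable bits. For $k\geq 2$, I would form $G^{(k)}$ from $G^{(1)}$ by replacing each critical diamond edge $\{v_i,v_j\}$ with the gadget $G_{i,j}=H^{i,j}\cup\{\{v_i,u_s^{i,j}\},\{u_t^{i,j},v_j\}\}$, where the cloud $H^{i,j}$ is a fresh copy of $G^{(k-1)}$ whose source and destination are identified with the attachment points $u_s^{i,j}$ and $u_t^{i,j}$. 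Crossing one gadget from $v_i$ to $v_j$ is then itself a source-matched routing instance of depth $k-1$, so that $G^{(k)}$ encodes a complete binary decision tree of depth $k$, and a correct scheme must behave like a depth-$k$ DFS over that tree.

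The statement to prove by induction is that every source-matched scheme that is $2k$-resilient against dynamic failures on $G^{(k)}$ rewrites at least $k$ bits, the base case $k=1$ being exactly Theorem~\ref{thm: 2-resilient-with-1bit}. For the inductive step I read the routing at two levels: at the outer diamond the scheme must, just as in the base graph, tell its \emph{forward} phase (descending into a gadget toward $t$) apart from its \emph{backtracking} phase (returning toward $s$ to try the sibling branch), which by the base obstruction costs one bit that must survive an entire gadget traversal; and crossing the currently active gadget is, by the induction hypothesis applied to its embedded copy of $G^{(k-1)}$, an instance already consuming $k-1$ bits against $2(k-1)$ internal failures. Charging $2$ failures to the outer diamond and $2(k-1)$ to the active gadget --- i.e.\ $2$ failures at each level along one root-to-leaf branch of the nesting tree --- keeps the adversary within the budget $2+2(k-1)=2k$, and in every scenario one verifies that an $s-t$ path still survives (via a sibling branch or an untouched gadget), so that the \emph{non-trap} assumption and the $2k$-resilience requirement genuinely bind.

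The crux, and the step I expect to be the main obstacle, is to show that the outer bit and the inner $k-1$ bits cannot be collapsed into the same $k-1$ physical bits. The outer forward/backtracking bit must be retained \emph{across} a whole gadget traversal, whereas the inner routing, being oblivious to the outer context, may overwrite the entire header while solving its own depth-$(k-1)$ subproblem; the header is the only channel that can carry state from the instant the packet enters a gadget to the instant it re-emerges. I would formalize this with an adversarial pigeonhole over the $2^k$ scenarios $F_x$ indexed by $x\in\{0,1\}^k$, one binary choice per level. The failures distinguishing these scenarios are placed so as to be non-incident to a fixed decision node $w$; hence the local view at $w$ --- in-port, incident active links, and the pair $(s,t)$ --- is identical in all scenarios, and the next hop chosen at $w$ depends only on the $b$ header bits. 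Correct delivery forces pairwise different continuations whenever the surviving $s-t$ paths of two scenarios diverge at $w$, so if $b<k$ then some $F_x,F_{x'}$ with $x\neq x'$ reach $w$ in the very same configuration yet demand different next hops; the deterministic source-matched function must then err on one of them and loop, although $s-t$ stays connected, a contradiction.

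The remaining bookkeeping is twofold: first, to check that the recursive replacement preserves $2$-edge-connectivity and the left/right symmetry on which Theorem~\ref{thm: 2-resilient-with-1bit} rests, so that every level genuinely reproduces the base ambiguity rather than a degenerate one; and second, to verify that the adversary can keep all $2^k$ scenarios mutually consistent with the link states the packet has observed up to the collision, toggling each distinguishing link to be \emph{down only on its first hitting} exactly as in the base construction. Combining these yields $b\geq k$, and therefore tolerating $2k$ dynamic failures requires rewriting at least $k$ bits.
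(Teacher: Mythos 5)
Your proposal follows essentially the same route as the paper: the identical recursive edge-replacement construction $G_{i,j}=H^{i,j}\cup\bigl\{\{v_i,u_s^{i,j}\},\{u_t^{i,j},v_j\}\bigr\}$ applied to the base graph of Fig.~\ref{fig:fig3}, induction on $k$ with Theorem~\ref{thm: 2-resilient-with-1bit} as the base case, and the same failure budget of $2$ failures at the outer diamond plus $2(k-1)$ inside the one active gadget. The only difference is that you explicitly flag, and sketch a pigeonhole argument for, the additivity of the outer forward/backtracking bit with the inner gadget's $k-1$ bits --- a step the paper's proof asserts rather than argues --- so your elaboration is compatible with, and if anything more careful than, the published argument.
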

\begin{proof}
We prove Theorem~\ref{thm: no-perfect-resilience} by induction on $k$.
Theorem~\ref{thm: 2-resilient-with-1bit} implies that  there is a graph $G=\left( V,E\right)$   as shown in~Fig.~\ref{fig:fig3}, where any $2$\nobreakdash-resilient source-matched routing for $\left( v_s, v_t\right) $ needs rewriting at least one bit, proving the initial case of $k=1$.   We assume that there is a general graph $H=\left(U,E_U \right) $, where a $2k$\nobreakdash-resilient source-matched routing  against dynamic failures    for a source-destination pair $\left( u_s, u_t\right) $ (resp., $\left( u_t, u_s\right) $) with $u_s, u_t\in U$  needs rewriting at least $k\ge 2$ bits.

Now, as illustrated in Fig.~\ref{fig: non-constant-bits}, we can construct another graph $G'=\left(V', E' \right) $ by replacing each edge $\{v_i, v_j\}\in E$ in $G$  with a graph $G_{i,j}= H^{i,j}\cup \left\lbrace \{v_i, u_s^{i,j}\}, \{u_t^{i,j},v_j\} \right\rbrace$, where $H^{i,j}=\left( U^{i,j}, E^{i,j}_U\right) $ is isomorphic to $H$, i.e.,  $U^{i,j}=\left\lbrace u^{i,j}_\ell:  u_\ell \in U\right\rbrace $ and  $E_U^{i,j}=\left\lbrace \{u^{i,j}_\ell, u^{i,j}_o\}:  \{u_\ell, u_o\} \in E_U\right\rbrace$,
and nodes $ u_s^{i,j}, u_t^{i,j}\in U^{i,j}$.  We note that we use $v_s$ and $s$ (resp., $v_t$ and $t$) interchangeably in this proof.

Next, we claim that  any $\left( 2k+2\right) $\nobreakdash-resilient source-matched routing for $\left( v_s, v_t\right) $ in $G'$  needs rewriting at least $k+1$ bits. Let $F'=F^{i,j}\cup F$ be any $2k+2$ dynamic failures in $E'$, where $F^{i,j}\subset E_U^{i,j}$ for $\{v_i, v_j\}\in E$ has $\left| F^{i,j}\right| =2k$ and $F\subseteq \left\lbrace \left\lbrace  \{v_i, u_s^{i,j}\}, \{u_t^{i,j},v_j\}\right\rbrace : \{v_i, v_j\}\in E \right\rbrace $  has $\left| F\right| =2$.

In the graph $G$ as shown in Fig.~\ref{fig:fig3}, we always assume $\pi_{s}\left(\bot \right) =v_0$ for  $F_{s}=\emptyset$.
Let $F^*=\{e_1, e_2\}$ denote two dynamic failures in $G$. For example,  when $e_1=\{v_5, t\}$ and $e_2=\{v_1, v_4\}$,  the packet starting at $s$  meets the first failure $\left( v_5,t\right) $, and it has to go through $\left(v_2, v_1 \right) $ back to $v_1$ since $\{v_1, v_4\}$ failed.  Clearly,   one bit in the packet header must be rewritten to inform $v_1$ whether the packet has already visited $v_3$, s.t.,  $v_1$ can decide forwarding it  to $s$ or $v_4$ provided that $\{v_1, v_4\}$ is recovered. Similarly, for $F^*$ in $G'$, we set $\{u_t^{5,t}, v_t\}, \{u_t^{1,4},v_4\}\in F$ and $F^{i,j}=F^{1,2}$, we still need one bit at $v_1$ to indicate whether the packet has already visited $v_3$. Still, to go through $H^{1,2}$ to arrive at $v_1$, we need rewriting of additional $k$ bits under failures $F^{i,j}$.
Thus, we need rewriting $k+1$ bits for $2k+2$ dynamic failures in $G'$. For other failure cases in $G$, we can similarly map them  to scenarios in~$G'$.
\end{proof}

\section{Conclusions and Future Work}\label{sec:conclusion}
This paper explored the achievable resilience and limitations of failover routing mechanisms in the presence of static, semi-dynamic and dynamic failures. Our results demonstrate that achieving the ideal resilience, i.e., the $(k-1)$-resilience in $k$-edge-connected graphs, for $k\leq 5$ is possible for dynamic failures and can be extended to any $k$ by rewriting $\log k$ bits in packet headers. However, we find out that the previously proposed 3-bits header-rewriting algorithm by Chiesa et al.~\cite{DBLP:journals/ton/ChiesaNMGMSS17} falls short of achieving the ideal resilience  for dynamic failures, although it remains effective for semi-dynamic failures.  Pessimistically, our theorems on general graphs indicate that only 1-resilience is attainable without bit rewriting, and achieving arbitrary $k$-resilience against dynamic failures becomes impossible even with the ability to rewrite $\log k$ bits.

Our work leaves open several interesting avenues for future research, particularly in exploring the ideal $k$-resilience of an arbitrary $k$ by only rewriting $O\left( 1\right) $ bits in more specific dynamic scenarios, both analytically and empirically.

\bibliography{literature_clean_short}

\appendix

\section{First Insights for Ideal Resilience against Static Failures}\label{sec: techniques}
In this section, we initially present the routing techniques proposed by Chiesa et al.~\cite{DBLP:journals/ton/ChiesaNMGMSS17} to achieve $\left(k-1 \right) $-resilience against static failures in $k$-connected graphs $G$. We will demonstrate that the results established for static failures can be effectively adapted for dynamic failures.

Chiesa et al.~\cite{DBLP:journals/ton/ChiesaNMGMSS17} leverage  \emph{a set of $k$ arc-disjoint arborescences}~\cite{TARJAN197451}, in a $k$\nobreakdash-connected graph $G$ to devise their resilient failover protocols.

\subparagraph*{Arc-Disjoint Arborescences.}
An \emph{arborescence $T$} of a graph $G=\left(V,E \right) $ is a \emph{directed spanning tree} of $G$,  rooted at a node $t\in V$, s.t.,   each node $v\in V\setminus \{t\}$ has a unique directed path  from $v$ to $t$ on $T$. A set of arborescences $\mathcal{T}=\left\lbrace T_1, \ldots, T_k \right\rbrace $ of $G$ is \emph{arc-disjoint} (resp., \emph{edge-disjoint}) if  two arbitrary arborescences  $T_i\in \mathcal{T}$ and $T_j\in \mathcal{T}\setminus T_i$ do not share any arc (resp., any edge after removing directions of arcs on $T_i$ and $T_j$). We note that two arc-disjoint arborescences can share common edges. We can compute  $k$ arc-disjoint arborescences in a $k$-edge-connected graph efficiently~\cite{TARJAN197451}, both in theory (in $O\left( \left| E\right| k\log n + nk^4\log^2 n\right)$~\cite{10.5555/1347082.1347132}) and in practice~\cite{DBLP:conf/wea/GeorgiadisKMN22}.

\begin{lemma}[\hspace{1sp}{\cite[Lemmas~$4$ and $5$]{DBLP:journals/ton/ChiesaNMGMSS17}}]
	For any $2k$\nobreakdash-connected (resp., $\left( 2k+1\right) $\nobreakdash-connected) graph $G$, with $k\ge 1$, and a node $t\in V$, there exist $2k$ (resp., $2k+1$) \emph{arc-disjoint arborescences} $T_1, \ldots, T_{2k}$ (resp., $T_1, \ldots, T_{2k+1}$) rooted at $t$ such that $T_1, \ldots, T_{k}$ do not share edges with each other and $T_{k+1}, \ldots, T_{2k}$ do not share edges with each other. \label{lem: arc-disjoint-arborescences}
\end{lemma}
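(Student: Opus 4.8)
The plan is to derive both halves of the statement from two classical tools: Nash-Williams' \emph{balanced orientation} theorem and Edmonds' \emph{disjoint branchings} theorem. Recall that Edmonds' theorem, applied with root $t$, yields $k$ arc-disjoint spanning in-arborescences converging to $t$ exactly when every vertex set $S$ with $t\notin S$ satisfies the cut condition $d^+(S)\ge k$. The structural observation driving the whole argument is the distinction the lemma cares about: \emph{arc}-disjoint arborescences in the undirected $G$ correspond to arc-disjoint in-arborescences in the bidirected graph $\widehat G$ (each edge $\{u,v\}$ replaced by the two arcs $(u,v),(v,u)$), whereas any collection of arborescences living inside a \emph{single} orientation of $G$ is automatically pairwise \emph{edge}-disjoint, since within one orientation an edge offers only one arc. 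The two edge-disjoint groups will therefore come from a single orientation and its reverse.

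First I would fix a balanced orientation $\vec G$ of $G$, i.e.\ one with $|d^+_{\vec G}(S)-d^-_{\vec G}(S)|\le 1$ on every cut, so that $\min(d^+_{\vec G}(S),d^-_{\vec G}(S))\ge\lfloor d_G(S)/2\rfloor\ge\lfloor \lambda(G)/2\rfloor = k$ for every proper nonempty $S$. In the $2k$-connected case this settles everything at once. Applying Edmonds' theorem to $\vec G$ (where $d^+_{\vec G}(S)\ge k$ for all $S\not\ni t$) produces $k$ arc-disjoint in-arborescences $T_1,\dots,T_k$ that, lying in the single orientation $\vec G$, are pairwise edge-disjoint; applying it to the reverse $\overleftarrow G$ (where $d^+_{\overleftarrow G}(S)=d^-_{\vec G}(S)\ge k$) yields a second edge-disjoint family $T_{k+1},\dots,T_{2k}$. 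Across the two families the arc sets are disjoint, because the first uses only forward arcs of $\vec G$ and the second only their reverses, so the full collection is arc-disjoint while each group is internally edge-disjoint, exactly as required.

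For the $(2k+1)$-connected case I would reuse the balanced orientation, which still gives $\min(d^+,d^-)\ge k$ and hence the same two edge-disjoint groups $T_1,\dots,T_k\subseteq\vec G$ and $T_{k+1},\dots,T_{2k}\subseteq\overleftarrow G$, for a total of $2k$ arc-disjoint arborescences; the remaining task is to extract one further arborescence $T_{2k+1}$ that is arc-disjoint from all of them, with no edge-disjointness constraint. The natural route is to show that the residual bidirected graph $\widehat G'$, obtained by deleting the arcs used by the first $2k$ trees, still satisfies the Edmonds cut condition $d^+_{\widehat G'}(S)\ge 1$ for every $S\not\ni t$. Parity helps here: on each minimum cut $d_G(S)=2k+1$ the balanced orientation splits as $\{k,k+1\}$, so one side is saturated by a group while the other retains a spare arc, and globally $\lambda(\widehat G)=\lambda(G)=2k+1$ guarantees $2k+1$ arc-disjoint arborescences exist at all.

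I expect this last point to be the main obstacle: one must control the arc usage of the two groups \emph{simultaneously on every cut}, not just on the minimum cuts. A single in-arborescence may cross a large (non-minimum) cut several times, so a careless packing could exhaust every arc of such a cut and leave no room for $T_{2k+1}$. The plan to overcome this is to choose the $2k+1$ arborescences jointly rather than greedily, invoking Edmonds' theorem in its matroid-union form (or a constrained arborescence-packing result) to pack $2k+1$ arc-disjoint in-arborescences in $\widehat G$ in which $k$ are confined to the forward arcs and $k$ to the backward arcs of the balanced orientation. Verifying that these arc-membership restrictions are compatible with the cut condition — equivalently, that the forward and backward arc sets each support $k$ arborescences while the surplus edge guaranteed by the odd connectivity supports the last one — is the crux, and is where I would concentrate the effort.
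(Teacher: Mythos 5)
First, note that this paper does not prove the lemma at all: it is imported verbatim from Chiesa et al.\ (their Lemmas~4 and~5), so the comparison is with the source's proof, whose even half your proposal essentially reconstructs. For the $2k$-connected case your argument is correct and is indeed the standard one (a suitable orientation plus Edmonds' branching theorem applied to the orientation and to its reverse, with edge-disjointness inside each group coming for free because one orientation offers only one arc per edge). However, your opening premise is false as stated: a general graph need \emph{not} admit an orientation with $\left|d^+(S)-d^-(S)\right|\le 1$ on every cut. Already in $K_{1,3}$, any orientation directs two of the three leaf edges the same way relative to the center, so the cut around those two leaves is split $2$--$0$. What is true, and is all your even case actually uses, is Nash-Williams' orientation theorem: $G$ has a $k$-arc-connected orientation (i.e.\ $\min\left(d^+(S),d^-(S)\right)\ge k$ on every cut) if and only if $G$ is $2k$-edge-connected. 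With that citation repaired, the $2k$ case is complete.

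The $(2k+1)$ case, by contrast, contains a genuine gap, which you partly concede. Two concrete problems. First, the parity heuristic (``one side is saturated by a group while the other retains a spare arc'') is unsound even on minimum cuts: for a cut $S$ with $d_G(S)=2k+1$ and, say, $k$ forward arcs, the $k$ trees of the first group must use all $k$ forward arcs, while the $k$ trees of the second group face $k+1$ reverse arcs --- but an in-arborescence may cross a cut more than once, so one of those $k$ trees can consume two reverse arcs and exhaust the cut, leaving nothing for $T_{2k+1}$; on non-minimum cuts even the $\{k,k+1\}$ split is unavailable, since (per the above) the globally balanced orientation does not exist. Second, the proposed rescue --- ``Edmonds' theorem in matroid-union form'' packing $2k+1$ arborescences with $k$ confined to forward arcs, $k$ to backward arcs, and one free --- is not an off-the-shelf result: arborescence packing under such arc-class confinement is a constrained (submodular-flow-type) feasibility question, and deriving its cut and partition conditions from $(2k+1)$-edge-connectivity is precisely the missing content; indeed, the analogous ``strengthened orientation'' route ($d^+(S)\ge k+1$ for all $S\not\ni t$, $d^+(S)\ge k$ otherwise) provably fails, e.g.\ in $K_4$ with $k=1$ a degree count gives total out-degree at least $7>6$. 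Note also that fixing one global orientation in advance is itself a restriction the lemma does not impose: each group only needs to be internally edge-disjoint, i.e.\ to induce \emph{some} partial orientation, which may depend on the packing. So the odd half --- exactly the part that requires work beyond the even half --- remains unproven in your proposal.
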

In Lemma~\ref{lem: arc-disjoint-arborescences}, the set of \emph{edge-disjoint} arborescences $T_1, \ldots, T_{k}$ (resp., $T_{k+1}, \ldots, T_{2k}$) will be called \emph{left arborescences} (resp., \emph{right arborescences}). For a set of $2k+1$ arc-disjoint arborescences $\mathcal{T}=\left\lbrace T_1, \ldots, T_{2k+1}\right\rbrace $ in a $\left( 2k+1\right) $\nobreakdash-connected graph $G$, there exist an arborescence  $T_{2k+1}\in \mathcal{T}$ that may share edges with left and right arborescences simultaneously.

\subparagraph*{Arborescences-Based Routing.}
Without matching the source, Chiesa et al.~\cite{DBLP:journals/ton/ChiesaNMGMSS17} developed a series of \emph{arborescences-based routing}  functions to achieve ideal resilience against $k-1$ \emph{static} failures in a $k$-connected graph of $k\ge 2$.  Hereinafter, unless specified otherwise, we use  $\mathcal{T}=\left\lbrace T_1, \ldots, T_{k}\right\rbrace $ to denote a set of $k$ arc-disjoint arborescences rooted at the same node $t$ in a $k$\nobreakdash-connected graph $G$.

Next, we will first present a number of elemental routing modes based on a set of $k$ arc-disjoint arborescences $\mathcal{T}$, which were introduced by  Chiesa et al.~\cite{DBLP:journals/ton/ChiesaNMGMSS17} to devise their more sophisticated routing functions, e.g., header-rewriting. 
For an arbitrary arborescence $T_i\in \mathcal{T}$, in a \emph{canonical mode}, a packet at a node $v\in V$ is routed along the unique $v-t$ path  defined on $T_i$~\cite{DBLP:journals/ton/ChiesaNMGMSS17}. 
If a packet traversing along $T_i\in \mathcal{T}$ in canonical mode hits a failure (arc) $\left(u,v\right) $ at a node $u$, where $(u,v)\in E\left(  T_i\right) $ and $\{u,v\}\in E$, Chiesa et al.~\cite{DBLP:journals/ton/ChiesaNMGMSS17} introduce two possible routing actions:
\begin{itemize}
	\item \emph{Next available arborescence}: After seeing the failed arc  $\left(u,v\right) $ on $T_i$, a packet will be rerouted on the next available arborescence $T_{\textnormal{next}}\in \mathcal{T}$ on a predefined ordering of arborescences in $\mathcal{T}$ starting at $u\in V$, i.e., $T_{\textnormal{next}}= T_{\left( j \mod k\right) }$, where $j\in \{i+1, \ldots, i+k\}$ is the minimum number, s.t., there is no  failed arc on $T_{\left( j \mod k\right) }$ starting at $u$. 
	
	\item \emph{Bounce back on the reversed arborescence:} a packet hitting a failure $\left(u,v\right) $ on $T_i$ at the node $u$ will be rerouted along the arborescence $T_{j}\in \mathcal{T}$ that contains the arc $(v,u)$, i.e., $(v,u)\in E\left(T_j\right) $, starting at $u\in V$.
\end{itemize}

\begin{definition}[Circular-Arborescence Routing~\cite{DBLP:journals/ton/ChiesaNMGMSS17}]\label{def: circular-arborescence}
	Given   a set of $k$ arc-disjoint arborescences $\mathcal{T}=\left\lbrace T_1, \ldots, T_{k}\right\rbrace $ of a graph $G$, a circular-arborescence routing defines a circular-ordering $\left\langle \mathcal{T}\right\rangle $ of $\mathcal{T}$, and for a packet originated at $v\in V$, it selects an arbitrary $T_i\in \left\langle \mathcal{T}\right\rangle$ (usually $T_i$ is the first one) to send the packet along $T_i$ from $v$ in canonical mode and when hitting a failure at a node $v_i$, it reroutes along the next available arborescence $T_j\in \mathcal{T}$ of $T_i$ based on   $\left\langle \mathcal{T}\right\rangle $ from $v_i$ in canonical mode, and so on if more failures are met until arriving at the destination.
\end{definition}

Chiesa et al.~\cite{DBLP:journals/ton/ChiesaNMGMSS17} show that there must exist a circular-arborescence routing, which is $\left(k-1 \right) $\nobreakdash-resilient against static failures in a $k$-connected graph with $k=2,3,4$. However, since the effectiveness of circular-arborescence routing is unapparent for $k\ge 6$, Chiesa et al.~\cite{DBLP:journals/ton/ChiesaNMGMSS17}  introduced a novel algorithmic toolkit, \emph{meta-graph}, to delve deeper into the connection between a fixed set of $k-1$ failures $F$ and arborescences of $\mathcal{T}$, which further enhances understanding of routing behaviors resulting from bouncing back on the reversed arborescences. 
It is worth mentioning that constructing the meta-graph is not required for computing routing functions; however, it serves as a helpful aid in constructing proofs.

\subparagraph*{Meta-graph.}
Given $k$ arc-disjoint arborescences $\mathcal{T}=\left\lbrace T_1, \ldots, T_{k}\right\rbrace $ of $G=\left(V,E \right) $, for a set of static failures $F\subset E$, where $\left| F\right| =f \leq  \left( k-1\right) $, Chiesa et al.~\cite{DBLP:journals/ton/ChiesaNMGMSS17} define a meta-graph $H_F=\left( V_F, E_F\right) $ as follows: each node $\mu_i\in V_F$, where $i\in \{1,\ldots, k\}$, represents an arborescence $T_i\in \mathcal{T}$; and for each failure $\{u,v\}\in F$, if $\left(u,v \right)\in E\left( T_i\right)  $ and $\left(v,u \right)\in E\left( T_j\right)  $, then there is an edge $\{\mu_i, \mu_j\}\in E_F$, and if either $\left(u,v \right)\in E\left( T_i\right)  $ or $\left(v,u \right)\in E\left( T_j\right)  $, then there is a self-loop edge at either $\mu_i$ or $\mu_j$ in $E_F$. We also note that $H_F$ might contain parallel edges and multiple connected components. Chiesa et al.~\cite[Lemma~1]{icalp16} shows that, for any  $F\subseteq E$ of $\left| F\right| \leq k-1$ static failures in a $k$-connected graph $G$, the corresponding meta-graph $H_F$ must contain at least $k-f$ trees. For dynamic failures $F$, we define $H_F$ as the \emph{maximum meta-graph} for dynamic failures $F$ by assuming that links in $F$ permanently and simultaneously fail. Then, at any time point, since dynamic failures in $F$ can be up or down arbitrarily, the \emph{real-time meta-graph} $H'_F\subseteq H_F$ induced by $F$ must be a subgraph of $H_F$, where an edge in $H_F$ can also occur in $H'_F$ arbitrarily. In the following,  
	since a subgraph $H'_F$ of $H_F$ does not impact our discussion, we also use $H_F$ to denote a real-time meta-graph implicitly.
By Lemma~\ref{lem: meta-graph}, we will show that $H_F$ contains at least one tree for dynamic failures $F$.

\begin{lemma}
	For a set of dynamic failures $F\subset E $, where $\left| F\right|=f \leq k-1$,  the set of connected components of meta-graph $H_F$ contains at least $k-f$ trees. \label{lem: meta-graph}
\end{lemma}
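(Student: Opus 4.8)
The plan is to reduce the claim to the corresponding statement for static failures, which the excerpt already records as Chiesa et al.~\cite[Lemma~1]{icalp16}. The crucial observation is that the maximum meta-graph $H_F$ for dynamic failures is, by its very definition, constructed under the assumption that all links in $F$ fail simultaneously and permanently; hence its node set (one node $\mu_i$ per arborescence $T_i$) and its edge/self-loop set (one contribution per failed edge $\{u,v\}\in F$, determined solely by which arborescences carry the arcs $(u,v)$ and $(v,u)$) coincide exactly with those of the static meta-graph built from the same edge set $F$ treated as a set of static failures. Since this construction depends only on the fixed arc-to-arborescence incidences and not on any temporal behaviour of the links, the two meta-graphs are structurally identical. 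Invoking Chiesa et al.~\cite[Lemma~1]{icalp16} for the static meta-graph of $f\le k-1$ failures in a $k$-connected graph then yields directly that $H_F$ has at least $k-f$ trees among its connected components.

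For completeness I would also give the short self-contained counting argument, since it makes the bound transparent and simultaneously covers the real-time meta-graph. The meta-graph has exactly $k$ vertices, and each of the $f$ failures contributes at most one edge or one self-loop, so its number of edges is $m\le f$. In any multigraph, a connected component that is a tree on $p$ vertices has exactly $p-1$ edges, whereas any non-tree component on $p$ vertices (one containing a cycle, a self-loop, or a pair of parallel edges) has at least $p$ edges. Writing $a$ for the number of tree components and summing edge counts over all components gives $m \ge k-a$, that is, $a \ge k-m \ge k-f$, which is exactly the claimed number of trees.

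Finally I would remark that the same inequality settles the real-time meta-graph $H'_F\subseteq H_F$: it shares the vertex set $V_F$ and has at most as many edges, $m'\le m\le f$, so $a'\ge k-m'\ge k-f$; deleting edges can only split components apart or destroy their cycles, never lowering the tree count below $k-f$. Thus the bound holds whether $H_F$ denotes the maximum or a real-time meta-graph. The only genuine step is the reduction in the first paragraph---recognising that the \emph{maximum meta-graph} definition freezes precisely the combinatorial data the static lemma requires---after which the counting is routine; I expect this reduction, rather than any calculation, to be where the care is needed.
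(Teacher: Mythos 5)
Your proposal is correct, and its backbone coincides with the paper's own argument: the paper likewise reduces to the static case by citing Chiesa et al.~\cite[Lemma~1]{icalp16} for the maximum meta-graph and then observes that the real-time meta-graph is a subgraph, so each static tree component still contributes at least one tree. What you add beyond the paper is the self-contained counting argument (each of the $f$ failures contributes at most one edge or self-loop because the arborescences are arc-disjoint, tree components on $p$ vertices have $p-1$ edges while non-tree components have at least $p$, hence $a \ge k-m \ge k-f$); the paper leaves this entirely to the citation. Your version is slightly stronger in presentation, since the counting applies verbatim to any real-time meta-graph $H'_F$ with $m' \le m \le f$ edges, making the subgraph-monotonicity remark a corollary rather than a separate step, whereas the paper's closing sentence (``there exists a tree $h'\subset h$ contained in $H_F$ when failures become dynamic'') is stated somewhat loosely and implicitly relies on the fact that components can only split, never merge, under edge deletion. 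No gap in either route; yours is just more explicit.
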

\begin{proof}
	Chiesa et al.~{\cite[Lemma~1]{icalp16}} gave a proof of Lemma~\ref{lem: meta-graph} for static failures. Recall that each edge in a meta-graph $H_F$ implies a link failure $e\in F$. Given a tree  $h\in H_F$ for static failures $F$, let $h'\subseteq h$ be a subgraph of $h$, where  edges of $h$ might arbitrarily occur in $h'$. Then, there exists a tree, denoted by $h'\subset h$, contained in $H_F$ when failures $F$ become dynamic/semi-dynamic.
\end{proof}

\subparagraph*{Good Arborescences.}
Given $\mathcal{T}=\left\lbrace T_1, \ldots, T_{k}\right\rbrace $ of $G=\left(V,E \right) $ and arbitrary $k-1$  static failures $F\subset E$, an  $T_i\in \mathcal{T}$ is called \emph{a good arboresence} if  from any node $v\in V$, routing a packet along $T_i$ in canonical mode will either reach the destination $t$ uninterruptedly or hit a failed arc $\left( u,v\right)\in E\left(T_i \right)  $ on $T_i$, s.t., bouncing back along the reversed arborescence $T_j\in \mathcal{T}$, where $\left( v,u\right) \in E\left( T_j\right) $, reaches  $t$ directly without hitting any more failure on $T_j$.
By Chiesa et al.~{\cite[Lemma~4]{icalp16}}, there is always a good arborescence, which is represented by a node $v\in V_F$ contained in a tree component in $H_F$, when $F$ are static. We will show that this conclusion can be extended  to dynamic failures $F$ by~Theorem~\ref{thm: good-arborescence}.

\subparagraph*{Well-Bouncing.} If a bouncing  from $T_i$ to $T_j$ on a failure $(u,v)$, where $(u,v)\in E\left(T_i \right) $ and $(v,u)\in E\left( T_j\right) $, will reach $t$ directly along $T_j$ without hitting any failure, then this bouncing is called well-bouncing. Clearly, bouncing on any failure of a good-arborescence is well--bouncing.

\begin{theorem}
	Given a set $\mathcal{T}$ of  $k$ arc-disjoint arborescences    of a $k$-connected graph $G=\left(V,E \right) $, for any set of $k-1$ \emph{dynamic} failures $F\subset E$, $\mathcal{T} $ contains at least one good arborescence. \label{thm: good-arborescence}
\end{theorem}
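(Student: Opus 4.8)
The plan is to reduce the dynamic case to the static good-arborescence result of Chiesa et al.~\cite{icalp16}, exploiting the fact that the maximum meta-graph is insensitive to which failures happen to be momentarily up. First I would invoke Lemma~\ref{lem: meta-graph}: since $f=|F|\le k-1$, the maximum meta-graph $H_F$ (built by treating every link of $F$ as permanently and simultaneously down) contains at least $k-f\ge 1$ tree components. Applying Chiesa et al.~\cite[Lemma~4]{icalp16} to this $H_F$, i.e.\ interpreting $F$ as a set of \emph{static} failures, yields an arborescence $T_i\in\mathcal{T}$ that is good in the static sense and is represented by a node of a tree component of $H_F$.

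The core of the argument is to show that this same $T_i$ remains good once $F$ is allowed to behave dynamically. The key observation is that in an arborescence every node $u\neq t$ has a unique outgoing arc toward the root, so if $(u,w)\in E(T_i)\cap F$, then $(u,w)$ is precisely the first arc traversed by a packet that begins canonical routing on $T_i$ at $u$. Since $T_i$ is statically good, the ``from any node $v$'' clause of the definition, applied with $v=u$, forces a packet starting at $u$ to bounce on $(u,w)$ onto the reverse arborescence $T_j$ with $(w,u)\in E(T_j)$ and then reach $t$ along $T_j$ without meeting any further failure of $F$; hence the $u$--$t$ path of $T_j$ avoids every edge of $F$. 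Repeating this for each failed arc of $T_i$ shows that \emph{every} failure arc of $T_i$ is a clean bounce point whose reverse $T_j$-path is failure-free under the full set $F$.

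It then remains to combine this per-arc guarantee with the dynamic model. By the model only links of $F$ can ever be down, so any arc a packet finds down while routing canonically on $T_i$ must be some $(u,w)\in E(T_i)\cap F$; the arc hit \emph{first} may be any such arc, depending on the current up/down pattern, which is exactly why the per-arc strengthening of the previous paragraph is required rather than merely the ``first failure along one fixed path'' version. Upon hitting the first down arc $(u,w)$, the packet bounces to $T_j$, and because the $u$--$t$ path of $T_j$ avoids all of $F$, none of its arcs can ever be down; the packet therefore reaches $t$ uninterruptedly, regardless of subsequent toggling. This establishes that $T_i$ is a good arborescence for the dynamic failures $F$, proving Theorem~\ref{thm: good-arborescence}.

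I expect the main obstacle to be making the second paragraph rigorous: one must argue that the static good-arborescence guarantee, although phrased in terms of the single first failure encountered along a canonical path, in fact certifies a clean bounce for each individual failed arc of $T_i$, and that it is this per-arc certificate, not the path-wise one, that survives arbitrary dynamic toggling. The remaining robustness is cheap, since reverse paths avoid $F$ entirely and thus consist only of stable links that are never down.
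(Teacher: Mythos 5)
Your proposal is correct and follows essentially the same route as the paper: obtain a statically good arborescence via the meta-graph (Lemma~\ref{lem: meta-graph} and the tree-component counting of Chiesa et al.), observe that goodness is a per-failed-arc guarantee (every bounce off $T_i$ lands on a reverse path of some $T_j$ that contains no edge of $F$), and conclude that under dynamic toggling whichever failed arc of $T_i$ is hit first still triggers a well-bouncing onto a path whose links are never down. The ``obstacle'' you flag is already resolved by the definition's ``from any node $v$'' quantifier exactly as you describe, and the paper makes the same per-arc observation explicitly.
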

\begin{proof}
	Chiesa et al.~\cite[Lemma~4]{icalp16} gave a proof of Lemma~\ref{lem: meta-graph} when $F$ are static failures. We will first introduce the proof by Chiesa et al.~\cite[Lemma~4]{icalp16}, and then extend their proof ideas to obtain a new proof for  dynamic failures.
	
	For static failures $F$, by the definition of meta-graph $H_F$,  each edge $\{\mu_i, \mu_j\}\in E_F$ can imply two possible occurrences of bouncing on a failure, i.e.,  one from $T_i$ to $T_j$ and one from $T_j$ to $T_i$ respectively.

	If $T\in \mathcal{T}$ and $T'\in \mathcal{T}$ share a failure $\{u,v\}$ and the arc $(v,u)\in E\left( T'\right) $ is the highest failure, i.e., no  failure on the directed path $u-t$ along $T'$, then a bouncing from $T\in \mathcal{T}$ to $T'\in \mathcal{T}$ on $\left( u,v\right) $  is \emph{well-bouncing} since a packet can arrive at the destination $t$ along $T'$ uninterruptedly after bouncing from $T$ to $T'$. Given a tree $h$ in $H_F$, each node in $h$ represents an arborescence $T_i\in \mathcal{T}$, which further implies that there exists a bouncing  to $T_i$ is well-bouncing since at least one failure on $T_i$ is the highest one.

	Thus, for a tree $h$ with $\left| E\left( h\right) \right|= \left| V\left( h\right) \right|-1$,  it implies that at most  $2\left| E\left( h\right) \right|- \left| V\left( h\right) \right| \leq  \left| V\left( h\right) \right|-2$ bouncing  are not well-bouncing. We note that each node in $h$ indicates a distinct arborescence $T\in \mathcal{T}$. Then, there must be one node  in $h$ representing an arborescence $T$, s.t., every bouncing from $T$ is well-bouncing, implying that $T$ is a good arborescence.
	
	Suppose that  $T$ is a good arborescence in a tree-component $h\subset H_F$ for static failures $F$. Now, when $F$ becomes dynamic, a failure $\left( u,v\right) \in E\left( T\right) $ on $T$ may disappear for a canonical routing along $T$, but once we hit a failure $\left( u,v\right)$ on $T$, which must imply a well-bouncing from $T$ to another  arborescence $T'$, as $\left(v,u \right) \in F$ must be the highest failure on $T'$. Therefore, $T$ is also a good arborescence for dynamic failures $F$.
\end{proof}
\subparagraph*{Dilemma of Good Arborescences.}We first note that  meta-graphs $H_F$  can be dissimilar for different failure sets, $F$. Then, any arborescence $T\in \mathcal{T}$ can become the unique good arborescence for a specific set $F$. Thus, finding the good arborescence needs a circular-arborescence routing  on a fixed order $\left\langle  \mathcal{T}\right\rangle $ of $ \mathcal{T}$, which is independent of $F$, s.t., each arborescence $T$ in $ \mathcal{T}$ can be visited. However, to check  whether  $T\in \mathcal{T}$ is a good arborescence, it needs to bounce from the current arborescence $T$ to an arbitrary arborescence $T'\in \mathcal{T}\setminus \{T\}$ when a canonical routing along $T$ hits a failure $e\in F$ that is also shared by $T'$, which means  leaving the fixed order $\left\langle  \mathcal{T}\right\rangle $ of $ \mathcal{T}$ but visiting a random  arborescence in $\mathcal{T}$ depending on $F$.

\end{document}